\newcounter{draft} 
\def\subsection{\@startsection {subsection}{2}{\z@}{.8ex
plus 1ex} {1ex}{\sc}}
\newenvironment{proof}{\vspace*{2ex}\noindent {\em Proof:}}{\hfill $\Box$ \\[1ex]}
\newenvironment{acknowledgements}{\\[2ex]\noindent {\it
    Acknowledgements:}}{\\[1ex]}
\newcommand{\new}{\newcommand}
\newcounter{letter}
\newenvironment{alist}{
\begin{list}{{\upshape(\alph{letter})}}{\usecounter{letter}}
}{\end{list}}
\new{\Nabla}{\bigtriangledown}
\new{\Ricci}{\mathrm{Ricci}}
\new{\Pfaff}{\mathrm{Pfaff}}
\new{\Hom}{\mathrm{Hom}}
\new{\MQ}{\mathrm{K}^{\mathrm{qm}}}
\new{\Path}{\mathrm{Path}}
\new{\kernel}{\mathrm{Kernel}}
\new{\boxtensor}{\boxtimes}
\new{\barpsi}{\overline{\psi}}
\new{\barrho}{\overline{\rho}}
\new{\iso}{\cong}
\new{\vol}{\operatorname{Vol}}
\new{\End}{\mathrm{End}}
\new{\grad}{\mathrm{grad}}
\new{\diverge}{\mathrm{div}}
\new{\tr}{\mathrm{tr}}
\new{\indx}{\mathrm{ind}}
\new{\str}{\mathrm{str}}
\new{\Str}{\mathrm{Str}}
\new{\scalar}{{\mathfrak{r}}}
\new{\kk}{{\mathfrak{k}}}
\new{\lop}{{\mathfrak{l}}}
\new{\defequals}{\stackrel{\mathrm{def}}{=}}
\new{\NN}{\mathbb{N}}
\new{\RR}{\mathbb{R}}
\new{\ZZ}{\mathbb{Z}}
\renewcommand{\vector}{\mathbf}
\renewcommand{\vec}{\vector}
\new{\ltilde}{\tilde{L}}
\new{\gap}{{\;}}
\new{\depends}[1]{{\scriptscriptstyle{#1\!\!}}}
\new{\triplegap}{{\;\;\;\;\;}}
\new{\doublegap}{{\;\;\;}}
\new{\FF}{{\mathcal{F}}}
\new{\Alg}{{\mathcal{A}}}
\new{\bundle}{{\mathcal{E}}}
\new{\QQ}{{\mathcal{Q}}}
\new{\CC}{{\mathcal{C}}}
\new{\OO}{{\mathcal{O}}}
\new{\EE}{{\mathcal{E}}}
\new{\D}{{\mathcal{D}}}
\new{\Lie}{{\mathcal{L}}}
\new{\dd}{{\mathfrak{d}}}
\new{\tensor}{\otimes}
\new{\bracket}[1]{\left\langle #1 \right \rangle}
\new{\parens}[1]{\!\left( #1 \right)}
\new{\sbrace}[1]{\!\left[ #1 \right]}
\new{\cbrace}[1]{\!\left\{ #1 \right\}}
\new{\abs}[1]{\left| #1 \r|}
\new{\pt}{\mathfrak{P}}
\new{\ptxystar}{\pt^{x*}_{y}}
\new{\ptyxstar}{\pt^{y*}_{x}}
\new{\bint}{\oint}
\new{\pfend}{\noindent \hfill $\Box$}
\new{\beas}{\begin{eqnarray*}}
\new{\eeas}{\end{eqnarray*}}
\renewcommand{\r}{\right}
\renewcommand{\l}{\left}
\new{\norm}[1]{\l|\!\l| #1 \r|\!\r|}
\new{\para}[2]{\subsection{#1. }
\label{par:#2} \ifnum \thedraft=1 \marginpar{\scriptsize{par:#2}} \fi}
\new{\be}{\begin{equation}  
}
\new{\ee}[1]{
\label{eq:#1}
\end{equation} \ifnum \thedraft=1 \marginpar{\scriptsize{\em{eq:#1}}}  
\fi \noindent%
}
\new{\eqa}[2]{\begin{align}
#2 \label{eq:#1}
\end{align}
\ifnum \thedraft=1 \marginpar{\scriptsize{\em{eq:#1}}}  \fi
}
\new{\eqalabel}[1]{
\label{eq:#1}
}
\new{\dlabel}[1]{\ifnum \thedraft=1 \marginpar{\scriptsize{\em{#1}}}  \fi}
\new{\labitem}[1]{%
\label{item:#1} \ifnum \thedraft=1 \marginpar{\scriptsize{\em{item:#1}}}  
\fi \noindent%
}
\numberwithin{equation}{section}
\newtheorem{remark}{Remark}[section]
\newtheorem{theorem}{Theorem}[section]
\newtheorem{proposition}{Proposition}[section]
\newtheorem{lemma}{Lemma}[section]
\newtheorem{corollary}{Corollary}[section]
\newtheorem{definition}{Definition}
\new{\phone}[1]{Phone: #1}
\new{\fax}[1]{Fax: #1}
\begin{document}

\setcounter{draft}{0} 

\nocite{Atiyah85} 
\nocite{DH82} 
\nocite{BP08}
\nocite{Getzler91}
\nocite{Rogers03}

\ifnum \thedraft=1 \today  \fi

\author{Dana S. Fine\footnote{University of Massachusetts Dartmouth, N. Dartmouth, MA
  02747} and Stephen F. Sawin\footnote{Fairfield University, Fairfield, CT 06824}}
\title
{A Rigorous Path Integral for $N=1$ Supersymmetic Quantum
Mechanics  on a Riemannian Manifold}



\maketitle
\begin{abstract}
  Following Feynman's prescription for constructing a path integral
  representation of the propagator of a quantum theory, a short-time
  approximation to the propagator for imaginary time, $N=1$
  supersymmetric quantum mechanics on a compact, even-dimensional
  Riemannian manifold is constructed.  The path integral is
  interpreted as the limit of products, determined by a partition of a
  finite time interval, of this approximate propagator.  The limit
  under refinements of the partition is shown to converge uniformly to
  the heat kernel for the Laplace-Beltrami operator on forms. A
  version of the steepest descent approximation to the path integral
  is obtained, and shown to give the expected short-time behavior of the
  supertrace of the heat kernel.

\end{abstract}

\section*{Introduction}
\subsection*{Motivation}
The  path integral has proven an extremely
powerful tool in quantum theory, and, surprisingly, has proven a
powerful heuristic tool in mathematics as well.  Adept use of path
integral reasoning has constructed elegant arguments for ostensibly purely
mathematical propositions whose proofs otherwise were unknown or
required deep and quite different arguments \cite{Witten89a,AJ90,Witten91,BT93,SW94,Witten94}. Blau~\cite{Blau93} reviews several
such arguments and their interpretation as infinite-dimensional
applications of the Matthai-Quillen formalism.
 
Among the simplest are the path integral ``proofs'' of various
versions of the Atiyah-Singer index theorem, including those for the
twisted Dirac and DeRham complexes, the latter of these giving both
the Gauss-Bonnet-Chern (GBC) theorem and the Hirzebruch signature
theorem. These heuristic arguments are due independently to
Alvarez-Gaum\'e \cite{Alvarez83} and Friedan and Windey \cite{FW84},
based on an approach suggested by Witten \cite{Witten82a,Witten82b}.
Focusing on the DeRham complex for simplicity, the argument begins
with a heuristic identifying the path integral for imaginary-time
$N=1$ supersymmetric quantum mechanics (SUSYQM), with the supertrace
of the heat kernel of the Laplace-de Rham operator on forms. An
argument of McKean and Singer \cite{MS67} shows this supertrace is
equal to the signed sum of the Betti numbers, independent of the time
$t.$ On the other hand, the steepest descent approximation, applied
formally to this path integral, equates the small-$t$ limit of the
integral 
to the Pfaffian of curvature, recovering the GBC theorem.

This is arguably the simplest use
of path integral reasoning to arrive at a clearly nontrivial, purely
mathematical conclusion.  As such it is an important benchmark for a
rigorous interpretation of the path integral, in the sense that a
rigorous interpretation of a heuristic reasoning tool ought to offer a
reasonably direct way to translate the heuristic arguments into rigorous
proofs.  In particular, a construction of the path integral for the
supersymmetric propagator that is robust enough to turn the path
integral argument for the GBC theorem into a real proof should represent
an incremental step towards understanding similar arguments
rigorously.  The construction, and the proof of its properties, are the
focus of this paper.

\subsection*{Related results}
The authors' previous work~\cite{FS08} provides a first step by
rigorously constructing the path integral over paths and showing it
agrees pointwise with the heat kernel. The improvement described in
the present paper, which starts with a slightly modified approximate
kernel, is to obtain estimates on the small-$t$ behavior of the path
integral sharp enough to prove that the usual steepest-descent heuristic
 in fact gives the correct short-time approximation to the path integral. 

There is other closely-related work in the literature.  Bismut
\cite{Bismut84a,Bismut84b} uses stochastic techniques with the heat
equation to give a proof of the index theorem in the spirit of the
physics argument. Getzler~\cite{Getzler86a,Getzler86b}, who like
Bismut does not directly construct path integrals, gives an index
theorem proof influenced by these arguments.  Rogers \cite{Rogers87}
uses stochastic techniques to construct an explicit supersymmetric
path integral for the heat kernel on $\RR^n$ with a Riemannian metric which
is Euclidean outside of a bounded region. This suffices to reproduce
the path integral proof of the GBC theorem for arbitrary compact
manifolds, since the argument only depends on the short-time behavior
of the restriction of the heat kernel to the diagonal.  In
\cite{Rogers92a,Rogers92b}, she extends these techniques to prove the
twisted Hirzebruch index theorem, from which follows the full index
theorem.  Andersson and Driver \cite{AD99}, use stochastic techniques
to construct a version of the bosonic path integral on curved space.  The
innovation in the present paper is to make rigorous Feynman's original
time-slicing procedure in constructing the supersymmetric path
integral for the heat kernel on an arbitrary compact Riemannian
manifold, and to recover, from this construction, its short-time
approximation. As noted above, this gives a proof of the GBC theorem.

          \subsection*{Technical Introduction}
The principal result of this paper is Theorem~\ref{th:kinf} which shows
that the fine-partition limit of products of the approximate kernel $K$ defined in
Eq.~\eqref{eq:K-def} converges to the heat
kernel. Arguably the fine-partition limit captures the essence of the
Feynman path integral while  $K$  captures the small-$t$
approximation of the integrand in the path integral representation of
the SUSY QM propagator for imaginary time, so Theorem~\ref{th:kinf} may be
viewed as a rigorous construction of the path integral for the
imaginary-time propagator.  Moreover, Eq.~\eqref{eq:kinf-est} in its
special case 
Eq.~\eqref{eq:kinf-kest}, which relates the large-partition limit
explicitly to the approximate kernel $K$, provides a precise statement of the
steepest-descent approximation to the path integral.  The GBC
 Theorem 
(Thm.~\ref{th:gbc}) is an immediate consequence of this approximation,
because the error term in Eq.~\eqref{eq:kinf-kest}, expressing the
difference between the path integral and its steepest-descent
approximation, is bounded in a norm 
calibrated to send the supertrace of this error term to zero.

The norm is somewhat complicated because it serves
two masters:  On the one hand, the error terms in the
large partition limit of $K$ must be bounded in this norm, which means
it must be designed so that, 
with errors measured in this norm, $K$ almost
satisfies the heat equation, the norm is almost
multiplicative under $*$, and $K$ is almost a contraction,  as
expressed in Eqs.~\eqref{eq:heat-est2}, 
\eqref{eq:l*l}, and 
\eqref{eq:k*l}, respectively.
  On the other hand, as above, it must ensure
that the supertrace of the error terms in Eq.~\eqref{eq:kinf-kest}
must go to zero 
in order for Theorem~\ref{th:gbc} to hold.  More explicilty, 
Eq.~\eqref{eq:heat-est2} requires a bound on error terms of the form
of polynomials times Gaussians, which become more sharply peaked as
$t$ approaches $0$. The norm compares these terms
  near the diagonal to Gaussians and away from the
diagonal to
powers of $t$ (in fact $t^1$ suffices) times a constant.  This would
completely suffice were it not for the second 
consideration; in fact, Eqs.~\eqref{eq:k*l}
and~\eqref{eq:l*l} could be strengthened and simplified.   However,
to get
the behavior under the supertrace the heuristics suggest
not only must the error terms be
bounded by $t$ to a power greater than $1,$ but the norm must take into
account that these errors are forms; ultimately,  the top degree piece
of the error
 must be bounded by a power of $t$ greater than
$n/2.$  This would suggest choosing the norm so that a multiform of
highest degree $k$ and constant in $t$
has norm $t^{-1/2}$ to the power $\max(0,
k-2).$  Such a norm still works  for Eqs.~\eqref{eq:heat-est2}
and~\eqref{eq:l*l}, but, unfortunately, just barely fails for
Eq.~\eqref{eq:k*l}.  So  amend the norm again and instead of $t^{-1/2}$
use $t^{-1/2+\epsilon/n}$  for a sufficiently small $\epsilon.$ This somewhat unintuitive choice
satisfies all the conditions.
\subsection*{Outline}
The structure of the paper is as follows.  Section 1 reviews standard
results from differential geometry and Riemann normal coordinates, in
particular describing approximate solutions to generalized heat
equations.  This will all be used towards the proof of
Eq.~\eqref{eq:heat-est2}.  Section~\ref{grassman} reviews standard facts and
notation of Grassman variables, and defines the norm
used in the proofs of Eq.~\eqref{eq:heat-est2} and of
Eqs.~\eqref{eq:k*l} and~\eqref{eq:l*l} (as well as of
Theorem~\ref{th:gbc}).  Section 3 proves Eq.~\eqref{eq:heat-est},
from which Eq.~\eqref{eq:heat-est2} follows easily.  Sections 4 and 5
prove Eqs.~\eqref{eq:k*l} and~\eqref{eq:l*l}, as well as
Eq.~\eqref{eq:kk-est}, which follows from Eq.~\eqref{eq:heat-est}.
Sections 6 and 7 use only Eqs.~\eqref{eq:k*l}, \eqref{eq:l*l},
and~\eqref{eq:kk-est} to prove Theorem~\ref{th:kinf} showing that the
large partition limit converges to the propagator and
providing bounds on the error.  Finally Section 8 checks that this
construction of the path integral, with the given version of the
steepest descent approximation indeed suffices to obtain the GBC
 Theorem.

\section{Riemann Normal Coordinates and Kernels}
 
The following material on Riemann normal coordinates is
standard\footnote{A readily-accessible version is at \\
  http://users.monash.edu.au/~leo/research/papers/files/lcb96-01.pdf}
and follows straightforwardly
from formulae in~\cite{BGV04}.

Lemma~\ref{lm:dsquared} is
used only in the proof of Proposition~\ref{lm:h-bd} and is
elementary.  Lemma~\ref{lm:approx-heat} gives approximate fundamental
solutions for some generalized heat equations in Riemann normal
coordinates;  it is used in the proof of Proposition~\ref{pr:heat-est}
and is purely computational.

             \subsection{Riemann Normal Coordinates}\label{ss:rnc}

Let $M$ be a fixed compact oriented Riemannian manifold of even
dimension $n.$ Since $M$ is compact there is an injectivity radius such
that for all $y\in M$ the exponential map $\exp_y\colon T_yM \to M$ is
invertible on points within  the injectivity radius of $y$ (with
respect to the Riemann distance).  Refer to the inverse image of
$x\in M$ within the injectivity radius as $\vec{x}_y \in T_yM$ (if the
subscript is obvious omit it).  A choice of orthonormal coordinates
$(\partial_i)_{i=1}^n$ on $T_yM$ then gives coordinates on a patch
around $M$ which are called Riemann normal coordinates. In such coordinates
express the metric
$g_{ij}\parens{\vec{x}}= \parens{\partial_i, \partial_j}_x,$ its
  inverse $g^{ij}\parens{\vec{x}},$ the Levi Civita connection $
  \Gamma_{ij}^k\parens{\vec{x}}=g^{kl}\parens{\partial_l,
    \Nabla_{\partial_i} \partial_j}_x$
and the Levi
  Civita parallel transport from $y$ to $x$ along the unique geodesic
  connecting them $\pt_{ij}\parens{\vec{x}} = \parens{\partial_i,
    \pt^y_x \partial_j}_x$,
 in terms of the curvature
 at the origin $y$
  by the following formulas
 \begin{align}
g_{ij}\parens{\vec{x}}&= \delta_{ij} + \frac{1}{3} R_{ikjl}(0) x^k x^l
+ \OO\parens{\abs{\vec{x}}^3} \label{eq:g-rnest}\\
g^{ij}\parens{\vec{x}}&= \delta^{ij} - \frac{1}{3} \delta^{ii'}R_{ki'l}^{\,\,\,\,\,\,\,j} (0) x^k x^l
+ \OO\parens{\abs{\vec{x}}^3} \label{eq:ginv-rnest}\\
\Gamma_{ij}^k\parens{\vec{x}}&= - \frac{1}{3}
\sbrace{R_{ilj}^{\,\,\,\,k}(0) + R_{jli}^{\,\,\,\,k}(0)}x^l
+ \OO\parens{\abs{\vec{x}}^2} \label{eq:gamma-rnest}\\
\pt_{ij}\parens{\vec{x}}&= \delta_{ij} +\frac{1}{6} R_{ikjl}(0) x^k x^l
+ \OO\parens{\abs{\vec{x}}^3} \label{eq:pt-rnest}
 \end{align}
where $\OO\parens{\abs{\vec{x}}^3}$ refers to an unnamed function which is
bounded by some constant multiple of $\abs{\vec{x}}^3.$ Here the components of the Riemann curvature at $x$ satisfy $R\sbrace{\partial_i, \partial_j} \partial_k =
R_{ijk}^{\,\,\,\,\,\,l} \partial_l\parens{\vec{x}}$ and $R_{ijkl}\parens{\vec{x}}= \parens{\partial_l,
    R\sbrace{\partial_i, \partial_j} \partial_k}_x$.

The  Ricci and scalar curvatures are
\begin{align}
\Ricci_{ij}\parens{\vec{x}}&=
g^{kl}\parens{\vec{x}}R_{kijl}\parens{\vec{x}}\label{eq:ricci-def}\\
\scalar\parens{\vec{x}}&=
g^{ij}\parens{\vec{x}}\Ricci_{ij}\parens{\vec{x}}\label{eq:scalar-def}
\end{align}

\begin{lemma} \label{lm:dsquared}
If $\vec{x}$ and $\vec{y}$ are Riemann normal coordinate expressions
about some point
for $x$ and $y$ which 
are close enough so the distance between them $d(x,y)$ is defined, then
\be
d(x,y)^2 = \abs{\vec{x}-\vec{y}}^2 + \OO\parens{\abs{\vec{x}}^2
  \abs{\vec{y}^2}}.
\ee{dist-est}
\end{lemma}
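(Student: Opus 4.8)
The plan is to compare the two squared distances as a single smooth function of the two coordinate vectors, and to read off its order of vanishing from a symmetry, two boundary conditions, and one first-variation computation. Write $p$ for the centre of the normal-coordinate chart, so that $\vec{x}=\exp_p^{-1}(x)$ and $\vec{y}=\exp_p^{-1}(y)$, and set
\[
G(\vec{x},\vec{y})\defequals d(x,y)^2-\abs{\vec{x}-\vec{y}}^2,
\]
regarded as a smooth function on pairs of points within the injectivity radius. Two features are immediate. First, $G$ is symmetric under interchange of its arguments. Second, since radial geodesics through $p$ are parametrized by arclength, $d(p,x)=\abs{\vec{x}}$, whence $G(\vec{x},0)=G(0,\vec{y})=0$. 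Consequently every monomial in the Taylor expansion of $G$ already carries at least one factor from $\vec{x}$ and at least one from $\vec{y}$; the whole content of the lemma is the sharper claim that it carries at least two of each.

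The next step, and the one genuinely geometric input, is to kill the terms that are linear in a single variable. Fixing $y$, the standard first-variation formula gives $\grad_x \tfrac12 d(x,y)^2=-\exp_x^{-1}(y)$. Evaluating at $x=p$, where $\exp_p^{-1}(y)=\vec{y}$ and the metric is $\delta_{ij}$ by Eq.~\eqref{eq:g-rnest}, yields $\partial_{x^i}d(x,y)^2\big|_{\vec{x}=0}=-2y^i$. Since also $\partial_{x^i}\abs{\vec{x}-\vec{y}}^2\big|_{\vec{x}=0}=-2y^i$, these cancel, so $\partial_{x^i}G(0,\vec{y})=0$ for every $\vec{y}$; the symmetry of $G$ then gives $\partial_{y^k}G(\vec{x},0)=0$ for every $\vec{x}$.

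Finally I would convert these four vanishing statements into the quantitative estimate by applying Taylor's theorem with integral remainder twice. Because $G$ and $\partial_{x^i}G$ vanish at $\vec{x}=0$, I can write $G(\vec{x},\vec{y})=\sum_{i,j}x^i x^j A_{ij}(\vec{x},\vec{y})$ with each $A_{ij}$ smooth and expressed as an integral of $\partial_{x^i}\partial_{x^j}G$. The remaining identities $G(\vec{x},0)=0$ and $\partial_{y^k}G(\vec{x},0)=0$ then force $A_{ij}(\vec{x},0)=0$ and $\partial_{y^k}A_{ij}(\vec{x},0)=0$, so a second application of Taylor's theorem, now in $\vec{y}$, gives $A_{ij}(\vec{x},\vec{y})=\sum_{k,l}y^k y^l B_{ijkl}(\vec{x},\vec{y})$ with $B_{ijkl}$ smooth. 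This produces $\abs{G}\le C\,\abs{\vec{x}}^2\abs{\vec{y}}^2$, which is precisely $\OO\parens{\abs{\vec{x}}^2\abs{\vec{y}}^2}$. The only point requiring attention is the uniformity of $C$: it is controlled by the supremum of the fourth-order derivatives of $G$ over all admissible pairs, which is finite and independent of the centre $p$ by compactness of $M$. I expect this last bookkeeping to be entirely routine, with the first-variation cancellation of the linear terms being the main step.
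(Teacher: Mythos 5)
Your proof is correct and follows essentially the same route as the paper's: both define the difference function, show it vanishes together with its first derivatives on the slices $\vec{x}=0$ and $\vec{y}=0$, and then apply Taylor's theorem twice (once in each variable) with compactness supplying a uniform constant. The only cosmetic difference is that you invoke the first-variation formula $\grad_x \tfrac12 d(x,y)^2=-\exp_x^{-1}(y)$ where the paper cites radial exactness of the exponential map plus Gauss's lemma --- the same geometric fact in different packaging.
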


\begin{proof}
Consider the difference $f(\vec{x}, \vec{y}) = d(x,y)^2 -
\abs{\vec{x}-\vec{y}}^2$ as a function of $\vec{x}$ for fixed
$\vec{y}$. It is smooth,  and, since the exponential preserves length
along geodesics through the origin, $f(0,\vec{y})=0,$ and  the
directional derivative 
of $f$ at $\vec{x}=0$ in the direction $\vec{y}$ is zero as well.
Gauss's lemma ensures the directional derivative at $\vec{x}=0$  in directions
perpendicular to $\vec{y}$ are also $0.$  So $\partial
f/ \partial \vec{x} =0$ at $\vec{x}=0.$ Since this is true for all
$\vec{y},$ $\partial^2 f/\partial \vec{y}^2$ and $\partial^3
f/ \partial \vec{x} \partial \vec{y}^2=0$ at $\vec{x}=0,$ as well, so
$\partial^2 f/\partial \vec{y}^2$ is bounded by some $c\abs{\vec{x}}^2$
for all $x$ and $y$ by Taylor's theorem.  By symmetry, $f$ and
$\partial f / \partial \vec{y}$ are zero when $\vec{y}=0,$ so
integrating $\partial^2 f / \partial \vec{y}^2$ twice against $y$
gives the result.
\end{proof}
             \subsection{Heat Equations and Approximate Kernels}
Of particular interest are approximate solutions to various versions of
the heat equation in Riemann normal coordinates.

\begin{lemma} \label{lm:approx-heat}
In a neighborhood of $0$ in Riemann normal coordinates if
\[\Delta_0= \frac{1}{2}g^{ij}\parens{\partial_i \partial_j -
  \Gamma_{ij}^k \partial_k}\]
and 
\[ K_0\parens{\vec{x};t}= (2\pi t)^{-n/2} e^{ - \frac{\delta_{jk}x^j
    x^k}{2t} + \frac{1}{12} \Ricci_{kl} x^k x^l + \frac{t \scalar}{12}}\]
then 
\be
\frac{\partial K_0}{\partial t} - \Delta_0 K_0 = \OO\parens{t
  + \abs{\vec{x}} + \abs{\vec{x}}^3/t + \abs{\vec{x}}^5/t^2}K_0.
\ee{heat0-est}
Likewise if $f\parens{\vec{x}}$ is twice differentiable with $f$ and all its first and
second derivatives bounded by $C$ and
\[\Delta_1 = \Delta_0 + f\parens{\vec{x}}\]
and
\[K_1\parens{\vec{x};t} = K_0\parens{\vec{x};t}
e^{\frac{t}{2}\sbrace{f\parens{\vec{x}}+f\parens{0}}}\]
then 
\be
\frac{\partial K_1}{\partial t} - \Delta_1 K_1 = \OO\parens{t
  + \abs{\vec{x}} + \abs{\vec{x}}^3/t + \abs{\vec{x}}^5/t^2 + Ct +
  C\abs{\vec{x}}^2 + C^2 t^2}.
\ee{heat1-est}

Finally if $h_k^l$ is for $0 \leq k,l \leq n$ a collection of
functions of $\vec{x}$ 
all bounded by $D,$ then defining
\[\Delta_2 = \Delta_1 + h_k^l x^k\partial_l + h_k^k/2 \]
and
\[K_2\parens{\vec{x};t}= K_1\parens{\vec{x};t} e^{- h_k^l \delta_{ll'}x^k
x^{l'}/2}\]
then 
\eqa{heat2-est}{
\frac{\partial K_2}{\partial t} - \Delta_2 K_2 &= \OO\Big(t
  + \abs{\vec{x}} + \abs{\vec{x}}^3/t + \abs{\vec{x}}^5/t^2 + Ct +
  C\abs{\vec{x}}^2 \nonumber \\
&\qquad + C^2 t^2 + D\abs{x} + D\abs{\vec{x}}^3/t + CD\abs{x} t
  + D^2\abs{x}^2 \Big) K_2.
}

\end{lemma}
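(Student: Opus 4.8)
The plan is to establish all three estimates by direct Taylor expansion in Riemann normal coordinates, organizing every error by its power of $t^{-1}$ and using the fact that each successive factor appended to $K_0,K_1,K_2$ is designed to cancel the most singular error produced by the corresponding new term in the operator; throughout I would bound the \emph{relative} error $\parens{\partial_t K-\Delta K}/K$, which is what the stated estimates control. For \eqref{eq:heat0-est} I would write $K_0=(2\pi t)^{-n/2}e^{\phi}$ with $\phi=-\delta_{jk}x^jx^k/(2t)+\Ricci_{kl}x^kx^l/12+t\scalar/12$, so that $\parens{\partial_t K_0-\Delta_0 K_0}/K_0=\partial_t\phi-\tfrac12 g^{ij}\parens{\partial_i\partial_j\phi+\partial_i\phi\,\partial_j\phi-\Gamma_{ij}^k\partial_k\phi}$. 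Since $\partial_i\phi=-\delta_{ij}x^j/t+\OO(\abs{\vec{x}})$, each derivative contributes a factor of order $\abs{\vec{x}}/t$. The most singular pieces, of order $t^{-1}$ and $\abs{\vec{x}}^2 t^{-2}$, come from the flat operator $\tfrac12\delta^{ij}\partial_i\partial_j$ on the Gaussian and match the corresponding terms of $\partial_t\phi$ exactly; the constant $\scalar/12$ produced by the Euclidean Laplacian of the Ricci term cancels $\partial_t(t\scalar/12)$; and the order-$\abs{\vec{x}}^2 t^{-1}$ terms coming from the Ricci cross term in $\partial_i\phi\,\partial_j\phi$, from the correction $g^{ij}-\delta^{ij}$ of \eqref{eq:ginv-rnest}, and from the Christoffel term \eqref{eq:gamma-rnest} cancel among themselves --- this is precisely what fixes the coefficient $1/12$. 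The would-be order-$\abs{\vec{x}}^4 t^{-2}$ term vanishes by the antisymmetry of the Riemann tensor contracted with $x^ix^j$, so the $\OO(\abs{\vec{x}}^3)$ remainders of \eqref{eq:ginv-rnest} and \eqref{eq:pt-rnest} and the $\OO(\abs{\vec{x}}^2)$ remainder of \eqref{eq:gamma-rnest}, multiplied by the Gaussian factors, give exactly the orders $\abs{\vec{x}}^3/t$ and $\abs{\vec{x}}^5/t^2$, the remaining $t^0$ terms being collected into $\OO(t+\abs{\vec{x}})$.

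For \eqref{eq:heat1-est} I would set $K_1=K_0 e^{\psi}$ with $\psi=\tfrac{t}{2}\sbrace{f(\vec{x})+f(0)}$ and use the second-order product rule $\Delta_0(K_0 e^\psi)=(\Delta_0 K_0)e^\psi+K_0\,\Delta_0 e^\psi+g^{ij}(\partial_i K_0)(\partial_j e^\psi)$. As $\partial_t\psi=\tfrac12\sbrace{f(\vec{x})+f(0)}$, the multiplication term $f$ of $\Delta_1$ is matched up to $\tfrac12\sbrace{f(0)-f(\vec{x})}=-\tfrac12 x^i\partial_i f(0)+\OO(C\abs{\vec{x}}^2)$, and the essential point is that this surviving linear term is cancelled exactly by the cross term $g^{ij}(\partial_i K_0)(\partial_j e^\psi)$, whose leading part is $(-x^i/t)(\tfrac{t}{2}\partial_i f)K_1=-\tfrac12 x^i\partial_i f\,K_1$; this is why the symmetric average of $f$ at $\vec{x}$ and at $0$ is the correct exponent. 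The leftovers are the $\OO(C\abs{\vec{x}}^2)$ Taylor remainder of $f$, the $\OO(Ct+C^2t^2)$ coming from $\Delta_0 e^\psi$ through $\partial_i\partial_j\psi$ and $\partial_i\psi\,\partial_j\psi$, and the inherited $e^\psi\parens{\partial_t K_0-\Delta_0 K_0}$ from \eqref{eq:heat0-est}, with mixed terms $C\abs{\vec{x}}t$ dominated by $Ct$.

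For \eqref{eq:heat2-est} I would set $K_2=K_1 e^{\chi}$ with $\chi=-\tfrac12 h_k^l\delta_{ll'}x^kx^{l'}$ and expand $\Delta_2 K_2=\Delta_1(K_1 e^\chi)+h_k^l x^k\partial_l(K_1 e^\chi)+\tfrac12 h_k^k K_1 e^\chi$ by the same rule. Two further cancellations do the work: the drift $h_k^l x^k\partial_l$ hitting the Gaussian gives $t^{-1}h_k^l\delta_{lm}x^kx^m K_2$ at leading order, cancelled exactly by $-g^{ij}(\partial_i K_1)(\partial_j e^\chi)$ --- the purpose of the quadratic exponent $\chi$ --- while $\Delta_0 e^\chi$ produces $-\tfrac12 h_k^k K_2$ from $\tfrac12\delta^{ij}\partial_i\partial_j\chi$, cancelled by the $\tfrac12 h_k^k$ added to $\Delta_2$. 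The surviving errors are $\OO(D^2\abs{\vec{x}}^2)$ from $\partial_i\chi\,\partial_j\chi$ and $h_k^l x^k\partial_l\chi$, $\OO(D\abs{\vec{x}})$ from the Christoffel and curvature corrections, $\OO(D\abs{\vec{x}}^3/t)$ from the metric and $\partial h$ corrections paired with the Gaussian, $\OO(CD\abs{\vec{x}}t)$ from the $\partial_i\psi$ corrections, and the inherited $e^\chi\parens{\partial_t K_1-\Delta_1 K_1}$ from \eqref{eq:heat1-est}; here I would use that $h_k^l$ and its first two derivatives are bounded by $D$, as holds in the intended application, so the $\partial h$ and $\partial^2 h$ contributions also fall under the listed orders.

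The main obstacle, already present for \eqref{eq:heat0-est} and inherited by the other two parts, is the bookkeeping needed to verify the designed cancellations of the order-$t^{-1}$, $\abs{\vec{x}}^2 t^{-1}$ and $\abs{\vec{x}}^4 t^{-2}$ contributions: each is individually as singular as, or more singular than, the quantities to be bounded, so the whole force of the lemma lies in showing, via the explicit expansions \eqref{eq:ginv-rnest}--\eqref{eq:pt-rnest} and the symmetries of the curvature tensor, that they cancel down to the stated remainders. Once these cancellations are secured the remaining estimation is routine.
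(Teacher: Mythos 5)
Your proposal is correct and takes essentially the same approach as the paper: a direct Taylor expansion in Riemann normal coordinates using \eqref{eq:ginv-rnest}--\eqref{eq:pt-rnest}, built around the same three cancellation mechanisms (the $\abs{\vec{x}}^2/t$ terms cancelling so as to fix the coefficient $1/12$, the cross term $g^{ij}\parens{\partial_i K_0}\parens{\partial_j e^{\psi}}$ matching the Taylor expansion of $\frac12\sbrace{f(0)+f\parens{\vec{x}}}$, and the drift and $h_k^k/2$ terms cancelling against derivatives of $e^{\chi}$). The only differences are cosmetic: you organize the computation through the relative error (note your displayed formula omits the $-n/2t$ arising from the prefactor $(2\pi t)^{-n/2}$, which you evidently account for when matching the $t^{-1}$ terms), whereas the paper differentiates the kernels directly, and you state explicitly the bound on derivatives of $h_k^l$ that the paper's proof uses only implicitly.
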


\begin{proof}
Here, and in the remainder of this proof, write $x_j$ for
$\delta_{jj'}x^{j'}.$  For Eq.~\eqref{eq:heat0-est}, observe that 
\be
\frac{\partial K_0}{\partial t} = \sbrace{-\frac{n}{2t} +
  \frac{ x^j
  x_j}{2t^2} + \frac{\scalar}{12}} K_0
\ee{k0dt}
and 
\be
\partial_i K_0 = \frac{\partial K_0}{\partial x^i} =
\sbrace{-\frac{x_i}{t} +
  \frac{\Ricci_{ik}x^k}{6} + \OO\parens{t + \abs{\vec{x}}^2}}K_0.
\ee{partialk0}

Therefore (implicitly symmetrizing both sides, since this term
occurs in $\Delta$ only symmetrically)
\begin{align*}
&\partial_i \partial_j K_0= \Bigg[-\frac{\delta_{ij}}{t} + \frac{\Ricci_{ij}}{6}+
   \frac{x_{i} x_{j}}{t^2} \\
& \qquad - \frac{\Ricci_{jk}x^k
    x_{i}}{3t} + 
  \OO\parens{t + \abs{\vec{x}} + \abs{\vec{x}}^3/t}\Bigg]K_0
\end{align*}
and
\[\Gamma_{ij}^k \partial_k K_0=
\sbrace{-\Gamma_{ij}^k\frac{x_{k}}{t} +
  \OO\parens{\abs{\vec{x}}}} K_0.\]
Using Eqs.~\eqref{eq:gamma-rnest} and~\eqref{eq:ginv-rnest} to expand
the metric and Christoffel symbol in Riemann normal coordinates, and dropping low enough order terms
\eqa{delta-k0}{
&\Delta_0 K_0 = \Bigg[-\frac{g^{ij}\delta_{ij}}{2t} +
\frac{g^{ij}\Ricci_{ji}}{12}+ \frac{ g^{ij}
    x_{i} x_{j}}{2t^2} -
  \frac{g^{ij} \Ricci_{jk} x^k x_{i}}{6t} \nonumber\\
&\qquad  + \frac{g^{ij}\Gamma_{ij}^k x_{k}}{2t} + \OO\parens{t + \abs{\vec{x}}+\abs{\vec{x}}^3/t}\Bigg]K_0 \nonumber\\
&= \Bigg[-\frac{n}{2t} + \frac{ R_{kil}^{\,\,\,\,\,\,i} x^k x^l}{6t} + \frac{\scalar}{12} + \frac{  x^k x_{k}}{2t^2} -
  \frac{\Ricci_{kl} x^k x^l}{6t} + \frac{\delta^{ii'}\Gamma_{ii'}^k x_k}{2t}\nonumber\\
& \qquad + \OO\parens{t
    + \abs{\vec{x}}+ \abs{\vec{x}}^3/t + \abs{\vec{x}}^5/t^2}\Bigg]K_0 \nonumber\\
&= \sbrace{-\frac{n}{2t} + \frac{ x^k x_k}{2t^2} + \frac{\scalar}{12} + \OO\parens{t
    + \abs{\vec{x}}+\abs{\vec{x}}^3/t+ \abs{\vec{x}}^5/t^2}}K_0
}
where we have used
Eqs.~~\eqref{eq:gamma-rnest}
and~\eqref{eq:ricci-def} in the last line.  Subtracting
Eqs.~\eqref{eq:k0dt} and~\eqref{eq:delta-k0} gives the result.  

For Eq.~\eqref{eq:heat1-est}, notice $\partial K_1/\partial t$ is
given by all the terms in Eq.~\eqref{eq:k0dt} multiplied by $K_1$
instead of $K_0$ plus in addition
\[ \frac{f(0)+f\parens{\vec{x}}}{2} K_1.\]
Also 
\[\partial_i e^{\frac{t}{2}\sbrace{f(0)+f\parens{\vec{x}}}}
=\frac{t}{2}\frac{\partial f\parens{\vec{x}}}{\partial x^i} e^{\frac{t}{2}\sbrace{f(0)+f\parens{\vec{x}}}}
\] 
so estimating $\partial_i \partial_j
e^{\frac{t}{2}\sbrace{f(0)+f\parens{\vec{x}}}}$ as $\OO\parens{Ct+ C^2t^2}$
and estimating $\partial_i K_0 \partial_j
e^{\frac{t}{2}\sbrace{f(0)+f\parens{\vec{x}}}}$  with the help of
Eq.~\eqref{eq:partialk0} we see $\Delta_1 K_1$ contains all the terms in Eq.~\eqref{eq:delta-k0}
(times $K_1$ instead of $K_0$) plus
\[ \sbrace{f\parens{\vec{x}} - \frac{x^{i}}{2}\frac{ \partial
      f\parens{\vec{x}}}{\partial x^i}  + \OO\parens{ 
    tC + t^2C^2}   } K_1.\]
Noticing that 
\[\frac{f(0)+f\parens{\vec{x}}}{2} = f\parens{\vec{x}} - \frac{x^{i}}{2}\frac{ \partial
      f\parens{\vec{x}}}{\partial x^i}  +
    \OO\parens{C\abs{\vec{x}}^2}\]
by Taylor's Theorem gives the result. 

For Eq.~\eqref{eq:heat2-est}, there are no additional terms to
$\partial K_2/\partial t$ beyond those in $\partial K_1/\partial t$
except multiplied by $K_2$ instead of $K_1.$ The additional term in
$\partial_j K_2$ is (writing $h_{jk}$ for $\delta_{jj'} h^{j'}_k$)
\[\parens{-\frac{1}{2}\sbrace{h_{jk} + h_{kj}} x^k + D\abs{\vec{x}}^2 }K_2\]
So the additional terms in $\partial_i \partial_j K_2$ are
(symmetrizing on $i$ and $j$) 
\[
\sbrace{\frac{h_{kj}  x^k x_{i} + h_{jk}
     x^{k} x_{i} }{t} - h_{ij}  +
  \OO\parens{D\abs{\vec{x}} + CDt\abs{\vec{x}}^2 + D^2 \abs{\vec{x}}^4}
} K_2 
\]
which means that 
the additional terms in $\Delta_2 K_2$ consist of
\begin{align*}
&\Bigg[\frac{g^{ij} h_{jk}   x^k x_{i} }{t}  -
  \frac{1}{2}g^{ij}h_{ij} -
  h_{kl}  x^{k} \frac{x^l}{t} + h_k^k/2\\
&\qquad  + 
  \OO\parens{D\abs{\vec{x}} +D \abs{\vec{x}}^3/t+ CD\abs{\vec{x}}^2t + D^2\abs{\vec{x}}^4}
\Bigg]K_2
\end{align*}
from which the result follows.
\end{proof}


\section{Grassman Variables and the Norm}  \label{grassman}

The first subsection on Grassman variables is completely standard and
can be found for example in \cite{MQ86,Rogers92a}.  Expressing
operators on forms in terms of kernels in Grassman variable is also in
\cite{Rogers92a}. 
 The norm $\abs{\,\cdot\,}_t$ is designed primarily
to keep track of the error 
terms in the large partition limit in such a fashion that the
supertrace of the error terms goes to zero as required for 
Gauss-Bonnet-Chern theorem.  Apart from various baroque
analytic additions, the idea of the norm and the rank of terms is a
fairly natural  
scaling
wherein the spatial variable $\vec{y}_x$
scales as $t^{1/2} $ and the Grassman variable $\psi$ scales as
$t^{-1/2}.$  Lemma~\ref{lm:tnorm-facts} collects facts about the norm
used elsewhere, but the key fact is the first, which uses the presence
of the Gaussian to translate powers of the spatial variable into
powers of $t^{1/2}$ near the diagonal, and an exponentially damped term
 away from the diagonal. This is crucial to the proof of Proposition~\ref{pr:heat-est}.

             \subsection{Grassman Variables}\label{ss:grassman}

If $V$ is a vector space with basis $v_1, \ldots v_n$ then a $k$-linear function  of $k$ variables $w_i= w_i^j v_j$
for $1 \leq i \leq k$ takes the form
\[f\parens{w_1, \ldots, w_k}= C_{j_1, j_2, \ldots j_k}w_1^{j_1}
w_2^{j_2}\cdots w_k^{j_k}.\]
The antisymmetrization of this multilinear function is the element of
$\Lambda^* V$ given by 
\[A{f}\parens{v_1, \ldots, v_k}= \sum_\sigma \frac{(-1)^\sigma }{k!} f\parens{w_{\sigma(1)},
  \ldots, w_{\sigma(k)}}\]
which is again $f$ if it was already antisymmetric. This map
from multilinear functions to antisymmetric ones is manifestly basis
independent, but using the basis  gives a simple representation
of $\Lambda^* V$ as the algebra generated freely by $\psi^1, \psi^2,
\ldots , \psi^n$ subject only to the antisymmetry relation $\psi^j
\psi^j = - \delta \psi^j \psi^i.$  The antisymmetry map is then given
naturally by writing $\psi= \psi^1 + \cdots + \psi^n$ and interpreting
$Af$ as
\[f(\psi) \equiv C_{j_1, j_2, \ldots
  j_k}\psi^{j_1} \cdots \psi^{j_k}.\]
Call such an $f$ an antisymmetric polynomial in the
Grassman variable $\psi$ with values in $V.$ 

To represent endomorphisms on $\Lambda^* V,$ define maps
$\partial/ \partial \psi^j$ for $1 \leq j \leq n$ by 
\[ \frac{\partial \,}{\partial \psi^j} \psi^j \psi^L= \psi^L
\qquad \frac{\partial \,}{\partial \psi^j} \psi^L=0\]
if $\psi^L= \psi^{l_1} \cdots \psi^{l_k}$ and $j \neq l_i$ for all
$i.$  It is easy to confirm that multiplication operators
$\psi^j$ and the differentiation operators $\partial/ \partial \psi^j$
generate all endomorphisms.  In fact the maps $\partial/\partial \psi^j$  generate a free
antisymmetric algebra as well, so the same convention applies to
interpret multilinear functions $g\parens{\partial/\partial \psi}$ of
these.  Expressions of the form  $f\parens{\psi}
g\parens{\partial/ \partial \psi}$ define endomorphisms (and in fact
span all endomorphisms). 
Operators that preserve form degree
are spanned by those products in which $f$ and $g$ are homogeneous
of the same monomial degree.  Filter the form-degree-preserving endomorphisms by
this common monomial degree, and call this the \emph{degree} of the endomorphism. An
endomorphism of degree $p$ is zero on all elements of $\Lambda^*V$ of
degree less than $p$ and nonzero on at least one element of degree
$p.$

Given a choice of degree $n$
element of $\Lambda^* V,$ any operator $\kk$ on $\Lambda^* V$ determines a
kernel; that is, an antisymmetric
polynomial $K\parens{\psi_x, \psi_y}$ in two Grassman variables
$\psi_x, \psi_y$ valued in $V$ (which of course anticommute with each
other). Explicitly, applying $\kk$ to $f(\psi)$ is
\[ (\kk f)(\psi) = \bint K\parens{\psi, \psi_y} f\parens{\psi_y} d\psi_y\]
where the Berezin integral $\bint f(\psi_y) d\psi_y$ returns the
coefficient of the chosen top element of $\Lambda^* V$ in $f.$ Such
kernels in turn admit a sort of Fourier
transform, in terms of a Grassman variable $\rho$ valued in
$V^*$ and an antisymmetric polynomial in three variables
$k\parens{\psi_x, \rho, \psi_y}:$
\[K\parens{\psi_x, \psi_y} =  \bint e^{i\bracket{\rho, \psi_x - \psi_y}} k\parens{\psi_x, \rho,
    \psi_y} d\rho\]
where the Berezin integral is with respect to the element of
$\Lambda V$ dual to the chosen element of $\Lambda^* V.$  If
$k(\psi_x, \rho, \psi_y)$
is of the form $f(\psi_x) g(\rho) h(\psi_y)$, the operator $\kk$ is
simply $\kk = f(\psi) g(i \partial/ \partial \psi) h(\psi).$  Degree-preserving endomorphisms are spanned by homogeneous $k$ in which the
degree of $\rho$ equals the sum of degrees of $\psi_x$ and $\psi_y,$
and the degree of the endomorphism is the degree of $\rho.$

             \subsection{Bundles}\label{ss:bundle}

With the help of the mild subterfuge of the previous
subsection,  sections of all bundles in this
paper may be expressed using the ``function of variables'' notation familiar from
freshman calculus.  Usually these sections will also depend on a
positive real
parameter $t$ (or sometimes $t_j$ or $\tau_j$), which  will appear in
the list of parameters at the end, after a semicolon, to emphasize its
special role.  For the entire paper \emph{except}
Corollary~\ref{cr:kinf} this $t$ will be bounded above by an arbitrary
constant, and all other constants, explicit and implicit, will depend on this
bound as well as on the manifold $M.$ 

For instance, $f(x,\psi_x)$ and
$f(x,\psi_x;t)$ represent sections of the  bundle $\Lambda^*T_M$  over
$M,$ i.e. elements of $\Gamma \parens{\Lambda^*T_M},$ or forms. This notation is meant to indicate that at each $x$,
$\psi_x$ is a Grassman variable taking values in $T_xM,$ and
$f(x,\psi_x)$ is a multilinear function of this variable which thus
represents an element of $\Lambda^* T_xM.$  One useful aspect of this
particular notation is that, assuming the Berezin integral uses the
measure coming from the metric on $T_xM,$ $\int \bint f(x, \psi_x)
d\psi_x \, dx$ represents the integral of the top form component of
$f,$ regardless of the metric.  That is, it represents the usual
integral on forms.

The arguments in the paper  are concerned with sections of a particular bundle
over $(x,y)  \in M \times M$ which will be most conveniently discussed in
two distinct representations. (In fact, the representations only agree
near the diagonal, but that suffices.) The first bundle is $\Lambda^* T_x
M \tensor \Lambda^* T_yM,$ whose sections in
$\Gamma \parens{\Lambda^*T_xM} \tensor \Gamma \parens{\Lambda^*T_yM}$ can be represented by functions $L(x,y;\psi_x,
\psi_y)$
 in the obvious fashion.  Define $E_{x,y}$ to be the set of all such
 sections which are degree preserving (that is, the kernel is a sum of
 monomials in which the degree of $\psi_x$ and the degree of $\psi_y$
 add to $n$). As in the previous subsection, such sections represent kernels of operators on forms, via the $*$
product
\be
(\lop f)(x,\psi_x)=\sbrace{L*f}(x,\psi_x)= \int \bint L(x,y,\psi_x, \psi_y) f(y, \psi_y)
d\psi_y\, dy.
\ee{*-def}
Usually the kernels will only be nonzero when $x$ and $y$ are within
the injectivity radius of each other, so that the parallel transport
map $\pt^x_y$ from $T_xM$ to $T_yM$ is defined.  
These kernels will take the form
\[\bint e^{i\bracket{\rho,\pt^x_y \psi_x - \psi_y}} k(x,y,\psi_x,
\psi_y, \rho) d\rho.\]
The simplest example would be the kernel
\[P(x,y,\psi_x, \psi_y)= \bint e^{i \bracket{\rho, \pt^x_y \psi_x -
    \psi_y}} d\rho\]
($P=0$ when $\pt^x_y$ is not defined).

If $x$ and $y$ are sufficiently close, then  $L \in E_{x,y}$
  determines a
section $\overline{L} \in \overline{E}_{x,y}$
 of the bundle $\End_0\parens{\Lambda^*
  T_xM} \tensor \RR_y$ defined by
\[L(x,y,\psi_x, \psi_y)= \overline{L}(x,y) P(x,y,\psi_x,\psi_y).\]
Here the subscript $0$ is a reminder that the endomorphisms are
degree-preserving, and the subscript on $\RR_y$ indicates the bundle is
only over the first factor of $M.$  Thus 
\[(\lop f)(x,\psi_x)= \int \sbrace{\overline{L}(x,y) \ptxystar f}(\psi_x) dy.\]
Here $\ptxystar : \Lambda^* T_yM \rightarrow \Lambda^* T_xM$ is  the
dual to the parallel transport operator, extended
antisymmetrically. As an operator on functions of $\psi_y$, $\ptxystar$ is the
pullback by $\pt^x_y$.
             \subsection{The Norm on the Bundle}\label{ss:norm}
Fix for the rest of the paper an arbitrary $\epsilon$ with $1/2 >
\epsilon > 0.$  All explicit and implicit constants in what follows will
depend on $\epsilon.$ 
Define a norm $\abs{\, \cdot \,}_t$ depending on a positive real $t$
on $\End_0(T_xM)$ as follows: 
\begin{definition} In terms of an orthonormal basis,  
\be
\abs{\psi^{j_1} \cdots \psi^{j_k} \frac{\partial\,}{\partial
    \psi^{l_1}} \cdots \frac{\partial\,}{\partial
    \psi^{l_k}} }_t = \begin{cases} 1 & k \leq 2\\
t^{(k-2)(-1/2+\epsilon/n)} & k \geq 2.\end{cases}
\ee{norm-def1}
\end{definition}
This is the inner product norm times a power of
$t$ depending only on the degree.  If $\overline{L} \in
\overline{E}_{x,y}$ define $\abs{\overline{L}}_t$ to be the function
of $x$ and $y$ given by this norm on each fiber, and for $L \in
E_{x,y}$ define $\abs{L}_t= \abs{\overline{L}}_t.$

There is a real-valued kernel $H$ of particular importance,
because, where defined, it agrees with the flat-space heat operator; namely
\be
H\parens{x,y;t}= (2\pi t)^{-n/2} e^{-\frac{\abs{\vec{x}_y}^2}{2t}}
\ee{h-def}
if $x$ and $y$ are within the injectivity radius of each other and $H=0$
otherwise.  
The errors introduced by using approximate
heat kernels on forms will generally be $t$-dependent sections which take the form of $H$ times a section well-behaved as $t$ goes to zero.

Suppose $L \in E_{x,y}$ is  a $t$-dependent kernel of degree $k$,
and
suppose there are real numbers $C > 0,$ $r,$ and $s$ such that
$\abs{L} \leq Ct^r \abs{\vec{y}_x}^s$ for all $x$ and $y$ within the
injectivity radius.  Then define $L$ to be of {\em rank} $p$ where $p =
\min(2r+s, 2r+s + 2 - k)$.  The sum of terms of rank $p$ of different
degrees is still called rank $p.$  The same definition applies to
elements of $\overline{E}_{x,y}$ and the rank of $L$ is the rank of $\overline{L}.$

For instance $\Ricci\parens{\vec{y}_x,
    \vec{y}_x},$ $t\scalar,$ and $t\parens{\rho, R\sbrace{\psi_x,
      \psi_x}\rho}$ all are  rank $2,$ while $\bracket{\rho,
    R\sbrace{\vec{y}_x, \psi_x} \vec{y}_x}$ is rank $3.$ The
  significance of rank is explained in the next lemma:

\begin{lemma} \label{lm:tnorm-facts}
\begin{alist}
\end{alist}

\begin{alist}
\item 
If $\overline{L} \in \overline{E}_{x,y}$
 is of rank $p$ or less and is
a smooth function of $t$ times $t^c,$ where $c \geq -2,$ 
  there is an $A > 0$ such that 
\be
\overline{L}H= \overline{F}_1 H + \overline{F}_2 t
\ee{rank-bd}
where $\overline{F}_i \in \overline{E}_{x,y}$ and $\abs{\overline{F}_i}_t\leq At^{(p-1)/2
   +\epsilon}.$ 
\item If $\overline{L}$ is a smooth function of $t_1 \leq t$ of degree at most
  $1$ and rank $2$,  or degree $2$ and rank $2$ by virtue of being a
 multiple of $t_1,$ then 
there are $B,D > 0$ such that if $\overline{F} \in \overline{E}_{x,y}$
\eqa{almost-contract}{
\abs{\overline{L}H(t_1)\overline{F}}_t & \leq Bt_1^{\epsilon/n}\abs{\overline{F}}_t\sbrace{H(t_1) +
Dt_1} \nonumber \\
\abs{\overline{F}\overline{L}H(t_1)}_t & \leq Bt_1^{\epsilon/n}\abs{\overline{F}}_t\sbrace{H(t_1) +
Dt_1}.
}
\item
If $t<t'$ then $\abs{F}_{t'} \leq \abs{F}_t. $ 
\item There is a  $C > 0$ such that for all $\overline{F},\overline{G}
  \in \overline{E}_{x,y}$
\be 
\abs{\overline{F}\overline{G}}_t \leq Ct^{-1} \abs{\overline{F}}_t \abs{\overline{G}}_t.
\ee{norm-prod}
\end{alist}
\end{lemma}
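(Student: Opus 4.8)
The plan is to dispatch the two bookkeeping items (c) and (d) first, then to prove the analytic heart (a), and finally to deduce (b) from (a) together with a degree-refined version of the argument behind (d). Throughout write $\alpha = 1/2-\epsilon/n$, which is positive since $\epsilon<1/2\leq n/2$. Part (c) is immediate from Definition~\eqref{eq:norm-def1}: the only $t$-dependence in the norm is the factor $t^{-\max(0,k-2)\alpha}$ on the degree-$k$ piece, whose exponent is nonpositive, so this factor is nonincreasing in $t$ and hence $\abs{F}_{t'}\leq\abs{F}_t$ whenever $t<t'$. For (d) I would combine submultiplicativity of the inner-product norm under composition (valid up to a dimensional constant) with a degree count: reducing $\psi^J\partial_L\,\psi^{J'}\partial_{L'}$ by the canonical anticommutation relations produces terms of every degree from $\max(a,b)$ up to $a+b$, and since $\max(0,a+b-2)\leq\max(0,a-2)+\max(0,b-2)+2$ the extra norm factor incurred is at most $t^{-2\alpha}=t^{-1}t^{2\epsilon/n}\leq Ct^{-1}$, using $t\leq T$. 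This is exactly the stated $Ct^{-1}$.

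For (a) I would use a cutoff tuned to the Gaussian in $H$ of \eqref{eq:h-def}. Fix a smooth radial cutoff $\chi$ that is $1$ for $\abs{\vec{y}_x}\leq\sqrt{Kt\log(1/t)}$ and $0$ for $\abs{\vec{y}_x}\geq 2\sqrt{Kt\log(1/t)}$, with $K$ a large constant to be chosen, and set $\overline{F}_1=\chi\,\overline{L}$ and $\overline{F}_2=(1-\chi)\,\overline{L}H/t$, so that $\overline{F}_1H+\overline{F}_2t=\overline{L}H$ identically and both summands inherit the degree of $\overline{L}$. On the near region the spatial bound gives $\abs{\vec{y}_x}^s\leq(CKt\log(1/t))^{s/2}$; feeding this into $\abs{\overline{L}}\leq Ct^r\abs{\vec{y}_x}^s$ and pairing the two rank inequalities $2r+s\geq p$ and $2r+s+2-k\geq p$ against the degree factor $t^{-\max(0,k-2)\alpha}$ yields an exponent of at least $p/2$ up to a harmless $(\log 1/t)^{s/2}$. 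Because $1/2-\epsilon>0$ there is slack to absorb the logarithm and land at $\abs{\overline{F}_1}_t\leq At^{(p-1)/2+\epsilon}$. On the far region the Gaussian is bounded by $t^{K/2}$, so $\abs{\overline{F}_2}_t\leq Ct^{\,r-1-n/2+K/2}$ times bounded spatial and degree factors, and taking $K$ large forces this below $At^{(p-1)/2+\epsilon}$ as well (the hypothesis $c\geq-2$ is what keeps these constants uniform).

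For (b) I would apply (a) at the scale $t_1$ and replace the blunt $t^{-1}$ of (d) by a sharper estimate that exploits the low degree of $\overline{L}$. In case (i), $\overline{L}$ of degree $\leq 1$ and rank $2$, part (a) with $p=2$ gives $\overline{L}H(t_1)=\overline{G}_1H(t_1)+\overline{G}_2t_1$ with $\overline{G}_i$ of degree $\leq 1$ and $\abs{\overline{G}_i}_{t_1}\leq At_1^{1/2+\epsilon}$; pulling the scalar $H(t_1)$ through the composition and noting that composing a degree-$\leq1$ endomorphism with $\overline{F}$ raises the degree by at most one, hence costs only $t^{-\alpha}$, gives $\abs{\overline{G}_i\overline{F}}_t\leq Ct^{-\alpha}\abs{\overline{G}_i}_t\abs{\overline{F}}_t$. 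Together with (c) the prefactor is $t^{-\alpha}t_1^{1/2+\epsilon}\leq t_1^{\epsilon/n+\epsilon}\leq Bt_1^{\epsilon/n}$, and the $\overline{G}_2t_1$ piece supplies the $Dt_1$ term. In case (ii), $\overline{L}=t_1\overline{M}$ with $\overline{M}$ bounded of degree $2$, there is no spatial factor to convert, so I would bound $\overline{L}H(t_1)\overline{F}=t_1H(t_1)(\overline{M}\,\overline{F})$ directly: the degree-$2$ refinement of (d) gives $\abs{\overline{M}\,\overline{F}}_t\leq Ct^{-2\alpha}\abs{\overline{M}}_t\abs{\overline{F}}_t$, and $t_1t^{-2\alpha}\leq t_1^{1-2\alpha}=t_1^{2\epsilon/n}\leq Bt_1^{\epsilon/n}$. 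The mirror inequality with $\overline{F}$ on the left is identical.

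The main obstacle is part (b), and specifically the balance in case (ii). The generic composition bound $t^{-1}$ from (d) is too lossy here, and the argument closes only because $\alpha<1/2$ turns $t_1t^{-2\alpha}$ into the decaying factor $t_1^{2\epsilon/n}$; this is precisely the ``just barely'' margin that the $\epsilon/n$ correction in Definition~\eqref{eq:norm-def1} is engineered to provide. The supporting difficulty is making (a) quantitatively sharp, i.e. tracking the degree factor against \emph{both} rank inequalities while the cutoff simultaneously controls the polynomial moment near the diagonal and the Gaussian tail away from it.
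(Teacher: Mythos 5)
Your proposal is correct and follows essentially the same route as the paper: parts (c) and (d) by direct degree-counting against the definition of the norm, part (a) by splitting near/far from the diagonal and using the Gaussian to convert spatial powers into powers of $t$ (your $\sqrt{Kt\log(1/t)}$ cutoff plays the role of the paper's $Ct^{1/2-\delta}$ cutoff, with the logarithm absorbed by the same $1/2-\epsilon$ slack), and part (b) by the same degree-cost bookkeeping, with $t_1 t^{-2\alpha}\leq t_1^{2\epsilon/n}$ closing the degree-$2$ case exactly as in the paper. The only organizational difference is that you obtain case (i) of (b) by invoking the construction from (a) at scale $t_1$, where the paper simply repeats the near/far splitting; this is an equally valid packaging of the same estimates.
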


\begin{proof}
\begin{alist}
\item It suffices to prove the result for $\overline{L}$ of rank
  exactly $p,$ and degree $k \geq 0.$  The proof of Eqs.~\eqref{eq:rank-bd}
 and~\eqref{eq:almost-contract}, as well as of Eq.~\eqref{eq:hh-bd}
 below, rely on the following trick for converting powers of
 $\vec{y}_x$ into powers of $t$ in the presence of the Gaussian $H.$  Choose $\delta,C>0,$ and
 consider separately the case $\abs{\vec{y}_x}> Ct^{1/2-\delta}$
 and $\abs{\vec{y}_x} \leq Ct^{1/2-\delta}.$  In the former case
 $H(x,y;t) =\OO\parens{t^{-n/2}} e^{C^2 t^{-2\delta}/2}.$  This is
 exponentially damped, and thus can be bounded by some multiple of
 any power of $t$ desired; in particular,
 $\overline{L}H=\overline{F}_2 t$
 with $\overline{F}_2$
 bounded as in the statement of
 Eq.~\eqref{eq:rank-bd}.  

In the case $\abs{\vec{y}_x} \leq
 Ct^{1/2-\delta},$  since $\overline{L}$ is rank $p$, it is
 bounded by a multiple of $t^r \abs{\vec{y}_x}^s$ where $p=\min(2r+s,2r+s+2-k).$ The bound on $\abs{\vec{y}_x}$
 implies $\overline{L}$  is bounded by a multiple
 of $t^{(2r+s)/2-s\delta}.$  Thus $\abs{\overline{L}}_t$ is bounded
by a multiple of $t^{p/2-s\delta}.$  By assumption, $r \geq -2$, so
$s$ is bounded
(by $ p+4$ if $k \leq 2$ and by $p+k + 2$ otherwise), and therefore,
given $1/2 > \epsilon > 0$, 
$\delta$ can be chosen small enough that in fact $\abs{\overline{L}}_t$ is bounded
by a multiple of $t^{(p-1)/2+\epsilon}$.

\item First notice multiplying two elements of
 $\End_0(\Lambda^* TM)$ of fixed degree together gives a product whose degree
 is the sum of their degrees, so that multiplying a degree  $k$ term
 (smooth in $t$) times
 $\overline{F}$
 multiplies the $t$-norm by a multiple of
 $t^{-k(1/2-\epsilon/n)}.$

Again the result is easy if $\abs{\vec{y}_x} >Ct_1^{1/2-\delta},$ for
some $C$ and $\delta.$ If instead $\abs{\vec{y}_x} \leq
Ct_1^{1/2-\delta},$ then in the case that $\overline{L}$ is degree at
most $1$ and rank $2$, its smoothness implies $r \geq 0$ so it is bounded by a multiple of $t_1^{1-2\delta}$.
Under multiplication it thus
multiplies the $t$-norm of $F$ by a constant times $t_1^{1-2\delta}
t^{-1/2+\epsilon/n} \leq t_1^{1/2 +
  \epsilon/n - 2\delta},$ which with the right choice of $\delta$ is
bounded by a multiple of $t_1^{\epsilon/n}.$  In the case where  $\overline{L}$ is a
multiple of $t_1$ times a degree $2$ term, then upon multiplication it
multiplies the $t$ norm of $\overline{F}$
by a  multiple of $t_1 t^{2(-1/2+\epsilon/n)}\leq t_1^{\epsilon/n}.$  
\item Trivial
\item Multiplication in  $\End_0(\Lambda^* TM)$ adds degree, and clearly  is bounded with
 respect to any fixed norm.  If the degrees $k$ and $j$ are both at
 least $2,$ then the $t$-norm of the product is a multiple of
 $t^{(k+j-2)(-1/2+\epsilon/n)}$ 
which is a multiple of  $t^{2(-1/2+\epsilon/n)}$
times the $t$-norms of the factors, so in particular a multiple of
$t^{-1}$ times the product of the $t$-norms.  If either degree is less
than $2,$ the same is true with a higher power of $t,$ and therefore
in general it is a multiple of $t^{-1}$ times the product of
$t$-norms.
\end{alist}
\end{proof}


             \subsection{The supertrace}\label{ss:str}

The matrix supertrace $\str$ of an  operator $\kk \in \End_0\parens{\Lambda^*
V}$ is the trace of the restriction of $\kk$ to even forms
minus the trace of its  restriction to odd forms.  In terms of
Grassman variables, the kernel of $\kk$ is $K(\psi_x, \psi_y) = K_{IJ} \psi_x^I \psi_y^J$ and the trace of
its restriction to forms of degree $d$ is, in an orthonormal basis, 
\[
\tr \l. \kk \r|_d= \sum_{|I|=d} K_{I\bar{I}}
(-1)^{\bar{I}I},
\]
where $\bar{I}$ is formed from the complement of $I$ in $\{1, \dots,
n \}$, and the last factor is the sign of the
permutation. This follows from noticing $(-1)^{\bar{I}I}
\psi^{\bar{I}}$ is the  dual basis vector to $\psi^I$. On the other hand
\[
\bint K(\psi, \psi) \, d\psi = \sum_{IJ} K_{IJ} \bint \psi^I \psi^J \, d\psi
\]
The integrals on the right vanish, unless $J = \bar{I}$, so the sum
reduces to $\sum_I K_{I\bar{I}} (-1)^{I\bar{I}}$. Since $(-1)^{I\bar{I}}
    = (-1)^{|I||\bar{I}|}(-1)^{\bar{I}I}$, decomposing this sum
according to degree and comparing with the sum for $\tr \l. \kk
\r|_d$, shows
\be
\bint K(\psi, \psi) \, d\psi = \str \kk
\ee{eq:str}
since $n$ is even.
\begin{lemma} \label{lm:str-est}
If $F$ is the kernel of an operator in $\End_0\parens{\Lambda^*
V}$ and $\abs{F}_t \in \OO\parens{t^{p+1+\epsilon}}$, then
\[
\str (F) \in \OO(t^{p + n/2 + 2 \epsilon/n}).
\]
\end{lemma}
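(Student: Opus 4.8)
The plan is to exploit the fact that in the Grassmann formalism the matrix supertrace is a \emph{top-degree} operation: it annihilates every homogeneous piece of $F$ except the one of endomorphism degree $n$. Writing $F$ in the Fourier representation $F\parens{\psi_x,\psi_y}=\bint e^{i\bracket{\rho,\psi_x-\psi_y}}k\parens{\psi_x,\rho,\psi_y}\,d\rho$ and using the identity $\bint F(\psi,\psi)\,d\psi=\str F$ established above, setting $\psi_x=\psi_y=\psi$ kills the exponential, so
\[
\str\parens{F}=\bint F(\psi,\psi)\,d\psi=\bint\bint k(\psi,\rho,\psi)\,d\rho\,d\psi.
\]
Because $k$ is a sum of homogeneous terms $f(\psi_x)g(\rho)h(\psi_y)$ in which the degree of $\rho$ equals the degree of the endomorphism, the Berezin integral $\bint\,d\rho$ forces that degree to be $n$. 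Hence only the degree-$n$ component $F^{(n)}$ contributes, and $\str\parens{F}=\str\parens{F^{(n)}}$. (Equivalently, one can check directly from $\str=\sum_I F_{I\bar I}(-1)^{\bar I I}$ that $\str\parens{\psi^J\,\partial/\partial\psi^L}$ vanishes unless $\abs{J}=\abs{L}=n$, via the identity $(1-1)^{n-1}=0$.)

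Next I would note that the degree-$n$ endomorphisms form a fixed finite-dimensional space, spanned by $\psi^1\cdots\psi^n\,\partial/\partial\psi^1\cdots\partial/\partial\psi^n$, so the supertrace restricted to it is a bounded linear functional with respect to the underlying inner-product norm, with a constant depending only on $n$. Thus $\abs{\str\parens{F}}=\abs{\str\parens{F^{(n)}}}$ is bounded by a constant times the inner-product norm of $F^{(n)}$.

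Finally I would convert this inner-product norm into the $t$-norm using Eq.~\eqref{eq:norm-def1}. Since $\abs{\,\cdot\,}_t$ is the inner-product norm times a power of $t$ fixed by the degree, the degree components are orthogonal and $\abs{F^{(n)}}_t\leq\abs{F}_t\in\OO\parens{t^{p+1+\epsilon}}$. At degree $k=n$ the $t$-norm multiplies the inner-product norm by $t^{(n-2)(-1/2+\epsilon/n)}$, so the inner-product norm of $F^{(n)}$ equals $\abs{F^{(n)}}_t\,t^{(n-2)(1/2-\epsilon/n)}\in\OO\parens{t^{p+1+\epsilon+(n-2)(1/2-\epsilon/n)}}$. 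Expanding the exponent gives $p+1+\epsilon+(n/2-1)-(\epsilon-2\epsilon/n)=p+n/2+2\epsilon/n$, which is the claimed bound.

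The only genuine content is the first step, the vanishing of the supertrace on all endomorphism degrees below $n$; everything after that is bookkeeping with the norm and the exponent arithmetic. I expect the main care to lie in stating that step cleanly, and the $\rho$-integral argument is the most economical route to it.
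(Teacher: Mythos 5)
Your proposal is correct and follows essentially the same route as the paper: both reduce $\str(F)$ to the coefficient of the unique degree-$n$ element $\psi^1\cdots\psi^n\,\partial/\partial\psi^1\cdots\partial/\partial\psi^n$, whose $t$-norm is $t^{(n-2)(-1/2+\epsilon/n)}$, and then carry out the same exponent arithmetic. The only difference is presentational: the paper invokes its earlier characterization of the supertrace to see that lower-degree terms contribute nothing, while you re-derive that fact via the Berezin integral over $\rho$ in the Fourier representation (your parenthetical $(1-1)^{n-1}$ should read $(1-1)^{n-k}$ for a degree-$k$ term, but that aside is not needed).
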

\begin{proof}
From the above characterization of the supertrace, $F = \str (F)
\psi^1 \cdots \psi^n \frac{\partial \,}{\partial \psi^1} \cdots
\frac{\partial \,}{\partial \psi^n}$ plus lower-degree terms. 

The condition on $F$ and the definition in Eq~\ref{eq:norm-def1} of
the norm on  $\End_0\parens{\Lambda^*V}$ thus require
\[
\str (F) t^{(n-2)(-1/2 + \epsilon/n)} = \str(F) \abs{\psi^1 \cdots \psi^n \frac{\partial \,}{\partial \psi^1} \cdots
\frac{\partial \,}{\partial \psi^n}}_t \leq \abs{F}_t = \OO\parens{t^{p+1+\epsilon}}
\]
from which the lemma immediately follows.
\end{proof}

\section{The Approximate Kernel}

For background on the heat operator see \cite{BGV04}.
Proposition~\ref{pr:heat-est}, which says $K$ is an approximate
solution to the heat equation with error terms that are Gaussians
times errors bounded in the $t$-norm, is entirely computational.  

\subsection{The Approximate Kernel} \label{ss:ak} Richard Feynman
\cite{Feynman42}, building on ideas of Dirac \cite{Dirac33},
identified the short-time approximation of the kernel of the
propagator for quantum mechanics and, by extension, any quantum theory
with a certain coefficient times the exponential of $i/\hbar $ times
the classical action.  
  The full propagator should be the limit of the
product of many of these short-time approximations, which can be readily
viewed (sweeping the infinite product of the aforementioned
coefficient into the measure) as an integral over all possible paths
of the exponential of $i/\hbar$ times  
the action of the path.  This is the path integral formulation of a
quantum theory.  

Feynman was quite clear there were two
ambiguities in his formulation; namely, the specification of the
coefficient and the existence of the limit. 
 The latter is a
straightforward-enough question, and its answer will take up the bulk
of the paper.  The former is subtle and requires some discussion.

The path integral formalism suggests the propagator for a quantum
theory (and by extension the quantum Hamiltonian) is completely
determined by the action functional plus whatever goes into defining
the putative measure on the space of paths, which one might hope would
involve no more than kinematic aspects of the theory.  If true this
would be quite a remarkable claim, because purely classical data would
determine the quantum Hamiltonian.  In the canonical approach,
quantizing the classical Hamiltonian typically requires a choice of
how to order products of position and momentum operators; the
classical theory alone does not resolve such operator-ordering
ambiguities.

From the Feynman integral point of view the  choice  of operator
orderings appears
the coefficient discussed above.  More specifically, Feynman proposed
approximating the kernel of the propagator by
\[K(x,y;t)= C(x,y;t) e^{i S(x,y;t)/\hbar}\] where $x$ and $y$ are
position variables in the initial and final states and $S$ is the
classical action.  To have a hope of converging under refinements,
this approximate kernel itself must agree with the kernel of the
propagator to first order in $t,$ which means that its derivative at
$t=0$ should give the kernel of the quantum Hamiltonian operator.  The
situation already becomes subtle in bosonic quantum mechanics on
curved space.  There the classical Hamiltonian is $p_i g^{ij}(x) p_j,$
so, promoting the $p$'s to derivatives acting on a spaces of functions
of $x$, the quantum Hamiltonian should be the Laplace-Beltrami operator
plus terms resolving operator-ordering ambiguities. These would
involve one or no spatial derivatives as well as traces of the
curvature and higher order terms in the metric.

There are several ways to chose $C$ and thus specify a short-time
approximation to the propagator.  Geometric quantization
\cite{Woodhouse92} (specifically the BKS pairing) builds $C$ out of
what is essentially a Jacobian factor for the symplectic form
\cite{DeWitt57}, yielding the Van Vleck determinant. Assuming the
Laplace-Beltrami operator is the exact quantum Hamiltonian, the WKB
approximation gives the same $C$ (in the normalization of this paper,
$C=e^{\Ricci\parens{\vec{y}_x, \vec{y}_x}/12}$ to the appropriate
order).  The presence of this term in the approximate kernel proves to
be essential to get appropriate convergence to the kernel of a quantum
Hamiltonian which is the Laplace-Beltrami operator plus $\scalar/6,$
where $\scalar$ is the scalar curvature.  Various other
regularizations of the path integral yield Laplace-Beltrami plus other
multiples of the scalar curvature \cite{Fulling96}.  Aesthetic
considerations lead mathematicians to hope and expect that the
Laplace-Beltrami operator is the ``correct'' Hamiltonian on the nose.
There is no realistic hope of determining which of these nature
prefers, as it seems unlikely the difference between these
Hamiltonians could be accessible to any foreseeable experiment
\cite{Schulman81}.

In the presence of all this confusion, the authors have abandoned the
effort to offer a coherent procedure for choosing $C$ based only on
the action and the kinematics of the theory, and will choose it based
on the desired Hamiltonian (which in the case at hand is the
Laplace-de Rham operator on forms).
  That is, we choose $C$ so that
$Ce^{-S(x,y;t)}$ is a solution to the Laplace-de Rham heat equation to
zeroth order in $t$ (and has total integral $1$ to zeroth order). 
 Inspection
of the resulting approximate kernel in Eq.~\eqref{eq:K-def} shows the
Ricci curvature term corresponding to the Van Vleck term
 in the bosonic case, and a scalar curvature term which
 resolves operator-ordering ambiguities coming from both the
bosonic piece and the four fermi term.  

One final technical point is in order.  In general the metaplectic
correction suggests replacing the
naive notion of the Hilbert space in quantum theory, which is to say
functions on position space, with half-densities or half-forms on this
space \cite{Woodhouse92}.  In the bosonic case this appears in the Van Vleck determinant,
which is naturally viewed as a half-density on the product space.  For
supersymmetric quantum mechanics the naive Hilbert space is forms on
the manifold, which corresponds (see Section~\ref{grassman}) to the space of
functions of a Grassman variable.  Because bosonic and Grasmannian
determinants cancel, half densities on this space can be naturally
identified with the naive space of functions.  The upshot of this
observation is that the remainder of the paper will ignore the
metaplectic correction and half-densities.

With the choice of $C$ determined as above, the small $t$ approximation to the
propagator of $N=1$ imaginary time supersymmetric quantum mechanics in
curved space is given by the kernel
\eqa{K-def}{
& K(x,y,\psi_x,\psi_y;t) =  \bint H\exp\Big[
-\frac{t\scalar }{6} + \frac{1}{12}
\Ricci\parens{\vec{x}_y, \vec{x}_y} 
\nonumber \\
& \quad +  i \bracket{ \rho, \pt^x_y\psi_x- \psi_y} + 
\frac{t}{8} \parens{\rho, R\sbrace{\psi_y, \psi_y} \rho} + \frac{t}{8} \parens{\ptxystar\rho, R\sbrace{\psi_x, \psi_x} \ptxystar\rho}
\Big]d \rho
}
when $x$ and $y$ are close enough to be within the injectivity radius
of each other (so  $K$ is invertible in this region) and zero otherwise.  
All curvature tensors are understood to be evaluated at $y$ except for the
last $R$ which is evaluated at $x.$

The analytic properties of $K$ crucial in what follows will also hold
for any kernel $K'$ which differs from $K$ by a term
which in Riemann normal coordinates around $y$ (that is in terms of
$\vec{x}_y$) is $H(x,y;t)$ times a term of rank  $3$
(see \S\ref{ss:norm}).
Accordingly, write $K \sim K'$ if the two kernels differ by such terms.

Observe that $K=(P+L)H,$ where $P$ is the kernel of the parallel transport operator,
and $L$ is a sum of terms each of which has the form a positive power of $t$, times a power of $\vec{x}_y$, times a product of
$\psi_x, \rho, \psi_y$, wherein the sum of the power of $\vec{x}_y$
and twice the power of $t$ is at least the degree of the $\psi\rho$
factors.  As a consequence, $L$ times any term of rank $p$ is again
rank $p.$  Since $P$ does not affect rank, any
rank $3$ term times $K$ is equal to some rank $3$ term times $H.$ 

For example 
the $\Ricci$ and $\scalar$ terms are
of rank $2,$  and  replacing them with the same terms
interpreted at $x$ (with $\vec{y}_x$ replacing $\vec{x}_y$), changes $K$ by terms of rank $3$ 
times $H$
(since $\Ricci$ and $\scalar$ are both continuously
differentiable). Thus this change produces a kernel equivalent to $K$ under $\sim$.
Likewise, changing variables from $\rho$ to $\ptxystar \rho$ (since $\pt$ is
measure-preserving there is no Jacobian factor) shows that, up to  $H$ times
terms of rank $3$,  $K$ is symmetric with respect to $x$ and $y.$

In fact, up to such terms, $K$ has a simple expression
in Riemann normal coordinates centered at $y$.  Notice 
\[\bracket{\rho, \pt^x_y \psi_x - \psi_y}= \bracket{\rho, \psi_x -
  \psi_y} + \frac{1}{6}\bracket{\rho, R\sbrace{\vec{x}_y, \psi_x}
  \vec{x}_y} + \OO\parens{\abs{\vec{y}_x}^3 \psi_y^i \rho_j}\]
the error term being  rank $3.$
 Likewise, in the term
with $\ptxystar \rho $, the $\ptxystar$ can be
removed at the cost of a rank $4$ term, and $R_x$ can be
replaced with $R_y$ at the cost of a rank $3$ term.  This leads to 
\eqa{k-rnc1}{
K &\sim 
\bint H\exp\Bigg[ i\bracket{\rho, \psi_x-
  \psi_y} 
-\frac{t\scalar }{6} + \frac{1}{12}
\Ricci\parens{\vec{x}_y, \vec{x}_y} + \frac{i}{6}\bracket{\rho, R\sbrace{\vec{x}_y, \psi_x}
  \vec{x}_y} \nonumber\\
&\qquad + 
\frac{t}{8} \parens{\rho, R\sbrace{\psi_x, \psi_x} \rho} + 
\frac{t}{8} \parens{\rho, R\sbrace{\psi_y, \psi_y} \rho}   \Bigg]
d\rho.
}
That is, defining $K'$ as
\eqa{k-rnc}{
K' &\equiv \bint H \exp\Bigg[i\bracket{\rho, \psi_x-
  \psi_y} 
-\frac{t\scalar }{6} + \frac{1}{12}
\Ricci\parens{\vec{x}_y, \vec{x}_y} + \frac{i}{6}\bracket{\rho, R\sbrace{\vec{x}_y, \psi_x}
  \vec{x}_y} \nonumber\\
&\qquad + 
\frac{t}{8} \parens{\rho, R\sbrace{\psi_x, \psi_x} \rho} + 
\frac{t}{8} \parens{\rho, R\sbrace{\psi_y, \psi_y} \rho}   \Bigg]
d\rho,
}
\[
K \sim K'.
\]

             \subsection{The Heat Operator}

Let $\Delta_{\text{LdR}}$ be the operator on forms given by 
\be
\Delta_{\text{LdR}}=-g^{ij} \parens{\Nabla_i \Nabla_j - \Gamma_{ij}^k \Nabla_k}
+ \frac{1}{2}\parens{\frac{\partial}{\partial \psi}, R\sbrace{\psi, \psi}
  \frac{\partial}{\partial \psi}}
\ee{Delta-def}
where $\Delta_i=\partial_i - \Gamma_{ij}^k \psi^j \partial/\partial \psi^k.$

\begin{proposition} \label{pr:heat-est}
For $K=K(x,y,\psi_x, \psi_y;t)$ defined by Eq.~\eqref{eq:K-def} and
$\Delta$ by Eq.~\eqref{eq:Delta-def}
\eqa{heat-est}{
\frac{\partial K}{\partial t}  + \frac{1}{2} \Delta_x K &=
F_1 H\parens{x,y;t}+  F_2 t \mbox{, and} \nonumber \\
\frac{\partial K}{\partial t} + \frac{1}{2} \Delta_y K &= 
F_3 H\parens{x,y;t}+  F_4 t,
}
where $F_i \in E_{x,y}$ and $\abs{F_i}_t= \OO\parens{t^\epsilon}.$
\end{proposition}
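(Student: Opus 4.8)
The plan is to reduce the Laplace--de Rham heat equation for $K$ to the bosonic generalized-heat estimate of Lemma~\ref{lm:approx-heat}, and then to convert the resulting remainders into the form $F_1H+F_2t$ by the rank machinery of Lemma~\ref{lm:tnorm-facts}(a). Because $K$ is symmetric in $x$ and $y$ up to $H$ times rank-$3$ terms and the target error form is itself symmetric, it suffices to prove the first equation, working in Riemann normal coordinates centered at $y$ with spatial variable $\vec{x}_y$, so that $\Delta_x$ differentiates in the coordinate variable. I would first replace $K$ by the representative $K'$ of Eq.~\eqref{eq:k-rnc}: since $K\sim K'$, and since each $t$-derivative and each first or second spatial derivative lowers the rank of an $H$-coefficient by at most $2$ (the worst case being the $\abs{\vec{x}_y}^2/t^2$ factor from $\partial H/\partial t$ or $\partial_i\partial_j H$), the operator $\partial/\partial t+\tfrac12\Delta_x$ carries an $H$-times-rank-$3$ term to $H$ times terms of rank at least $1$; Lemma~\ref{lm:tnorm-facts}(a), whose hypothesis $c\ge-2$ is exactly met by the $t^{-2}$ prefactors produced this way, then sends these to the allowed form with $\OO(t^{\epsilon})$ norm. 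So it is enough to treat $K'$.

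Next I would apply $\partial/\partial t+\tfrac12\Delta_x$ to $K'$ inside the Berezin integral over $\rho$, which commutes with both operations, and split $\tfrac12\Delta_x$ as $-\Delta_0$ plus its Grassman pieces, namely the connection terms $\Gamma^k_{ij}\psi^j\,\partial/\partial\psi^k$ inside $\Nabla$ and the fermionic curvature operator $\tfrac12\parens{\partial/\partial\psi,R\sbrace{\psi,\psi}\partial/\partial\psi}$. I would match the $x$-dependence of the integrand of $K'$ to that of a kernel $K_2$ of Lemma~\ref{lm:approx-heat}, the coefficients $f$ and $h^l_k$ now being $\rho,\psi$-valued but still bounded in $x$ with bounded derivatives: the $\Ricci$ term is the Ricci correction of $K_0$, the $\OO(t)$ exponent terms supply $f$, and the quadratic term $\tfrac{i}{6}\bracket{\rho,R\sbrace{\vec{x}_y,\psi_x}\vec{x}_y}$ supplies $h^l_k$, the curvatures at $y$ bounding $C$ and $D$. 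Since $-\Delta_0$ is purely spatial, Eq.~\eqref{eq:heat2-est} gives $\partial K_2/\partial t-\Delta_0 K_2$ as $\OO\parens{\cdots}K_2$ plus the explicit first-order and potential terms of $\Delta_2$; the $\OO\parens{\cdots}$ coefficients ($t$, $\abs{\vec{x}_y}$, $\abs{\vec{x}_y}^3/t$, $\abs{\vec{x}_y}^5/t^2$, and the $C,D$ terms) times $H$ all have rank at least $1$ and, the $\psi,\rho$ factors of $K'$ being rank-neutral by the property of $L$ in $K=(P+L)H$, remain so after the $\rho$-integral, hence are harmless.

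The delicate point is that the remaining contributions --- the explicit $\Delta_2$ terms just mentioned, the $\OO(1)$ piece of the connection coming from $\partial_i\Gamma^k_{jl}\sim R$, and the leading part of the fermionic operator --- are the $t$- and $\vec{x}_y$-independent pieces, which have rank $0$ or less and would, through Lemma~\ref{lm:tnorm-facts}(a), produce a divergent $t^{-1/2+\epsilon}$; the entire point of the coefficients in Eq.~\eqref{eq:K-def} is to make them cancel. Concretely I would check that the degree-$2$ curvature contribution (the $h^k_k$ term) generated by $-\Delta_0$ from the $\tfrac{i}{6}$ quadratic exponent cancels the $\OO(1)$ connection term, matching the $\tfrac16$ coefficient against Eq.~\eqref{eq:gamma-rnest}; that the leading part of the fermionic operator --- the substitution $\partial/\partial\psi_x\mapsto i\rho$ turning it into a multiple of $\parens{\rho,R\sbrace{\psi_x,\psi_x}\rho}$ --- is cancelled by $\partial/\partial t$ of the two $\tfrac{t}{8}$ curvature terms, which on the diagonal both reduce to $\tfrac{t}{8}\parens{\rho,R\sbrace{\psi_x,\psi_x}\rho}$ (the $\psi_y$ one via the identification $\psi_y\sim\pt^x_y\psi_x$ enforced by $P$, up to rank $3$), summing to $\tfrac{t}{4}$ and matching the fermionic coefficient; and that the scalar terms from $-t\scalar/6$, from Eq.~\eqref{eq:delta-k0}, and from the trace of the fermionic operator cancel, which is the operator-ordering resolution discussed in the construction of $K$. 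What then survives has rank at least $1$, so Lemma~\ref{lm:tnorm-facts}(a) with $p\ge1$ yields $F_1H+F_2t$ with $\abs{F_i}_t=\OO\parens{t^{(p-1)/2+\epsilon}}=\OO\parens{t^{\epsilon}}$; the $\Delta_y$ equation follows by symmetry.

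I expect the main obstacle to be exactly the cancellation bookkeeping of the third paragraph: verifying, degree by degree in the anticommuting variables and using the diagonal identification $\psi_y\sim\pt^x_y\psi_x$, that every $t$- and $\vec{x}_y$-independent term produced by $\partial/\partial t+\tfrac12\Delta_x$ is annihilated by the engineered coefficients of Eq.~\eqref{eq:K-def}, so that no surviving error falls below rank $1$. Given the warning in the Technical Introduction that the norm ``just barely'' accommodates these estimates, it is this rank-$0$ accounting, rather than the bosonic analysis, where the care lies.
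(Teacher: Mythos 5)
Your proposal is correct and takes essentially the same route as the paper's proof: reduce to one equation via the rank-$3$ symmetry of $K$, pass to $K'$ in Riemann normal coordinates, feed the result into Lemma~\ref{lm:approx-heat} with Grassman-operator-valued coefficients ($C$ of degree $2$, $D$ of degree $1$), and convert the surviving rank-$\geq 1$ errors into $F_1H+F_2t$ with $\abs{F_i}_t=\OO(t^\epsilon)$ by Lemma~\ref{lm:tnorm-facts}(a). The cancellation bookkeeping you isolate in your third paragraph is exactly the content of the paper's expansion of $\Delta_{\text{LdR}}$ in normal coordinates together with its verification that the kernel $K_2$ supplied by Lemma~\ref{lm:approx-heat} satisfies $K_2\sim K'$ (the normal-ordering computation producing the $2\scalar(y)$ trace term), and each of your three checks comes out as you predict.
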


\begin{proof}

  Recalling Eq.~\eqref{eq:rank-bd} of Lemma~\ref{lm:tnorm-facts}, the
  argument reduces to showing that the heat operators on the left-hand
  sides applied to $K$ give terms of rank $1$ times $H.$ To this end,
  first observe that the heat operator acts on $H$
  times terms of a given rank to produce $H$ times terms whose rank is
  lower by at most $2$. This is because the operator
  $\partial/\partial t$ lowers the rank of a smooth function of $t$ by
  $2,$ and, when applied to $H$, gives $H\delta_{ij} x^j x^j/2t^2,$ which
  also lowers the rank of whatever it multiplies by $2.$ In
  $\Delta_x$, the curvature term does not change rank at all (and is
  zero on $H$).  The operator $\Nabla_i$ in the other hand, lowers
  rank by one for smooth terms because $\partial_i$ lowers the degree
  of $\vec{x}_y$ by $1$ and the $\Gamma_{ij}^k \psi^j \partial/
    \partial \psi^k$ actually increases rank by $1$ (by
  Eq. ~\eqref{eq:gamma-rnest}) Acting on $H$, $\Nabla_i$ gives
  $-H \delta_{ij} x^j/t,$ which lowers the rank of whatever it multiplies
  by $1.$ Thus, taken together, the heat operator will lower rank by
  at most two. As a first consequence, if the  heat operator applied to some $\hat{K} \sim K$
  gives terms of rank $1$ times $H$, then the same is true for $K$
  itself. This in turn implies the two equations in the
  proposition are equivalent, since, as noted above, $K$ is symmetric
  up to terms of rank $3$ times $H$.

It suffices, then, to confirm the result on $K' \sim K,$ where $K'$ is
defined in Riemann normal coordinates by the right-hand side of
Eq.~\eqref{eq:k-rnc}.  In fact, simplifying the heat operator itself
in Riemann normal coordinates, by dropping terms that when applied to $K'$
give $H$ times terms of rank $1$ or higher, 
will lead to an operator $\Delta_2$ and
a kernel $K_2 \sim K'$ for which, by
Eq.~\eqref{eq:heat2-est} of Lemma~\ref{lm:approx-heat}, the
proposition holds.

Begin by expanding $\Delta_{\text{LdR}}$ in Riemann normal coordinates,
\begin{align*}
\Delta_{\text{LdR}}&=-g^{ij} \partial_i \partial_j + 2 g^{il}
\Gamma_{iq}^p \psi^q \frac{\partial\,}{\partial \psi^p} \partial_l +
g^{il} \partial_l \parens{\Gamma_{iq}^p} \psi^q \frac{\partial\,}{\partial \psi^p}
-g^{il} \Gamma_{iq}^p \Gamma_{lk}^n \psi^q \frac{\partial\,}{\partial
  \psi^p}\psi^k \frac{\partial\,}{\partial \psi^n}\\
&\qquad +g^{il}\Gamma_{il}^q \partial_q - g^{il}\Gamma_{il}^q\Gamma_{qp}^k \psi^p \frac{\partial\,}{\partial
  \psi^k} + \frac{1}{2}\parens{\frac{\partial}{\partial \psi}, R\sbrace{\psi, \psi}
  \frac{\partial}{\partial \psi}}\\
&\sim- 2\Delta_0 + 2 
\delta^{il}\Gamma_{iq}^p \psi^q \frac{\partial\,}{\partial \psi^p} \partial_l +
\delta^{il}\partial_l \parens{\Gamma_{iq}^p} \psi^q \frac{\partial\,}{\partial \psi^p} + \frac{1}{2}\parens{\frac{\partial}{\partial \psi}, R\sbrace{\psi, \psi}
  \frac{\partial}{\partial \psi}}\\
&\sim - 2\Delta_0 - \frac{2}{3}\delta^{il}\parens{
R_{ikq}^{\,\,\,\,\,\,p} + R_{qki}^{\,\,\,\,\,\,p}}x^k \psi^q
\frac{\partial\,}{\partial \psi^p} \partial_l +\frac{1}{3}\delta^{ik} R_{qki}^{\,\,\,\,\,\,p}\psi^q \frac{\partial\,}{\partial \psi^p} + \frac{1}{2}\parens{\frac{\partial}{\partial \psi}, R\sbrace{\psi, \psi}
  \frac{\partial}{\partial \psi}}
\end{align*}
where $\Delta_0$ is defined in Lemma~\ref{lm:approx-heat}.  
 The  terms
omitted, when applied to $K'$, 
give terms of rank $1$ or higher times $H$
 (each
$\partial_l$ lowers rank by $1,$ each factor of $\Gamma$ increases
rank by
$1$). 

Thus
\[\Delta_{\text{LdR}}= -2 \sbrace{ \Delta_0 + h_k^l x^k \partial_l +
  h_k^k/2 + f\parens{\vec{x}} }\]
 where $h_k^l= \frac{1}{3}\delta^{il}\parens{
R_{ikq}^{\,\,\,\,\,\,p} + R_{qki}^{\,\,\,\,\,\,p}}\psi^q \frac{\partial\,}{\partial \psi^p}$
and $f= - \frac{1}{4}\parens{\frac{\partial}{\partial \psi}, R\sbrace{\psi, \psi}
  \frac{\partial}{\partial \psi}}.$  The expression in Riemann
normal coordinates denotes a differential operator on $\RR^n$ with
coefficients in $\End_0\parens{\Lambda^* \RR^n},$ so it still makes
sense to refer to the degree and rank of a term. 
Of course, this is
a noncommutative algebra, but since $e^{-h_m^k x^m x^k/2}$ and
$e^{-tf}$ commute up to terms of rank $4,$
Eq.~\eqref{eq:heat2-est} applies, with $C$ of degree $2$ and $D$ of
degree $1,$ to conclude 
\[\frac{\partial K_2}{\partial t} + \frac{1}{2} \Delta_\text{LdR} K_2=
FK_2\]
where
\[K_2\parens{\vec{x};t}= H e^{  \frac{1}{12} \Ricci_{ij} x^i x^j + \frac{t \scalar}{12}
  - \frac{1}{6} 
R_{kij}^{\,\,\,\,\,\,l}x^i x^j\psi^k \frac{\partial\,}{\partial \psi^l} 
-\frac{t}{4} \parens{\frac{\partial\,}{\partial \psi}, \overline{R}\sbrace{\psi, \psi}
  \frac{\partial\,}{\partial \psi}}},
\] 
$\overline{R}=\frac{1}{2}\parens{R_x + R_y},$ and $F,$ representing all the contributions to the error term in
Eq.~\eqref{eq:heat2-est}, is of rank $1.$ 

 It
remains to show  $K_2 \sim K'$, when the first is expressed as a
kernel integrated over 
$\rho$ and the second is given by  Eq.~\eqref{eq:k-rnc}.

Towards that goal note 
\begin{align*}
 &2\parens{\frac{\partial\,}{\partial \psi}, \overline{R}\sbrace{\psi, \psi}
  \frac{\partial \,}{\partial \psi}}= R_{ijkl}(x) \frac{\partial
  \,}{\partial \psi^l} \psi^i \psi^j \frac{\partial \,}{\partial
  \psi^k} + R_{ijkl}(y) \frac{\partial
  \,}{\partial \psi^l} \psi^i \psi^j \frac{\partial \,}{\partial
  \psi^k}\\
&\qquad = - R_{ijkl}(x)
 \psi^i \psi^j \frac{\partial \,}{\partial
  \psi^k}  \frac{\partial \,}{\partial
  \psi^l} + R_{ijki}(x) \psi^j \frac{\partial \,}{\partial
  \psi^k}  -R_{ijkj}(x) \psi^i \frac{\partial \,}{\partial
  \psi^k} \\
&\qquad \qquad - R_{ijkl}(y) \frac{\partial \,}{\partial
  \psi^k}\frac{\partial
  \,}{\partial \psi^l} \psi^i \psi^j +R_{ijjl}(y) \frac{\partial
  \,}{\partial \psi^l} \psi^i - R_{ijil}(y)\frac{\partial
  \,}{\partial \psi^l} \psi^j\\
&\qquad = - R_{ijkl}(x)
 \psi^i \psi^j \frac{\partial \,}{\partial
  \psi^k}  \frac{\partial \,}{\partial
  \psi^l} - R_{ijkl}(y) \frac{\partial \,}{\partial
  \psi^k}\frac{\partial
  \,}{\partial \psi^l} \psi^i \psi^j + 2 R_{ijki}(x) \psi^j \frac{\partial \,}{\partial
  \psi^k} +2 R_{jiil}(y) \frac{\partial
  \,}{\partial \psi^l} \psi^j\\
&\qquad = - R_{ijkl}(x)
  \psi^i \psi^j  \frac{\partial \,}{\partial
  \psi^l}\frac{\partial \,}{\partial \psi^k} - R_{ijkl}(y) \frac{\partial \,}{\partial
  \psi^k}\frac{\partial
  \,}{\partial \psi^l} \psi^i \psi^j + 2 \scalar(y) +
\text{Rank 1}.
\end{align*}
Multiplying by $-t/8$, exponentiating and writing the resulting
operator as a kernel integrated over $\rho$ gives
\[ H e^{-\frac{t}{4} \parens{\frac{\partial\,}{\partial \psi}, \overline{R}\sbrace{\psi, \psi}
  \frac{\partial\,}{\partial \psi}}} \sim \bint H e^{i \bracket{\rho,
  \psi_x - \psi_y} -\frac{t\scalar}{4} + \frac{t}{8}\sbrace{\parens{\rho, R\sbrace{\psi_x, \psi_x} \rho}
+ \parens{\rho, R\sbrace{\psi_y, \psi_y} \rho}}} d\rho.\]
The rank 1 error term disappears,
since the exponential of $t$ times a rank 1 term is $1$ plus a rank $3$ term.
Combining this with  the other Grassman term from $K_2$, gives, as an
integral over $\rho$,
\begin{align*}
&He^{- \frac{1}{6} 
R_{kmj}^{\,\,\,\,\,\,\,\,\,l}x^m x^j\psi^k \frac{\partial\,}{\partial \psi^l}-\frac{t}{4} \parens{\frac{\partial\,}{\partial \psi}, \overline{R}\sbrace{\psi, \psi}
  \frac{\partial\,}{\partial \psi}}} \\
 & = \bint H e^{i \bracket{\rho,
  \psi_x - \psi_y} -\frac{t\scalar}{4} - \frac{i}{6} 
R_{kmj}^{\,\,\,\,\,\,\,\,\,l}x^m x^j\rho_l\psi_x^k + \frac{t}{8}\sbrace{\parens{\rho,
    R\sbrace{\psi_x, \psi_x} \rho} 
+ \parens{\rho, R\sbrace{\psi_y, \psi_y} \rho}}} d\rho.
\end{align*}
Comparing the definition of $K_2$ above with that of $K'$ in Eq.~\eqref{eq:k-rnc}
 gives $K_2 \sim K'$.
\end{proof}


\section{The $*$ Product of Kernels and the Norm}

The material in this section relies heavily on \S\ref{ss:bundle}
and~\S\ref{ss:norm}, requires
the definition Eq.~\eqref{eq:K-def} of $K,$ and makes one brief
reference to Eq.~\eqref{eq:dist-est}.  Lemmas~\ref{lm:h-bd}
and~\ref{lm:k-bd} involve rather technical bounds, and are only
necessary for the proofs of
Propositions~\ref{pr:norm} and~\ref{pr:kk-est}.  Only
Eqs.~\eqref{eq:k*l}, \eqref{eq:l*l}, and~\eqref{eq:kk-est} are needed
going forward for the rest of the paper.  Notice the norm $\norm{\, \cdot
 \,}_t$ could not be defined until Lemmas~\ref{lm:h-bd}
and~\ref{lm:k-bd} guarantee the existence the quantity $D$ in the
definition so that Eq.~\eqref{eq:k*l} holds.  Once the norm is defined,
Corollary~\ref{cr:heat-est2} can be viewed as the correct statement of
Proposition~\ref{pr:heat-est}.

            \subsection{Bounds on $K$ and $H$}

Recall that $L_1, L_2 \in E_{x,y}$ represent operators on
$\Gamma(\Lambda^* TM)$ and the product of operators corresponds to the
$*$ product 
\[
L_1*L_2(x,z,\psi_x, \psi_z) = \int \bint L_1(x,y,\psi_x,
\psi_y)L_2(y,z,\psi_y,\psi_z) d\psi_y\, dy.
\]

\begin{lemma} \label{lm:h-bd}
For bounded $t>0,$ and $H$ as in Eq.~\eqref{eq:h-def}, 
\be
\int H(x,y;t)dy= 1+ \OO(t);
\ee{hint-bd}
for bounded $t_1,t_2>0$ with $t=t_1+t_2$, and  $\epsilon$ as
in \S\ref{ss:norm},
\eqa{hh-bd}{
H(t_1)*H(t_2) = e^{\OO\parens{t_1^{\epsilon/n}}}\sbrace{H(t)+\OO\parens{t^2}},
\nonumber\\
H(t_1)*H(t_2) =
e^{\OO\parens{t_2^{\epsilon/n}}}\sbrace{H(t)+\OO\parens{t^2}}.
}
\end{lemma}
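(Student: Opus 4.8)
The plan is to evaluate both kernels as Gaussian integrals in Riemann normal coordinates, using Lemma~\ref{lm:dsquared} to trade geodesic distances for coordinate distances and the exponential-damping trick from the proof of Lemma~\ref{lm:tnorm-facts}(a) to turn the resulting polynomial errors into powers of $t$. For Eq.~\eqref{eq:hint-bd}, fix $x$ and integrate in normal coordinates centered at $x$, so that $\abs{\vec{x}_y}=\abs{\vec{y}_x}=d(x,y)$ and the volume element is $\parens{1+\OO\parens{\abs{\vec{y}_x}^2}}d\vec{y}_x$. The leading Gaussian integrates to $1$ up to an exponentially small tail from the injectivity-radius cutoff, while the $\OO\parens{\abs{\vec{y}_x}^2}$ Jacobian correction contributes the second moment of a Gaussian of variance $t$, which is $\OO(t)$; together these give $1+\OO(t)$.

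For Eq.~\eqref{eq:hh-bd}, I first note that since $\min(t_1,t_2)^{\epsilon/n}\le t_i^{\epsilon/n}$ it suffices to prove a single estimate with $\min(t_1,t_2)$ in place of $t_1$ (both displayed forms then follow at once), and that by the $t_1\leftrightarrow t_2$, $x\leftrightarrow z$ symmetry I may assume $t_1\le t_2$, so that $t_2\asymp t$. I would write the convolution $\int H(x,y;t_1)H(y,z;t_2)\,dy$ in normal coordinates centered at $x$, where $d(x,y)^2=\abs{\vec{y}_x}^2$ exactly and, by Eq.~\eqref{eq:dist-est}, $d(y,z)^2=\abs{\vec{y}_x-\vec{z}_x}^2+\OO\parens{\abs{\vec{y}_x}^2\abs{\vec{z}_x}^2}$, with Jacobian $1+\OO\parens{\abs{\vec{y}_x}^2}$. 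Completing the square in $\vec{y}_x$ places the Gaussian peak at $\vec{y}_x^{*}=(t_1/t)\vec{z}_x$ with fluctuation variance $t_1t_2/t$, and the flat-space semigroup computation makes the leading term exactly $H(x,z;t)$, with the correct normalization $(2\pi t)^{-n/2}$.

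The main work, and the \emph{main obstacle}, is controlling the two error factors $\exp\sbrace{-\OO\parens{\abs{\vec{y}_x}^2\abs{\vec{z}_x}^2}/2t_2}$ and $1+\OO\parens{\abs{\vec{y}_x}^2}$. Applying the damping trick, I treat the region $\abs{\vec{z}_x}>Ct^{1/2-\delta}$ separately: there $H(x,z;t)$ is exponentially small, and at least one of the two Gaussians in the convolution is as well (since $d(x,y)+d(y,z)\ge d(x,z)$), so both sides are $\OO(t^2)$. Similarly I damp the tail in $\abs{\vec{y}_x}$. In the remaining near-diagonal region one has $\abs{\vec{z}_x}^2\le C^2t^{1-2\delta}$ and $\abs{\vec{y}_x}^2\lesssim\abs{\vec{y}_x^{*}}^2+\abs{u}^2$ with $\abs{\vec{y}_x^{*}}^2=(t_1/t)^2\abs{\vec{z}_x}^2$ and $\abs{u}^2$ averaging to the variance $t_1t_2/t$ against the fluctuation Gaussian. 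Using $t_1\le t_2\asymp t$, each resulting contribution—$(t_1/t)^2\abs{\vec{z}_x}^4/t_2$ and $\abs{\vec{z}_x}^2t_1/t$ from the exponent error, and $\abs{\vec{y}_x^{*}}^2$ and $t_1t_2/t$ from the Jacobian—is $\OO\parens{t_1^{\epsilon/n}}$ once $\delta$ is small, the key elementary bound being $t_1^2t^{-1-c\delta}\le t^{1-\epsilon/n-c\delta}\,t_1^{\epsilon/n}$ after inserting $t_1\le t$.

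Consequently both error factors equal $e^{\OO(t_1^{\epsilon/n})}$, multiplying the leading $H(x,z;t)$, while the exponentially small cutoff tails and the far-diagonal contribution are collected into the additive $\OO(t^2)$ (the bounded multiplicative factor leaving this class unchanged). This yields $H(t_1)*H(t_2)=e^{\OO(t_1^{\epsilon/n})}\sbrace{H(t)+\OO(t^2)}$ under the assumption $t_1\le t_2$, hence the estimate with $\min(t_1,t_2)^{\epsilon/n}$ in general, from which both forms of Eq.~\eqref{eq:hh-bd} follow. The delicate point throughout is that the curvature error from Lemma~\ref{lm:dsquared} carries the factor $1/t_2$, so the argument genuinely relies on choosing the expansion center ($x$) so that this denominator is the \emph{larger} of the two times; this is precisely what forces the reduction to $t_1\le t_2$ before applying the damping trick.
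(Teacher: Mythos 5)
Your proposal is correct, and for Eq.~\eqref{eq:hint-bd} it coincides with the paper's proof. For Eq.~\eqref{eq:hh-bd}, however, you take a genuinely different route in the key step. The paper never completes the square in endpoint coordinates: instead it re-centers Riemann normal coordinates at the interpolated point $u$ on the geodesic from $x$ to $z$ at parameter $\alpha=t_1/t$ (which is exactly the Gaussian peak $\vec{y}_x^{*}=(t_1/t)\vec{z}_x$ in your coordinates), introduces the quantity $E(x,y,z;\alpha)=(1-\alpha)\abs{\vec{x}_y}^2+\alpha\abs{\vec{z}_y}^2-\alpha(1-\alpha)\abs{\vec{z}_x}^2$ so that the two exponents combine as $\abs{\vec{z}_x}^2/t+ (t/t_1t_2)E$, and uses Lemma~\ref{lm:dsquared} to show $E=\abs{\vec{y}_u}^2\sbrace{1+\OO\parens{\alpha(1-\alpha)\abs{\vec{z}_x}^2}}$. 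Because the resulting error is proportional to $\alpha(1-\alpha)=t_1t_2/t^2\leq\min(t_1,t_2)/t$, it is automatically symmetric in $t_1,t_2$, and both displayed estimates in Eq.~\eqref{eq:hh-bd} fall out of a single computation with no case reduction. Your endpoint-centered expansion instead produces the asymmetric curvature error $\OO\parens{\abs{\vec{y}_x}^2\abs{\vec{z}_x}^2}/t_2$ from Lemma~\ref{lm:dsquared}, which, as you correctly diagnose, forces the reduction to $t_1\leq t_2$ (via the symmetry $H(t_1)*H(t_2)(x,z)=H(t_2)*H(t_1)(z,x)$ and the observation that a $\min(t_1,t_2)^{\epsilon/n}$ bound implies both stated forms); your bookkeeping of the four error contributions and the elementary bound $t_1^2t^{-1-c\delta}\leq t^{1-\epsilon/n-c\delta}t_1^{\epsilon/n}$ then closes the estimate. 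One small point of rigor: where you say the $\abs{u}^2$ terms in the exponent ``average'' against the fluctuation Gaussian, the clean statement is that they are absorbed into the Gaussian variance, yielding a factor $\sbrace{1-\OO\parens{t_1t^{-2\delta}}}^{-n/2}=e^{\OO\parens{t_1^{\epsilon/n}}}$ after integration; this is precisely the maneuver the paper performs with its factor $\sbrace{1-C\alpha(1-\alpha)t^\epsilon}^{-n/2}$, so nothing is lost, but it should be stated as an absorption rather than an average. In sum: the paper's interpolated center buys symmetry and a single unified computation; your endpoint center buys a more familiar complete-the-square argument at the cost of a WLOG reduction and slightly heavier error accounting.
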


\begin{proof}
By Eq.~\eqref{eq:g-rnest} the measure
$dx=\sbrace{1+\OO\parens{\abs{\vec{x}_y}^2}}d\vec{x}_y$ where
$d\vec{x}_y$ comes from the inner product on $T_yM,$ so
Eq.~\eqref{eq:hint-bd} becomes
\[\int H\parens{x,y;t} dx = \int (2\pi t)^{-n/2}
e^{-\frac{\abs{\vec{x}_y}^2}{2t}}
\sbrace{1+\OO\parens{\abs{\vec{x}_y}^2}} d\vec{x}_y =
1+\OO\parens{t}.\]

For Eq.~\eqref{eq:hh-bd}, following the $t^{1/2-\delta}$-trick in the
proof of Lemma~\ref{lm:tnorm-facts},  consider first the 
case where
$\abs{\vec{z}_x}$ is greater  than some
multiple of $t^{1/2-\delta}.$ 
Then either $\abs{\vec{x}_y}$ is more than a multiple of
$t_1^{1/2-\delta}$  or $\abs{\vec{y}_z}$ is more than a
multiple of $t_2^{1/2-\delta}$ so that the entire integral 
\[
H(t_1)*H(t_2) (x,z)  = (2\pi t_1)^{-n/2} (2\pi t_2)^{-n/2} \int
e^{-\frac{\abs{\vec{x}_y}^2}{2t_1} -\frac{\abs{\vec{y}_z}^2}{2t_2}}
dy\]
 is bounded
by a multiple of $t^2$ times the integral of a multiple of
$H(x,y;t),$ which by Eq.~\eqref{eq:hint-bd} is bounded by a multiple of
$t^2.$

If $\abs{\vec{z}_x}$ is less than a multiple of $t^{1/2-\delta},$
choose the multiple small enough first of all that $x$ and $z$ are within the
injectivity radius of each other.  Define 
\[E(x,y,z;\alpha)=(1-\alpha)\abs{\vec{x}_y}^2 +
\alpha\abs{\vec{z}_y}^2 - 
\alpha (1-\alpha) \abs{\vec{z}_x}^2\]
on the compact domain where $\alpha \in [0,1]$ and the three lengths
are all well-defined, so that if $\alpha=t_1/t$  then
\[
\frac{\abs{\vec{x}_y}^2}{t_1} +
\frac{\abs{\vec{y}_z}^2}{t_2} = \frac{\abs{\vec{z}_x}^2}{t} +
\frac{t}{t_1t_2} E.
\]
Let $u$ be the point on the geodesic between $x$ and $z$ that is
$\alpha$ of the way from $x$ to $z,$ so that $\vec{u}_x=
\alpha\vec{z}_x$ and $\vec{u}_z= (1-\alpha) \vec{x}_z.$ 
Apply Eq.~\eqref{eq:dist-est}, to rewrite $\abs{\vec{x}_y}^2, \abs{\vec{y}_x}^2$ and $\abs{\vec{z}_y}^2$
in terms of $\abs{\vec{y}_u}$, $\abs{\vec{u}_x}$ and $\abs{\vec{u}_z}$, and then express
these in terms of $\abs{\vec{z}_x}$ to obtain
$E(x,y,z;\alpha)=\abs{\vec{y}_u}^2\sbrace{1 + \OO\parens{\alpha(1-\alpha)\abs{\vec{z}_x}^2}}.$

Choosing $\delta$ and the
coefficient of $t^{1/2-\delta}$
wisely,  $E$ is bounded 
 below by $\abs{\vec{y}_u}^2\sbrace{1-C\alpha(1-\alpha)
   t^{\epsilon}}$ where
 $C$ is small enough so this is always positive.  Also bound $dy$ by
 $\parens{1+D\abs{\vec{y}_u}^2}d\vec{y}_u$ for some $D>0,$ so
\begin{align*}
& H(t_1)*H(t_2) (x,z)
=
\int (2\pi t_1)^{-n/2} (2\pi t_2)^{-n/2}
e^{-\frac{\abs{\vec{x}_y}^2}{2t_1} - \frac{\abs{\vec{z}_y}^2}{2t_2}}
dy \\
& \quad \leq (2 \pi t)^{-n/2} e^{-\frac{\abs{\vec{z}_x}^2}{2t}}\int ( 2
 \pi t_1 t_2/t)^{-n/2}
 e^{-\frac{t\abs{\vec{y}_u}^2}{2t_1t_2}\sbrace{1-C\alpha(1-\alpha)t^\epsilon}}
 \sbrace{1+D\abs{\vec{y}_u}^2}d\vec{y}_u\\
&\quad = H(x,z;t)
 \sbrace{1-C\alpha(1-\alpha)t^\epsilon}^{-n/2} \parens{1+D\alpha
   (1-\alpha)t^\epsilon}^{n/2}.
\end{align*}
From here each estimate in Eq.~\ref{eq:hh-bd} is immediate. In fact,
this shows the bound could be strengthened, replacing the exponential
factor by $1 + \OO(t^\epsilon)$, but the weaker form is sufficient and
more convenient later.
\end{proof}

\begin{lemma} \label{lm:k-bd}
For $t > t_1 > 0$ bounded, $K(t_1)=K(x,y,\psi_x, \psi_y;t_1)$ in
Eq.~\eqref{eq:K-def}, $\epsilon$  and $\abs{\,\cdot\,}_t$  as
in \S\ref{ss:norm}, 
and $\overline{F} \in \overline{E}_{x,y},$
\eqa{ka-bd}{
\abs{\overline{K}(x,y;t_1)\overline{F}}_t&= e^{\OO\parens{t_1^{\epsilon/n}} }\abs{\overline{F}}_t
 \sbrace{H(x,y;t_1) + \OO(t_1)}\nonumber \\
\abs{\overline{F}\overline{K}(x,y;t_1)}_t&= e^{\OO\parens{t_1^{\epsilon/n}} }\abs{\overline{F}}_t
 \sbrace{H(x,y;t_1) + \OO(t_1)}
}
and if 
$f(x,\psi_x) \in \Gamma(\Lambda^* TM).$
\be
\lim_{t \to 0} K(t) * f = f.
\ee{tlim}
\end{lemma}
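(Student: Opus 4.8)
The plan is to read Eq.~\eqref{eq:ka-bd} as the ``assembled'' form of the building-block estimate Eq.~\eqref{eq:almost-contract} of Lemma~\ref{lm:tnorm-facts}. Writing $K=(P+L)H$ as in \S\ref{ss:ak} and recalling $L=\overline{L}P$, the scalar factor $H$ commutes past everything, so the $\overline{E}_{x,y}$-representative of $K$ factors as
$\overline{K}(x,y;t_1)=H(x,y;t_1)\parens{I+\overline{L}(x,y;t_1)}$,
where $I$ is the identity endomorphism (the representative of $P$) and $\overline{L}$ collects the curvature corrections produced by expanding the exponential in Eq.~\eqref{eq:K-def} and carrying out the $\rho$-integral. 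First I would record, using the examples following the definition of rank in \S\ref{ss:norm}, that every monomial of $\overline{L}$ has rank at least $2$ and degree at most $2$: the $\Ricci\parens{\vec{x}_y,\vec{x}_y}$ and $t_1\scalar$ terms are degree $0$ and rank $2$; the curvature term $\bracket{\rho,R\sbrace{\vec{x}_y,\psi_x}\vec{x}_y}$ is degree $1$; and the two four-fermi terms $t_1\parens{\rho,R\sbrace{\psi,\psi}\rho}$ are degree $2$, rank $2$, and---crucially---carry an explicit factor of $t_1$. These are precisely the two shapes (degree at most $1$; or degree $2$, rank $2$, and a multiple of $t_1$) for which Eq.~\eqref{eq:almost-contract} is stated.

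For the first line of Eq.~\eqref{eq:ka-bd} I would expand $\overline{K}(t_1)\overline{F}=H(t_1)\overline{F}+\overline{L}H(t_1)\overline{F}$. The identity term contributes $H(t_1)\abs{\overline{F}}_t$ exactly, since $H(t_1)$ is a positive scalar. To each linear monomial of $\overline{L}$ I would apply the first inequality of Eq.~\eqref{eq:almost-contract}, obtaining $Bt_1^{\epsilon/n}\abs{\overline{F}}_t\sbrace{H(t_1)+Dt_1}$ per term. The finitely many cross terms generated by the exponential have degree at least $2$ without a protecting factor of $t_1$, hence rank at least $3$; on these I would use Eq.~\eqref{eq:rank-bd} to write the term times $H(t_1)$ as $\overline{F}_1H(t_1)+\overline{F}_2t_1$ with $\abs{\overline{F}_i}_t=\OO\parens{t_1^{1+\epsilon}}$, and then Eq.~\eqref{eq:norm-prod} to absorb the composition with $\overline{F}$, so these contribute at strictly higher order still. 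Summing the finitely many terms by the triangle inequality gives
\[
\abs{\overline{K}(t_1)\overline{F}}_t\leq\parens{1+\OO\parens{t_1^{\epsilon/n}}}\abs{\overline{F}}_t\sbrace{H(t_1)+\OO\parens{t_1}},
\]
and $1+\OO\parens{t_1^{\epsilon/n}}=e^{\OO\parens{t_1^{\epsilon/n}}}$ yields the stated form. The second line is identical after replacing each use of the first inequality of Eq.~\eqref{eq:almost-contract} by its second, since $\overline{F}\,\overline{K}(t_1)=H(t_1)\overline{F}+\overline{F}\,\overline{L}H(t_1)$.

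I expect the main obstacle to be the bookkeeping that makes Eq.~\eqref{eq:almost-contract} applicable term by term. Two points need care. First, $\overline{K}$ and $\overline{L}$ live in $\End_0\parens{\Lambda^*T_xM}$, while the corrections in Eq.~\eqref{eq:K-def} are most naturally written in Riemann normal coordinates about $y$; but moving the base point, replacing $\ptxystar\rho$ by $\rho$, and symmetrizing $x\leftrightarrow y$ each alter $K$ only by $H$ times rank-$3$ terms (the $\sim$-relation of \S\ref{ss:ak}), and such terms are harmless by exactly the rank-$\geq 3$ argument of the previous paragraph. Hence the norm estimate is $\sim$-invariant and may be verified on any convenient representative. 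Second, the norm is taken at scale $t$ while the kernel sits at scale $t_1<t$; this mismatch is precisely what Eq.~\eqref{eq:almost-contract} packages, its proof running the $t^{1/2-\delta}$-trick at scale $t_1$, so I would invoke it rather than re-derive the near/far-diagonal split.

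Finally, for Eq.~\eqref{eq:tlim} I would use $\parens{K(t)*f}(x,\psi_x)=\int\sbrace{\overline{K}(x,y;t)\ptxystar f}(\psi_x)\,dy$ together with the same splitting $\overline{K}(t)=H(t)\parens{I+\overline{L}(t)}$. The identity piece $\int H(x,y;t)\,\ptxystar f\,dy$ tends to $f(x,\psi_x)$ by the standard approximate-identity argument: $H(\,\cdot\,;t)$ is a Gaussian of width $\sqrt{t}$ concentrating on the diagonal with total mass $1+\OO(t)$ by Eq.~\eqref{eq:hint-bd}, while $\ptxystar$ tends to the identity as $y\to x$ and $f$ is continuous. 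Each correction monomial has rank at least $2$, so by Eq.~\eqref{eq:rank-bd} its product with $H(t)$ has $\abs{\,\cdot\,}_t=\OO\parens{t^{1/2+\epsilon}}$, forcing the integrated contribution to $0$ as $t\to0$. Hence $\lim_{t\to0}K(t)*f=f$.
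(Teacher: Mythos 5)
Your overall route is exactly the paper's: its proof writes $\overline{K}=\parens{1+\overline{L}}H$ with $\overline{L}$ a sum of \emph{products} of factors of precisely your two shapes (rank $2$ and degree at most $1$; or a multiple of $t_1$ of degree $2$) and invokes Eq.~\eqref{eq:almost-contract}, while Eq.~\eqref{eq:tlim} is handled by the same approximate-identity expansion you describe. The one step of yours that fails as written is the disposal of the cross terms. You assert that every higher-order term of the exponential has ``degree at least $2$ without a protecting factor of $t_1$, hence rank at least $3$.'' That is false for the terms in which a four-fermi factor appears twice or more: the term $t_1^2\parens{\rho,R\sbrace{\psi_x,\psi_x}\rho}\parens{\rho,R\sbrace{\psi_y,\psi_y}\rho}$ has $r=2$, $s=0$, and degree $k=4$, so by the paper's definition its rank is $\min(2r+s,\,2r+s+2-k)=\min(4,2)=2$, not $\geq 3$; the same holds for every pure power $t_1^m\times\parens{\text{degree }2m}$ with $m\geq 2$. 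For a rank-$2$ term, Eq.~\eqref{eq:rank-bd} yields only $\abs{\overline{F}_i}_t=\OO\parens{t_1^{1/2+\epsilon}}$, and after the factor $Ct^{-1}$ from Eq.~\eqref{eq:norm-prod} you are left with $\OO\parens{t^{-1}t_1^{1/2+\epsilon}}$, which is unbounded as $t\sim t_1\to 0$. So the rank-bound-plus-product-norm route does not close for exactly this family of terms.

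The repair is the tool you already cite, used the way the paper uses it: treat $\overline{L}$ as a sum of products and apply Eq.~\eqref{eq:almost-contract} (equivalently, the degree-$k$ multiplication estimate from the proof of part (b) of Lemma~\ref{lm:tnorm-facts}) once per factor. Each four-fermi factor is a multiple of $t_1$ of degree $2$, so it multiplies the $t$-norm by at most a constant times $t_1\,t^{2(-1/2+\epsilon/n)}\leq t_1^{\epsilon/n}$, and iterating gives $\parens{t_1^{\epsilon/n}}^m$ for the $m$-th power; alternatively, note directly that a $t_1^m\times$ degree-$2m$ term multiplies the $t$-norm by $\OO\parens{t_1^m t^{-m+2m\epsilon/n}}=\OO\parens{t_1^{\epsilon/n}}$ since $t_1\leq t$. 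With that substitution your argument coincides with the paper's; the remainder of your proposal --- the identity piece, the genuinely rank-$\geq 3$ cross terms lacking $t_1$ protection, the $\sim$-invariance remark, and the approximate-identity proof of Eq.~\eqref{eq:tlim} --- is sound.
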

\begin{proof}
For Eq.~\eqref{eq:ka-bd}, notice that $\overline{K}$ can be written
as $\parens{1+\overline{L}}H,$ where $\overline{L}$ is a sum of
products of terms which are either rank $2$ and degree at most $1$ (namely, $e^{ \frac{1}{12} \Ricci_{kl}
 x^k x^l + \frac{t_1 \scalar}{12}}-1,$ or $- \frac{1}{6} 
R_{kmj}^{\,\,\,\,\,\,\,\,\,l}x^m x^j\psi^k \frac{\partial\,}{\partial \psi^l}$) or a
multiple of $t_1$ and degree at most $2$ (namely, $-\frac{t}{4} \parens{\frac{\partial\,}{\partial \psi}, \overline{R}\sbrace{\psi, \psi}
 \frac{\partial\,}{\partial \psi}}$).  The result follows from
Eq.~\eqref{eq:almost-contract}.

For Eq.~\eqref{eq:tlim}
let $R$ be the the set of $x$ for which $K(x,y;t)$ is nonzero for a
given $y.$ Notice $K(x,y,\psi_x, \psi_y;t) = He^{ i \bracket{\rho, \psi_x - \psi_y}}\sbrace{1 +
\OO\parens{\abs{\vec{y}_x}+t}}$, and, if $f$ is smooth, $f(y, \psi_y)
= f(x, \psi_y) +  \OO\parens{\abs{\vec{y}_x}}$, so
\begin{align*}
\sbrace{K(t)*f}(x,\psi_x)&= \int_R \bint K(x,y,\psi_x, \psi_y;t)f(y,
\psi_y) d\psi_y\, dy\\
&= \int_R \bint \bint (2\pi t)^{-n/2} e^{-\frac{\abs{\vec{y}_x}^2}{2t}
 + i \bracket{\rho, \psi_x - \psi_y}} \sbrace{f(x, \psi_y) +
 \OO\parens{\abs{\vec{y}_x}+t}} d\rho \, d\psi_y\, d\vec{y}_x\\
&= \int_R (2\pi t)^{-n/2} e^{-\frac{\abs{\vec{y}_x}^2}{2t}
 } \sbrace{f(x, \psi_x) +
 \OO\parens{\abs{\vec{y}_x}+t}} d\vec{y}_x\\
&= f(x,\psi_x)+ \OO\parens{t^{1/2}}.
\end{align*}
\end{proof}

\subsection{The Norm on Kernels}

Error terms arising in the following sections
will typically be kernels of the form
\[\sbrace{F_1\parens{x,y;t} H\parens{x,y;t} + F_2\parens{x,y;t}}
P\parens{x,y}\]
where $F_1,F_2\in E_{x,y}.$  The size of $\abs{F_i}_t$ gives a rough
measure of the size of these errors, but more precision requires additional details on
the size of $F_2.$  
\begin{proposition} \label{pr:norm}
Given $\epsilon$ as in \S\ref{ss:norm}, there exists a $D>0$ such
that, defining  $\norm{L}_t$ for $L \in E_{x,y}$ to be the least number such that 
\be
\abs{\overline{L}}_t\leq \norm{L}_t\sbrace{H(t) + Dt},
\ee{norm-def} 
this norm satisfies, for $K$ defined in Eq.~\eqref{eq:K-def} and for bounded and positive $t_1$ and $t_2$ with
$t_1+t_2=t$,
\eqa{k*l}{
\norm{K(t_1)*L}_t &\leq e^{\OO\parens{t_1^{\epsilon/n}} + \OO\parens{t^2}}\norm{L}_{t_2} \nonumber\\
\norm{L*K(t_2)}_t &\leq e^{\OO\parens{t_2^{\epsilon/n}} + \OO\parens{t^2}}\norm{L}_{t_1}
}
and
\be
\norm{L_1*L_2}_t = \OO\parens{t^{-1}} \norm{L_1}_{t_1} \norm{L_2}_{t_2}
\ee{l*l}
\end{proposition}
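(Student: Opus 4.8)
The plan is to prove all three bounds by a single pattern: rewrite each $*$ product through the endomorphism-valued representatives $\overline{\,\cdot\,}$, move the fiberwise norm $\abs{\,\cdot\,}_t$ inside the $y$-integral, estimate the integrand using the multiplicative facts already proved, and finish the $y$-integral with the Gaussian-convolution estimates of Lemma~\ref{lm:h-bd}. Since the $*$ product is composition of operators, for fixed $x,z$ the representative $\overline{K(t_1)*L}(x,z)$ is the integral over $y$ of the product in $\End_0(\Lambda^* T_xM)$ of $\overline{K}(x,y;t_1)$ with the parallel transport to $x$ of $\overline{L}(y,z)$, and symmetrically for the other products. Parallel transport is an isometry, so it carries orthonormal frames to orthonormal frames and preserves $\abs{\,\cdot\,}_t$ (which by Eq.~\eqref{eq:norm-def1} depends only on the inner-product norm and the degree); the holonomy incurred in composing transports around the geodesic triangle $x,y,z$ contributes only higher-rank terms, controlled by the same estimates. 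The triangle inequality for the integral then gives, pointwise,
\[\abs{\overline{K(t_1)*L}(x,z)}_t\leq\int\abs{\overline{K}(x,y;t_1)\,\overline{F}(y,z)}_t\,dy,\]
where $\overline{F}(y,z)$ is the transported $\overline{L}(y,z)$ and $\abs{\overline{F}(y,z)}_t=\abs{\overline{L}(y,z)}_t$.

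For Eq.~\eqref{eq:k*l} I would bound the integrand by Eq.~\eqref{eq:ka-bd}, obtaining $e^{\OO(t_1^{\epsilon/n})}\abs{\overline{L}(y,z)}_t[H(x,y;t_1)+\OO(t_1)]$, and then replace $\abs{\overline{L}(y,z)}_t$ using monotonicity (Lemma~\ref{lm:tnorm-facts}(c)) followed by the defining inequality Eq.~\eqref{eq:norm-def}: $\abs{\overline{L}(y,z)}_t\leq\abs{\overline{L}(y,z)}_{t_2}\leq\norm{L}_{t_2}[H(y,z;t_2)+Dt_2]$. Expanding $[H(t_1)+\OO(t_1)][H(t_2)+Dt_2]$ and integrating term by term, the leading piece is $H(t_1)*H(t_2)=e^{\OO(t_1^{\epsilon/n})}[H(t)+\OO(t^2)]$ by Eq.~\eqref{eq:hh-bd}, while each remaining term is a constant multiple of $t_1$ or $t_2$ times $\int H\,dy=1+\OO(t)$ from Eq.~\eqref{eq:hint-bd}, hence $\OO(t)$ in total. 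This yields $\abs{\overline{K(t_1)*L}}_t\leq\norm{L}_{t_2}[e^{\OO(t_1^{\epsilon/n})}H(t)+\OO(t)]$.

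The proof of Eq.~\eqref{eq:l*l} is identical except that the integrand is bounded by the product estimate Lemma~\ref{lm:tnorm-facts}(d), $\abs{\overline{L_1}\,\overline{F}}_t\leq Ct^{-1}\abs{\overline{L_1}}_t\abs{\overline{F}}_t$, contributing the factor $\OO(t^{-1})$, after which Eq.~\eqref{eq:norm-def} is applied to both factors and the $y$-integral proceeds as above. The inequality for $L*K(t_2)$ in Eq.~\eqref{eq:k*l} runs the same way, now invoking the second halves of Eqs.~\eqref{eq:ka-bd} and~\eqref{eq:hh-bd}, which carry $t_2^{\epsilon/n}$ in place of $t_1^{\epsilon/n}$; here the symmetry of $K$ together with isometry-invariance of the norm lets one apply Eq.~\eqref{eq:ka-bd} to the transported factor $\overline{K}(y,z;t_2)$. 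To read off $\norm{\,\cdot\,}_t$ one rewrites each bound, of the shape $(\text{coefficient})H(t)+(\text{constant})t$, in the form $\norm{\,\cdot\,}_t[H(t)+Dt]$: choosing $D$ large enough that the accumulated $\OO(t)$ satisfies $\OO(t)\leq Dt$ absorbs the polynomial remainder into $Dt$, leaving $\norm{K(t_1)*L}_t\leq e^{\OO(t_1^{\epsilon/n})}\leq e^{\OO(t_1^{\epsilon/n})+\OO(t^2)}$ times $\norm{L}_{t_2}$, and $\norm{L_1*L_2}_t=\OO(t^{-1})\norm{L_1}_{t_1}\norm{L_2}_{t_2}$.

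I expect the main obstacle to lie in this final bookkeeping: confirming that every polynomial-in-$t$ error genuinely collapses into one $\OO(t)$ absorbable into $Dt$ by a \emph{single} choice of $D$, the same $D$ fixed once and for all in the definition of the norm, while the Gaussian part retains exactly the multiplicative $e^{\OO(t_1^{\epsilon/n})}$ (respectively $e^{\OO(t_2^{\epsilon/n})}$) structure, since that structure is precisely what makes the product over a refining partition converge in the later sections. The secondary point requiring care is the isometry/holonomy identification of the integrand as an endomorphism product, but that is routine compared with matching the error terms to the $[H(t)+Dt]$ template.
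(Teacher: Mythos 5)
Your proposal follows the paper's own route through all the intermediate estimates: the representation $\overline{K(t_1)*L}(x,z)=\int\overline{K}(x,y;t_1)\ptxystar\overline{L}(y,z)\ptyxstar\,dy$, invariance of $\abs{\,\cdot\,}_t$ under parallel transport, Eq.~\eqref{eq:ka-bd} (respectively Eq.~\eqref{eq:norm-prod} with monotonicity) on the integrand, the defining inequality~\eqref{eq:norm-def}, and the Gaussian estimates~\eqref{eq:hint-bd} and~\eqref{eq:hh-bd} for the $y$-integral. The gap is exactly at the point you yourself flag as the main obstacle, and your resolution of it does not work. After integrating, the bound has the form
\[
\abs{\overline{K(t_1)*L}}_t \leq e^{\OO\parens{t_1^{\epsilon/n}}}\norm{L}_{t_2}\sbrace{H(t) + Dt + D_2t^2 + \parens{2DD_3+D^2\vol(M)}\,t_1t_2},
\]
with $D$ the constant from Eq.~\eqref{eq:ka-bd}, $D_2$ from the $\OO\parens{t^2}$ in Eq.~\eqref{eq:hh-bd}, and $D_3$ from Eq.~\eqref{eq:hint-bd}. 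You propose to absorb the polynomial remainder into $Dt$ by ``choosing $D$ large enough,'' concluding the stronger bound $\norm{K(t_1)*L}_t\leq e^{\OO\parens{t_1^{\epsilon/n}}}\norm{L}_{t_2}$. This cannot be made to work, for two reasons. First, the remainder grows with $D$: the terms $2DD_3t_1t_2$ and $D^2\vol(M)t_1t_2$ are linear and quadratic in $D$, so enlarging $D$ inflates the very error you are trying to absorb faster than it inflates the absorber $Dt$. Second, and decisively, the term $D_2t^2$ is independent of both $D$ and $t_1$: at any point $(x,z)$ where $H(t)$ vanishes or is exponentially small (such points exist, and $K(t_1)*L$ need not vanish there, which is the whole reason the norm carries the $+Dt$ term), the required domination reads $D_2t^2 \leq \parens{e^{\OO\parens{t_1^{\epsilon/n}}}-1}\sbrace{H(t)+Dt}$, whose right side tends to $0$ as $t_1\to 0$ with $t$ fixed while the left side stays put. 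No choice of $D$ repairs this.

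This is precisely why Eq.~\eqref{eq:k*l} is stated with the factor $e^{\OO\parens{t_1^{\epsilon/n}}+\OO\parens{t^2}}$: the second term in the exponent is not a free weakening, as your chain $e^{\OO\parens{t_1^{\epsilon/n}}}\leq e^{\OO\parens{t_1^{\epsilon/n}}+\OO\parens{t^2}}$ suggests, but the receptacle for the leftover quadratic errors. The paper fixes $D$ once and for all (it is produced by Lemma~\ref{lm:k-bd}, before the norm is defined), keeps $Dt$ exactly, and absorbs $D_2t^2+\parens{2DD_3+D^2\vol(M)}t_1t_2=\OO\parens{t^2}$ into a multiplicative factor $e^{ct^\alpha}$, using $\parens{e^{ct^\alpha}-1}\sbrace{H(t)+Dt}\geq ct^\alpha Dt$. (Strictly this forces $\alpha=1$, i.e.\ a factor $e^{\OO(t)}$, since with the paper's stated $\alpha=2$ one only gets $cDt^3$ where $H=0$; an $\OO(t)$ exponent is equally harmless in Proposition~\ref{pr:cauchy} and beyond, where all that matters is that these prefactors are bounded and tend to $1$ as $t\to0$.) Note that your treatment of Eq.~\eqref{eq:l*l} is fine as it stands, because there an overall constant is permitted; it is only the two bounds of Eq.~\eqref{eq:k*l}, whose prefactor must tend to $1$ as $t_1\to 0$ for the partition products to converge later, that need this more careful accounting.
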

\begin{proof}
Observe that 
\[ \overline{L_1*L_2}(x,z)= \int \overline{L}_1(x,y) \ptxystar
\overline{L_2}(y,z) \ptyxstar dy,\]
 and that $\abs{\,\cdot\,}_t$ is unchanged by parallel transport.  
For the first line of Eq.~\eqref{eq:k*l}, apply this in the special
case $L_1 = K(t_1)$. Then, using Eq.~\eqref{eq:ka-bd} to define
$b_1,D>0$, applying property (c) of Lemma~\ref{lm:tnorm-facts} with $t_2
< t$, using Eq.~\eqref{eq:hh-bd} to define
 $b_2,D_2>0$, and using Eq.~\eqref{eq:hint-bd} to define
$D_3$,
\begin{align*}
\abs{\overline{K(t_1)*L}}_t&\leq \int \abs{\overline{K}(x,y;t_1)\ptxystar\overline{L}(y,z)\ptyxstar}_tdy\\
&\leq  e^{b_1t_1^{\epsilon/n}} \int \abs{\overline{L}}_t 
\sbrace{H(x,y;t_1)+D t_1}dy\\
&\leq  e^{b_1t_1^{\epsilon/n}} \norm{L}_{t_2}\int \sbrace{H(y,z;t_2) + Dt_2}
\sbrace{H(x,y;t_1)+D t_1}dy\\
&\leq  e^{b_1t_1^{\epsilon/n}} \norm{L}_{t_2} \Big[e^{b_2t_1^{\epsilon/n}}
 H(x,z;t) + Dt_2 + D{t_1} +D_2t^2  \\
& \qquad \qquad  \qquad  \qquad
 + D D_3 t_1 t_2 + D D_3 t_1 t_2 + D^2 t_1t_2 \vol(M) \Big]\\
& \leq e^{(b_1 + b_2)t_1^{\epsilon/n}} \norm{L}_{t_2}\sbrace{ H(x,z;t) +
Dt + D_2 t^2 + \parens{2D D_3 + D^2 \vol(M)} t_1t_2}\\
&\leq  e^{bt_1^{\epsilon/n}+ ct^2} \norm{L}_{t_2} \sbrace{
 H(x,z;t) + Dt}.
\end{align*}
Here choose  $b \geq b_1+b_2$ and
$c\geq D_2 + 2DD_3 + D^2 \vol(M).$   
 The second version of Eq.~\eqref{eq:k*l} works
in exactly the same fashion.

For Eq.~\eqref{eq:l*l}, for some $C_1>0$ coming from
Eq.~\eqref{eq:norm-prod}, $C_2>0$ coming from Eq.~\eqref{eq:hh-bd} and
$C_3>0$ coming from Eq.~\eqref{eq:hint-bd} and~\eqref{eq:hh-bd} (and depending on $D$)
\begin{align*}
\abs{\overline{L_1*L_2}}_t & \leq \int \abs{\overline{L_1}(x,y)\ptxystar
\overline{L_2}(y,z) \ptyxstar}_tdy\\
& \leq C_1 t^{-1}\int \abs{\overline{L_1}(x,y)}_{t_1}
\abs{\overline{L_2}(y,z)}_{t_2} dy\\
& \leq C_1\norm{L_1}_{t_1} \norm{L_2}_{t_2} t^{-1}\int
\sbrace{H(x,y;t_1) + Dt_1} \sbrace{H(y,z;t_2) + Dt_2}dy\\
& \leq C_1\norm{L_1}_{t_1} \norm{L_2}_{t_2} t^{-1}\sbrace{C_2H(x,z;t)
 + C_3 D t}\\
& \leq C_4\norm{L_1}_{t_1} \norm{L_2}_{t_2} t^{-1}\sbrace{H(x,z;t)
 + D t}
\end{align*}
for $C_4>C_1 \max(C_2,C_3).$
\end{proof}
With this definition, Eq.~\ref{eq:heat-est} of
Proposition~\ref{pr:heat-est} immediately implies
\begin{corollary}
If $\epsilon$ is as in \S\ref{ss:norm}, $K$ as in
Eq.~\eqref{eq:K-def}  and $\Delta$ as in Eq.~\eqref{eq:Delta-def}
\eqa{heat-est2}{
\frac{\partial}{\partial t} K &= -\frac{1}{2} \Delta_x K +
\OO_t\parens{t^\epsilon} \nonumber \\
&= -\frac{1}{2} \Delta_y K +
\OO_t\parens{t^\epsilon} 
}
where $\OO_t\sbrace{f(t)}$ refers to an arbitrary kernel bounded in
$\norm{\,\cdot\,}_t$ by a multiple of $f(t).$ \label{cr:heat-est2}
\end{corollary}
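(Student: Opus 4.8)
The plan is to derive the Corollary directly from Eq.~\eqref{eq:heat-est} of Proposition~\ref{pr:heat-est} by solving that identity for $\partial K/\partial t$ and then checking that the resulting error kernel is $\OO_t(t^\epsilon)$, i.e.\ has $\norm{\,\cdot\,}_t$ bounded by a multiple of $t^\epsilon$. The entire content is thus a translation of the fiberwise estimate $\abs{F_i}_t = \OO(t^\epsilon)$ furnished by the Proposition into the kernel norm $\norm{\,\cdot\,}_t$ of Eq.~\eqref{eq:norm-def}.

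First I would rewrite the two lines of Eq.~\eqref{eq:heat-est} as $\partial K/\partial t = -\frac{1}{2}\Delta_x K + E_x$ and $\partial K/\partial t = -\frac{1}{2}\Delta_y K + E_y$, where $E_x = F_1 H + F_2 t$ and $E_y = F_3 H + F_4 t$. Since $\OO_t(t^\epsilon)$ is by definition any kernel with $\norm{\,\cdot\,}_t = \OO(t^\epsilon)$, it suffices to show $\norm{E_x}_t = \OO(t^\epsilon)$ and $\norm{E_y}_t = \OO(t^\epsilon)$.

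Next I would bound $\abs{\overline{E_x}}_t$ as a function of $(x,y)$. Because $H \geq 0$ and $t > 0$ are scalars and $\abs{\,\cdot\,}_t$ is homogeneous, $\abs{\overline{E_x}}_t \leq \abs{\overline{F_1}}_t\, H + \abs{\overline{F_2}}_t\, t$; inserting $\abs{F_1}_t, \abs{F_2}_t = \OO(t^\epsilon)$ gives $\abs{\overline{E_x}}_t \leq C t^\epsilon(H + t)$ for some constant $C$. As the $D$ in Eq.~\eqref{eq:norm-def} is a fixed positive number, $H + t \leq \max(1,1/D)(H + Dt)$ pointwise, so $\abs{\overline{E_x}}_t \leq C' t^\epsilon(H(t) + Dt)$. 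Since $\norm{E_x}_t$ is by definition the least multiplier for which Eq.~\eqref{eq:norm-def} holds, this yields $\norm{E_x}_t \leq C' t^\epsilon$, and the argument for $E_y$ is identical.

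The step deserving the most care — though it is hardly an obstacle, since the Corollary is genuinely immediate — is the matching in the previous paragraph: the error kernel splits into an $H$-piece, $F_1 H$, and a bare-$t$ piece, $F_2 t$, and both must be absorbed uniformly into the single template $H(t) + Dt$ with the common exponent $t^\epsilon$. This succeeds precisely because $\norm{\,\cdot\,}_t$ in Eq.~\eqref{eq:norm-def} was engineered around exactly this form of error: the Gaussian part lands on $H(t)$ and the constant-in-space part lands on $Dt$, while the degree-dependent powers of $t^{-1/2+\epsilon/n}$ from Eq.~\eqref{eq:norm-def1} are already folded into the hypothesis $\abs{F_i}_t = \OO(t^\epsilon)$ and so do not reappear.
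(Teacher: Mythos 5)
Your proposal is correct and follows essentially the same route as the paper, which presents this Corollary as an immediate consequence of Proposition~\ref{pr:heat-est} once the norm $\norm{\,\cdot\,}_t$ of Eq.~\eqref{eq:norm-def} is in place. The absorption of $F_1H + F_2t$ into the template $H(t)+Dt$ that you spell out (using the fixed constant $D$) is exactly the routine verification the paper leaves implicit in the word ``immediately.''
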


\subsection{Approximate Semigroup Property}

\begin{proposition} \label{pr:kk-est}
For $K(t)=K(x,y,\psi_x, \psi_y;t)$ in Eq.~\eqref{eq:K-def}, the norm
$\norm{\,\cdot\,}_t$ as in Proposition~\ref{pr:norm} and
$t=t_1+t_2>0$ 
\be
\norm{K(t_1) * K(t_2) - K(t)}_t = \OO\parens{t^{1+\epsilon}}
\ee{kk-est}
\end{proposition}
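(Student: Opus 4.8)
The plan is to interpolate between $K(t_1)*K(t_2)$ and $K(t)$ by a Duhamel (variation-of-parameters) argument, using that $K$ almost solves the heat equation (Corollary~\ref{cr:heat-est2}) and is almost a contraction in the norm (Eq.~\eqref{eq:k*l}). For $s\in(0,t_2]$ set
\[
\Phi(s)=K(t-s)*K(s),
\]
so that $\Phi(t_2)=K(t_1)*K(t_2)$, while Eq.~\eqref{eq:tlim} of Lemma~\ref{lm:k-bd} shows $K(s)\to\mathrm{Id}$ as $s\to0^+$, hence $\Phi(s)\to K(t)$. Applying the fundamental theorem of calculus on $[\eta,t_2]$ and letting $\eta\to0^+$,
\[
K(t_1)*K(t_2)-K(t)=\int_0^{t_2}\frac{d\Phi}{ds}\,ds .
\]
Because $\norm{\,\cdot\,}_t$ is a genuine norm (a supremum of ratios of $\abs{\,\cdot\,}_t$ to the positive weight $H(t)+Dt$), Minkowski's integral inequality reduces the problem to bounding $\norm{d\Phi/ds}_t$.

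For $0<s<t$ both arguments $t-s$ and $s$ stay bounded away from $0$, so $\Phi$ is smooth in $s$ and
\[
\frac{d\Phi}{ds}=-\frac{\partial K}{\partial t}(t-s)*K(s)+K(t-s)*\frac{\partial K}{\partial t}(s).
\]
I would substitute Corollary~\ref{cr:heat-est2}, choosing in each factor the version whose Laplacian differentiates the variable $y$ integrated over in the $*$-product: the $\Delta_y$-form for $K(t-s)(x,y)$, and the $\Delta_x$-form for $K(s)(y,z)$, which after relabelling differentiates its first slot $y$. Writing $\frac{\partial K}{\partial t}(t-s)=-\tfrac12\Delta_yK(t-s)+E_1$ and $\frac{\partial K}{\partial t}(s)=-\tfrac12\Delta_yK(s)+E_2$ with $\norm{E_1}_{t-s}=\OO\parens{(t-s)^\epsilon}$ and $\norm{E_2}_s=\OO\parens{s^\epsilon}$, the two Laplacian contributions combine into $\tfrac12(\Delta_yK(t-s))*K(s)-\tfrac12K(t-s)*(\Delta_yK(s))$, which vanishes identically: writing $\mathcal{K}_\tau$ for the operator with kernel $K(\tau)$, each summand is the kernel of the same composite $\mathcal{K}_{t-s}\circ\Delta_{\text{LdR}}\circ\mathcal{K}_s$, the identification using the self-adjointness of $\Delta_{\text{LdR}}$ on the closed manifold $M$. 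Thus $d\Phi/ds=-E_1*K(s)+K(t-s)*E_2$.

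It then remains to bound the two error terms. Applying the second line of Eq.~\eqref{eq:k*l} with parameters $t-s$ and $s$ gives $\norm{E_1*K(s)}_t\le e^{\OO\parens{s^{\epsilon/n}}+\OO\parens{t^2}}\norm{E_1}_{t-s}=\OO\parens{(t-s)^\epsilon}$, and the first line gives $\norm{K(t-s)*E_2}_t=\OO\parens{s^\epsilon}$, the exponential prefactors being bounded since $t$ is bounded. Hence $\norm{d\Phi/ds}_t=\OO\parens{s^\epsilon+(t-s)^\epsilon}$, and
\[
\norm{K(t_1)*K(t_2)-K(t)}_t\le\int_0^{t_2}\OO\parens{s^\epsilon+(t-s)^\epsilon}\,ds=\OO\parens{t^{1+\epsilon}},
\]
since $\int_0^{t_2}s^\epsilon\,ds=\OO\parens{t_2^{1+\epsilon}}$ and $\int_0^{t_2}(t-s)^\epsilon\,ds\le t_2\,t^\epsilon$, both $\OO\parens{t^{1+\epsilon}}$ as $t_2\le t$.

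The step I expect to be the main obstacle is the exact cancellation of the Laplacian terms, which rests on moving $\Delta_{\text{LdR}}$ from one factor onto the other under the $\psi_y$-Berezin integral and the $y$-integration. Since $K$ is cut off at the injectivity radius, this integration by parts formally produces boundary contributions at the cutoff locus; but there $H$ and all its derivatives are exponentially small in $t$ (by the $t^{1/2-\delta}$-trick used throughout the paper), so these contributions are $\OO(t^N)$ for every $N$ and are absorbed into the $\OO\parens{t^{1+\epsilon}}$ estimate. Checking this carefully, along with the legitimacy of differentiating the $*$-product in $s$ and of the limit defining $\Phi(0^+)$, is the only genuinely delicate part; the remainder is a direct application of the contraction and heat-equation estimates already in hand.
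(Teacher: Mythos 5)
Your strategy is the paper's own: both proofs write $K(t_1)*K(t_2)-K(t)$ as the integral of $\frac{d}{ds}\sbrace{K(t-s)*K(s)}$, substitute the approximate heat equation of Corollary~\ref{cr:heat-est2}, bound the error terms with Eq.~\eqref{eq:k*l}, and dispose of the two Laplacian terms by moving $\Delta_{\text{LdR}}$ across the $y$-integral. (Your claim that these terms ``vanish identically'' by self-adjointness is already inconsistent with your own closing paragraph: since $K$ is cut off at the injectivity radius $r$, the integration by parts takes place over the region $\{d(x,y)<r\}\cap\{d(y,z)<r\}$ and produces boundary terms, which the paper keeps explicitly.) The genuine gap is your blanket assertion that these boundary contributions are $\OO(t^N)$ for every $N$. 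The exponential damping you invoke applies only to the factor whose own two arguments are a distance $r$ apart; the other factor at the same boundary point need not be small. Concretely, suppose $d(x,z)$ is close to $r$ (or larger, where $K(t)\equiv 0$ while $K(t_1)*K(t_2)$ is not). Then the boundary component $\{d(y,z)=r\}$ contains points $y$ with $d(x,y)$ arbitrarily small, where $K(t-s)(x,y)$ is of size $(t-s)^{-n/2}$, as large as $t_1^{-n/2}$, while the damping supplied by the $K(s)$ factor is only $e^{-r^2/2s}\leq e^{-r^2/2t}$. Since $t_1$ may be arbitrarily small relative to $t$ (say $t_1=e^{-A/t}$ with $A$ large), no estimate of the form $e^{-r^2/2t}\,t_1^{-n/2}=\OO\parens{t^{2+\epsilon}}$ can hold: exponential smallness in $t$ cannot beat a negative power of $t_1$. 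So the step you yourself flagged as the delicate one does in fact fail, uniformly in $(t_1,t_2)$, away from the diagonal.

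The paper closes this hole with a case split that is absent from your proposal. If $d(x,z)<r/2$, every boundary point lies a distance at least $r/2$ from \emph{both} $x$ and $z$, so both Gaussian factors are exponentially damped (one in $s\leq t$, one in $t-s\leq t$), the boundary integrand is $\OO(t^N)$ uniformly, and your argument then works verbatim. If $d(x,z)\geq r/2$, the Duhamel route is abandoned altogether: by the triangle inequality one bounds $K(t)$ by a constant times $H(x,z;t)$, which is exponentially damped because $d(x,z)$ is bounded below, and bounds $K(t_1)*K(t_2)$ by a nonnegative power of $t$ times $H(t_1)*H(t_2)$, which is exponentially damped because every intermediate point $y$ is at distance at least $r/4$ from $x$ or from $z$ --- in this estimate only positive powers of $t_1,t_2$ arise, so their smallness is harmless. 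With this dichotomy inserted, and your integration-by-parts argument restricted to the near case, your proof coincides with the paper's.
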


\begin{proof}

Let $r$ be the injectivity radius of $M,$ so that $K(x,y;t)=0$ if $x$
and $y$ are more than $r$ apart.  Suppose first that the spatial
variables $x$ and $z$ implicit on the left-hand side of Eq.~\eqref{eq:kk-est} are
more than $r/2$ apart.  By the triangle inequality, 
$\abs{\overline{K(t_1)*K(t_2)} - \overline{K}(t) }_t \leq
\abs{\overline{K(t_1)*K(t_2)}}_t + \abs{\overline{K}(t)}_t$. The
second summand is bounded by a constant times $H(x,z; t)$, which, because
the distance between $x$ and $z$ is bounded below, is exponentially
damped for small $t$; in particular, it is $\OO(t^{1+\epsilon})$. The first summand is bounded by
some non-negative power of $t$ times $H(t_1)* H(t_2).$ 
Since within the integral represented by the $*$ product every point
$y$ in the domain is a distance at least $r/4$ from either $x$ or $z,$
this integral is likewise exponentially damped, which proves Eq.~\eqref{eq:kk-est} in this case.

If 
$x$ and $z$ are less than $r/2$ apart,
\begin{align*}
&K(t_1)* K(t_2)- K(t)  =  \int_{0}^{t_1} \partial\sbrace{K(\tau)* K(t-\tau)}/\partial \tau\, d\tau\\
&= \int_{0}^{t_1} \dot{K}(\tau)* K(t-\tau)- K(\tau) * \dot{K}(t-\tau) \, d\tau\\
&= \int_{0}^{t_1} \OO_\tau\parens{\tau^\epsilon}*K(t-\tau) +  K(\tau)* \OO_{t-\tau}\parens{\sbrace{t-\tau}^\epsilon}\, d\tau \\
&\qquad + \int_{0}^{t_1} - \Delta_y\sbrace{K(\tau)} *K(t-\tau) d\tau + K(\tau)* \Delta_y\sbrace{K(t-\tau)} \,  d\tau\\
&= \int_{0}^{t_1}  \OO_t\parens{\tau^\epsilon} + \OO_t\sbrace{(t-\tau)^\epsilon} \, d\tau \\
&\qquad + 
\int_{0}^{t_1}\sbrace{\int_{\partial_y} \star d_y \sbrace{K(\tau)}
  K(t-\tau) dy
 - K(\tau) \star d_y\sbrace{K(t-\tau)} dy}\,  d\tau
\end{align*}
where the third equation comes from Eq.~\eqref{eq:heat-est2}, the
fourth from Eq.~\eqref{eq:k*l}, and
boundary term from taking the formal adjoint of $\Delta_{\text{LdR}}.$
The  $\int_{\partial _y}$ indicates the integral in the $y$ variable
of the given forms along the boundary of the earlier region of
integration, and $\star$ denotes the Hodge star operator.

The first integral reduces to $\OO_t(t^{1+\epsilon})$. For the boundary integral, the domain of integration
has bounded metric volume (bounded by the sum of the volumes of the
spheres of the injectivity radius around $x$ and $z$), so to bound the
integral it suffices to bound the integrand.  This integrand is of the
form a polynomial in $\tau,t-\tau,\tau^{-1},$ and $(t-\tau)^{-1}$ times a smooth
function of $x,y,z$ times $e^{-\abs{\vec{y}_x}^2/2\tau
  -\abs{\vec{y}_z}^2/2(t-\tau)}.$   Each point on the boundary is a
distance $r$ from either $x$ or $z,$ and since $x$ and $z$ are less
than $r/2$ apart, each point is a distance at least $r/2$ from both.
Thus the Gaussian is exponentially damped in $\tau$ and $t-\tau,$ so the
whole expression is bounded by a
multiple of any power of 
  $t_1$ and $t_2.$  In particular it is $\OO\parens{
    t^{1+\epsilon}}=\OO_t\parens{t^\epsilon}.$ Eq.~\eqref{eq:kk-est} then follows upon
  completing the now trivial $\tau$ integral.
\end{proof}

\section{Partitions}

\subsection{Partitions and the Pointwise Limit}

For $t>0,$ a \emph{partition} $P$ of $t$ is a sequence $P=(t_1, t_2,
\ldots, t_m)$ with $t= \sum_i t_i.$  If $P$ is a  partition of $t$ and
$P'$ is a partition of $t',$ then the concatenation $PP'$ is a
partition of $t+t';$  if $P_i$ is a partition of $t_i$ for $1 \leq
i \leq m,$ then $P_1P_2 \cdots P_m$ is a \emph{refinement} of $P=(t_1,
\ldots, t_m).$  Define $\abs{P}=\max t_i.$  If
$P$ is a partition of $t$, and $K$ is any kernel,  define
\be
K^{*P}= K(t_1)*K(t_2) * \cdots * K(t_m)
\ee{k*p-def} 
Take $\{P :\abs{P}<r\}$ for $r > 0$ to define open sets in   the space of all partitions of a given $t$. Then it makes sense to define $\lim_{\abs{P} \to 0}$ of a
function depending on $t$ and $P$.

\begin{proposition} \label{pr:op-limit}
For $K$ as defined in Eq.~\ref{eq:K-def},  $f_0(x, \psi_x)$ piecewise continuous, and $t \geq 0$, 
\be
f(t) = \lim_{\abs{P} \to 0} K^{*P} * f_{0}
\ee{op-limit}
is the unique solution  $f(x,\psi_x;t)$ to the heat equation $\partial f/\partial t =-\frac{1}{2} \Delta_\text{LdR}
f$ with  $f(0) = f_0$.
\end{proposition}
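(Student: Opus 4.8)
The plan is to recognize \eqref{eq:op-limit} as a Trotter--Kato / Chernoff product formula: I would show that the time-sliced products $K^{*P}$ converge, as $\abs{P}\to 0$, to the genuine heat semigroup $S(t) \defequals e^{-\frac{t}{2}\Delta_{\text{LdR}}}$, whose existence, contractivity, and the fact that $S(t)f_0$ is the \emph{unique} solution of $\partial f/\partial t = -\frac12 \Delta_{\text{LdR}} f$ with $f(0)=f_0$ are standard, since $\Delta_{\text{LdR}}$ is a non-negative self-adjoint elliptic operator on the compact manifold $M$. Granting this reduction, uniqueness in the statement is immediate, and the whole content becomes the convergence $K^{*P}*f_0 \to S(t)f_0$. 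As in every product-formula argument, this rests on two ingredients, a \emph{stability} bound (the products $K^{*P}$ are uniformly bounded, independently of $P$) and a \emph{consistency} bound (a single slice $K(s)$ agrees with $S(s)$ to high enough order, reflecting that $K$ has the correct generator $-\frac12\Delta_{\text{LdR}}$).

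For stability I would work in the operator norm on $L^2$ forms and argue that $K(s)$ is an approximate contraction. Its leading part is the parallel-transport kernel, a fibrewise isometry, multiplied by the Gaussian $H$ with $\int H\,dy = 1 + \OO(s)$ by \eqref{eq:hint-bd}; every explicit correction term in $K$ carries either a factor of $s$ or an even power of $\vec{x}_y$ (whose Gaussian moment against $H$ is again $\OO(s)$), and the $\sim$-ambiguity is rank $3$, hence $\OO(s^{3/2})$. Schur's test then gives $\norm{K(s)}_{\mathrm{op}} \leq 1 + Cs$, so that $\norm{K^{*P}}_{\mathrm{op}} \leq \prod_i (1+Ct_i) \leq e^{Ct}$ uniformly in the partition.

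For consistency I would feed Corollary~\ref{cr:heat-est2} and the initial condition $K(0)=\mathrm{Id}$ from \eqref{eq:tlim} into Duhamel's formula: since $\partial_u K = -\frac12 \Delta_{\text{LdR}}K + \OO_u(u^\epsilon)$, one has $K(s) - S(s) = \int_0^s S(s-u)\,\OO_u(u^\epsilon)\,du$, and translating the $\abs{\,\cdot\,}_t$ bound on the defect into an operator bound via Schur's test (using $\norm{\overline F}_{\End}\leq \abs{\overline F}_t$ and $\int H\,dy=1+\OO(t)$) yields $\norm{K(s)-S(s)}_{\mathrm{op}} = \OO(s^{1+\epsilon})$. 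I would then telescope against the exact semigroup,
\[
K^{*P} - S(t) = \sum_{j=1}^m S(\sigma_{j-1}) * \sbrace{K(t_j) - S(t_j)} * K(t_{j+1}) * \cdots * K(t_m),
\]
with $\sigma_{j-1}=t_1+\cdots+t_{j-1}$, and bound each summand by (contraction)$\,\times \OO(t_j^{1+\epsilon})\times e^{Ct}$. Summing, $\norm{K^{*P}-S(t)}_{\mathrm{op}} = \OO\!\parens{e^{Ct}\sum_j t_j^{1+\epsilon}} = \OO(\abs{P}^{\epsilon}\,t\,e^{Ct}) \to 0$, which proves the convergence; applying this to $f_0$ and invoking uniqueness finishes the proof. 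As a self-contained alternative that avoids importing $S$, the same telescoping with the approximate semigroup property \eqref{eq:kk-est} shows $\{K^{*P}\}$ is Cauchy under refinement, and \eqref{eq:heat-est2} together with \eqref{eq:tlim} then identifies the limit as a solution of the heat equation.

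The main obstacle, and the reason the $+\epsilon$ in \eqref{eq:kk-est} and \eqref{eq:heat-est2} is doing real work, is making the error accumulate harmlessly across the $\sim 1/\abs{P}$ slices: a per-slice consistency error of size exactly $\OO(t_j)$ would sum to $\OO(t)$ and give nothing, whereas $\OO(t_j^{1+\epsilon})$ sums to $\OO(\abs{P}^{\epsilon}t)$ and vanishes under refinement. The delicate bookkeeping is the reconciliation of the two norms — the refined $\abs{\,\cdot\,}_t$ in which Corollary~\ref{cr:heat-est2} and Proposition~\ref{pr:kk-est} are phrased versus the operator norm in which stability and the semigroup comparison live — so that the surrounding factors never reintroduce the potentially divergent $e^{\sum_i \OO(t_i^{\epsilon/n})}$ growth of the sharp bound \eqref{eq:k*l}; this is precisely why the cruder contraction estimate of the stability step, rather than \eqref{eq:k*l}, must be used to control the tail products.
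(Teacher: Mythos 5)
Your proposal is correct and is essentially the paper's own argument: a Chernoff-type product-formula proof combining a per-slice consistency bound of order $t_j^{1+\epsilon}$ (extracted from Corollary~\ref{cr:heat-est2} and Eq.~\eqref{eq:tlim}), a crude stability bound (operator norm of $K(t_1)$ at most $e^{Ct_1}$, rather than the sharp Eq.~\eqref{eq:k*l}), and a telescoping sum whose total error is $\OO\parens{t\abs{P}^{\epsilon}}$. The only differences are cosmetic: the paper telescopes $K^{*P}*f_0$ against the exact solution $f(t)$ in the sup norm, obtaining consistency by Taylor-expanding $K(t_1)*f(t_2)$ at $t_1=0$, whereas you telescope at the operator level against the semigroup $S(t)$ and obtain consistency from Duhamel's formula.
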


\begin{proof}
Let $f(t)$ be the solution to the heat equation with $f(0)=f_0$ (the
dependence on $x$ and $\psi_x$ is understood).  
The definition of $K$ ensures that $K(t)*f(s)$ is a smooth
function of $t$ 
 with $\lim_{t \to 0} K(t)*f(s)=f(s).$  Then
$f(t_1+t_2)-f(t_2)= t_1 \Delta_{\text{LdR}} f(t_2) +
  \OO\parens{t_1^2}$ (all explicit and implicit constants in this
  proof depend on $f$ and thus $f_0$).  Using Eq.~\ref{eq:tlim} to expand about
  $t_1 = 0$, and~\eqref{eq:heat-est2} 
\[K(t_1) * f(t_2)- f(t_2)  = t_1 \l.\frac{\partial K}{\partial
  t}\r|_{t = 0} * f(t_2) +
\OO\parens{t_1^2}= t_1 \Delta_{\text{LdR}} f(t_2) +
  \OO\parens{t_1^{1+\epsilon}}.\]
This uses the fact that for $E(t) = \OO_t(t^\epsilon)$, $E(t) * f
= \OO(t^\epsilon).$

Thus there is a constant $C$ such that
$\norm{K(t_1)*f(t_2)- f(t_1+t_2)}_\infty\leq C t_1^{1+\epsilon}$ for all
$t_1+ t_2 \leq t.$   Since the operator $K*$ depends smoothly on
$t$ and approaches the identity as $t$ goes to $0$ we can also select $C$ so that the
operator norm of $K(t_1)$ is less than $e^{C t_1}.$ 

Therefore decomposing $P$  for each $t_k
\in P$ as $P_1(t_k)P_2$, and writing $\sbrace{K^{*P} * f(0)-
f(t)}$ as an appropriate telescoping sum, 
\begin{align*}\norm{\sbrace{K}^{*P} * f(0)- f(t)}_\infty&\leq \sum_k
  \norm{\sbrace{K}^{*P_1} * \sbrace{K(t_k)*f\parens{\sum_{i > k} t_i}- f\parens{t_k
      + \sum_{i > k} t_i} }}_\infty \\ 
&\leq \sum_k e^{C\sum_{i < k} t_i} C t_k^{1+\epsilon} \leq Ce^{Ct} t \abs{P}^\epsilon
\end{align*}
which converges to $0$ under refinement.


\end{proof}

\subsection{The Cauchy Property}
\begin{proposition} \label{pr:cauchy}
For sufficiently small $t>0,$ 
all partitions $P$ of
$t$,  all refinements $Q$ of $P$, and with the norm defined in Prop.~\ref{pr:norm}
\be
\norm{K^{*Q} -K^{*P}}_t =\OO(t)\abs{P}^\epsilon.
\ee{k*k-refine}
\end{proposition}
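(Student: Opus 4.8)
The plan is to expand $K^{*Q}-K^{*P}$ as a telescoping sum over the blocks of the coarse partition $P=(t_1,\dots,t_m)$, replacing one block of $P$ by its refinement at a time, and to bound each replacement using the approximate semigroup property Eq.~\eqref{eq:kk-est} together with the multiplicativity estimates Eqs.~\eqref{eq:k*l} and~\eqref{eq:l*l}. Writing $Q=Q_1\cdots Q_m$ with $Q_j$ a partition of $t_j$, setting $\sigma_j=t_1+\cdots+t_j$, and interpolating with the hybrid kernels $B_j=K^{*Q_1}*\cdots*K^{*Q_j}*K(t_{j+1})*\cdots*K(t_m)$ (so $B_0=K^{*P}$ and $B_m=K^{*Q}$), one gets $K^{*Q}-K^{*P}=\sum_{j=1}^m(B_j-B_{j-1})$ with $B_j-B_{j-1}=A_j*(K^{*Q_j}-K(t_j))*S_j$, where $A_j=K^{*Q_1}*\cdots*K^{*Q_{j-1}}$ is a partition product of total time $\sigma_{j-1}$ and $S_j=K(t_{j+1})*\cdots*K(t_m)$ a partition product of total time $t-\sigma_j$.

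The engine is a single-block estimate, which I would establish first as a lemma: for every partition $Q'$ of $s$ one has $\norm{K^{*Q'}-K(s)}_s=\OO(s^{1+\epsilon})$, uniformly in $Q'$, and consequently $\norm{K^{*Q'}}_s=\OO(1)$. Granting this, the middle factor obeys $\norm{K^{*Q_j}-K(t_j)}_{t_j}=\OO(t_j^{1+\epsilon})$, while the prefix and suffix may each be replaced by a \emph{single} kernel, $A_j=K(\sigma_{j-1})+E_{A_j}$ and $S_j=K(t-\sigma_j)+E_{S_j}$ with $\norm{E_{A_j}}_{\sigma_{j-1}}=\OO(\sigma_{j-1}^{1+\epsilon})$ and $\norm{E_{S_j}}_{t-\sigma_j}=\OO((t-\sigma_j)^{1+\epsilon})$. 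The leading term $K(\sigma_{j-1})*(K^{*Q_j}-K(t_j))*K(t-\sigma_j)$ is then bounded by \emph{two} applications of Eq.~\eqref{eq:k*l}, one on each side with time $\le t$, giving $e^{\OO(t^{\epsilon/n})}\norm{K^{*Q_j}-K(t_j)}_{t_j}=\OO(t_j^{1+\epsilon})$ with a constant independent of $j$ and of the partition; the remaining terms each carry a small factor $E_{A_j}$ or $E_{S_j}$, and bounding the small--small product by one application of Eq.~\eqref{eq:l*l} and the outer multiplication again by Eq.~\eqref{eq:k*l} makes them sum to $\OO(t^{1+\epsilon}\abs{P}^{\epsilon})$, which is negligible. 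Summing the leading terms and using $t_j\le\abs{P}$ yields $\sum_j\OO(t_j^{1+\epsilon})\le\abs{P}^{\epsilon}\sum_j t_j=\OO(t)\abs{P}^{\epsilon}$, the claim.

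The single-block lemma is where the real difficulty lies, and it is the step I expect to be the main obstacle. The naive proof — peel off one part at a time with Eq.~\eqref{eq:k*l} — fails, because each peel contributes a factor $e^{\OO(\tau^{\epsilon/n})}$ and the accumulated $\exp\!\big(\OO(\sum_i\tau_i^{\epsilon/n})\big)$ diverges as the partition refines (the exponent grows like a positive power of the number of parts). I would instead prove it by a \emph{balanced, dyadic} recursion: split $Q'=Q'_LQ'_R$ into sub-partitions of nearly equal total times $a,b$ (isolating any single block exceeding half of $s$, which contributes no error since $K^{*(s_k)}=K(s_k)$ exactly), and write $K^{*Q'}-K(s)=(K^{*Q'_L}-K(a))*K^{*Q'_R}+K(a)*(K^{*Q'_R}-K(b))+(K(a)*K(b)-K(s))$. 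The last term is $\OO(s^{1+\epsilon})$ by Eq.~\eqref{eq:kk-est}, and the first two are controlled by one application of Eq.~\eqref{eq:k*l} plus a higher-order cross term from Eq.~\eqref{eq:l*l}.

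Because the split is balanced, the inductive error coefficient contracts, $a^{1+\epsilon}+b^{1+\epsilon}\le\theta\,s^{1+\epsilon}$ with $\theta<1$, and the near-unit factors $1+\OO((s/2^d)^{\epsilon/n})$ accumulated along any branch of the recursion form a convergent geometric series whose product is $\exp(\OO(s^{\epsilon/n}))=\OO(1)$. For sufficiently small $s$ the contraction dominates this factor, so the recursion closes with a constant uniform in $Q'$; the base case is a single part, where the error vanishes. The technical care needed is in verifying the balancing can always be arranged (handling dominant single blocks, which contribute nothing) and in checking that the quadratic cross terms produced by Eq.~\eqref{eq:l*l} stay strictly higher order, but both are routine once the contraction is in place.
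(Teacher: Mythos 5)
Your proposal is correct, and its core coincides with the paper's: your ``single-block lemma'' is exactly the paper's Eq.~\eqref{eq:norm-k*p} (Lemma~\ref{lm:norm-k*p}), and you prove it the same way---a balanced split with the contraction $a^{1+\epsilon}+b^{1+\epsilon}\leq\theta s^{1+\epsilon}$, $\theta<1$, closing the recursion for small $s$---having made the same diagnosis that naive peeling fails because the accumulated factor $\exp\parens{\OO\parens{\sum_i\tau_i^{\epsilon/n}}}$ diverges under refinement. (The paper's split is always the three-way $P=P_1(t_k)P_2$ with $\tau_i\leq t/2$ centered at a block; your two-way split with dominant-block isolation is a cosmetic variant of this.) Where you genuinely differ is in deriving Eq.~\eqref{eq:k*k-refine} from that lemma: the paper runs a \emph{second} balanced induction (Lemma~\ref{lm:norm-refine}) on the four-term decomposition of $ABC-A'B'C'$, supported by the auxiliary product bound Eq.~\eqref{eq:norm-k*pl}, whereas you telescope linearly across the blocks of $P$, swapping one $K(t_j)$ for $K^{*Q_j}$ at a time and replacing the prefix and suffix by $K(\sigma_{j-1})+E_{A_j}$ and $K(t-\sigma_j)+E_{S_j}$ via the single-block lemma, so that each summand needs only two applications of Eq.~\eqref{eq:k*l} plus higher-order cross terms from Eq.~\eqref{eq:l*l}; summing $t_j^{1+\epsilon}\leq t_j\abs{P}^\epsilon$ over $j$ then gives $\OO(t)\abs{P}^\epsilon$. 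Your route inlines the content of Lemma~\ref{lm:norm-k*pl}, avoids the second induction, and makes the uniformity of constants transparent (a bounded number of exponential factors per summand, never compounding with the number of blocks); the paper's route reuses one inductive scheme for all three estimates and delivers the explicit constant $a_1e^{b_2t^{\epsilon/n}+c_2t^2}$ of Eq.~\eqref{eq:norm-refine}. Both are valid proofs of the proposition.
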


\begin{proof}
Eqs.~\eqref{eq:k*l}, \eqref{eq:l*l}, and~\eqref{eq:kk-est} give
constants $a,b,c,d$ such that for all $t=t_1+t_2$ positive and bounded
and all $L, L_1, L_2 \in E_{x,y}$
\eqa{abc}{
\norm{K(t_1)*K(t_2) - K(t)}_t &\leq at^{1+\epsilon} \nonumber \\
\norm{K(t_1)*L}_t&\leq e^{bt_1^{\epsilon/n}+ct^2}\norm{L}_{t_2}\nonumber \\ 
\norm{L*K(t_2)}_t&\leq e^{bt_2^{\epsilon/n} + ct^2}\norm{L}_{t_1}\nonumber \\ 
\norm{L_1*L_2}_t&\leq dt^{-1} \norm{L_1}_{t_1} \norm{L_2}_{t_2}.
}

The proof breaks down into three lemmas which respectively provide positive
constants  $a_1, b_1$  and $b_2,c_2$  such
that for  sufficiently small $t $ (depending on $a_i, b_i$), and 
 for every partition
$P$ of $t$
\be
\norm{K^{*P}- K(t)}_t\leq
a_1t^{1+\epsilon};
\ee{norm-k*p}
for every $t_1+t_2=t,$  every pair of partitions $P_1$ of $t_1$ and
$P_2$ of $t_2$, and every $L \in E_{x,y}$
\eqa{norm-k*pl}{
\norm{K^{*P_1}*L}_t &\leq
e^{b_1t_1^{\epsilon/n}+ct^2}\norm{L}_{t_2} \nonumber \\
\norm{L*K^{*P_2}}_t &\leq
e^{b_1t_2^{\epsilon/n}+ ct^2}\norm{L}_{t_1};
}
and  for all partitions $P$ of $t$ and all refinements $Q$ of $P,$
\be
\norm{K^{*Q}- K^{*P}}_t\leq
a_1e^{b_2t ^{\epsilon/n}+ c_2t^2}t\abs{P}^\epsilon.
\ee{norm-refine}
Since $t$ is bounded this implies Eq.~\eqref{eq:k*k-refine}.
\end{proof}

The proofs of Eqs.~\ref{eq:norm-k*p} and~\ref{eq:norm-refine}, provided
in the lemmas below, rely on the
following induction process:  For any partition $P$ of $t$, there
exists a $t_k$ in $P$ such that, writing $P=P_1(t_k)P_2$, with $P_i$ a
partition of $\tau_i$ and $t=\tau_1 + t_k + \tau_2$,  $\tau_i
\leq t/2$. (the largest $k$ so that $\tau_1 < t/2$.)  Assuming
a given property holds for $P_1$ and $P_2$ and proving it therefore holds for $P$
will prove it holds in general, provided it also holds for the empty partition of
$0$ and the trivial partition $(t).$

\begin{lemma} \label{lm:norm-k*p}
Eq.~\eqref{eq:norm-k*p} holds.
\end{lemma}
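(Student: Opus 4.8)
The plan is to prove Eq.~\eqref{eq:norm-k*p} by strong induction over partitions, using exactly the decomposition set up before the lemma: given a partition $P$ of $t$, write $P=P_1(t_k)P_2$ with $P_i$ a partition of $\tau_i$, $\tau_1+t_k+\tau_2=t$, and $\tau_1,\tau_2\leq t/2$. The base cases, the empty partition of $0$ and the trivial partition $(t)$, are immediate since there $K^{*P}=K(t)$ (using Eq.~\eqref{eq:tlim} in the empty case) and the left-hand side vanishes. For the inductive step I would assume $\norm{K^{*P_i}-K(\tau_i)}_{\tau_i}\leq a_1\tau_i^{1+\epsilon}$ for $i=1,2$ and deduce the same bound for $P$.

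The core of the argument is the telescoping identity, writing $D_i=K^{*P_i}-K(\tau_i)$,
\[
K^{*P}-K(t)=\underbrace{D_1*K(t_k)*K^{*P_2}}_{(A)}+\underbrace{K(\tau_1)*K(t_k)*D_2}_{(B)}+\underbrace{\big(K(\tau_1)*K(t_k)*K(\tau_2)-K(t)\big)}_{(C)}.
\]
I would bound $(C)$ with the approximate semigroup estimate (first line of Eq.~\eqref{eq:abc}) applied twice, together with one application of the $L*K$ estimate for the leftover cross term, giving $\norm{(C)}_t\leq C_0\,a\,t^{1+\epsilon}$ for a fixed $C_0$ (absorbing the factor $1+e^{bt^{\epsilon/n}+ct^2}$ into $C_0$ for bounded $t$). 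For $(B)$ I would peel off the two single factors on the left with the $K*L$ estimate (second line of Eq.~\eqref{eq:abc}) twice, obtaining $\norm{(B)}_t\leq e^{\beta t^{\epsilon/n}+\gamma t^2}\,a_1\,\tau_2^{1+\epsilon}$ for suitable $\beta,\gamma$.

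Term $(A)$ is where the main obstacle lies and where the choice of estimate is decisive. After grouping $D_1*K(t_k)$ by the $L*K$ estimate, I must right-multiply by the \emph{entire} product $K^{*P_2}$. Applying the general product bound (last line of Eq.~\eqref{eq:abc}) here would cost a factor $t^{-1}$, degrading $t^{1+\epsilon}$ to the useless $t^{\epsilon}$; iterating the single-factor $L*K$ estimate over the blocks of $P_2$ would instead accumulate the uncontrolled exponential $\prod_i e^{bs_i^{\epsilon/n}}$, which diverges under refinement. The resolution is to invoke the uniform-in-partition estimate Eq.~\eqref{eq:norm-k*pl}, whose own inductive proof uses only Eq.~\eqref{eq:abc} and so does not depend on Eq.~\eqref{eq:norm-k*p} (no circularity). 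This yields $\norm{(A)}_t\leq e^{\beta t^{\epsilon/n}+\gamma t^2}\,a_1\,\tau_1^{1+\epsilon}$ with no lost power of $t$.

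Combining the three bounds and using the convexity input $\tau_1^{1+\epsilon}+\tau_2^{1+\epsilon}\leq 2^{-\epsilon}t^{1+\epsilon}$, valid because each $\tau_i\leq t/2$ while $\tau_1+\tau_2\leq t$, gives $\norm{K^{*P}-K(t)}_t\leq\big(2^{-\epsilon}e^{\beta t^{\epsilon/n}+\gamma t^2}a_1+C_0a\big)t^{1+\epsilon}$. The delicate point, and the reason the statement is confined to sufficiently small $t$, is the competition between the genuine gain $2^{-\epsilon}<1$ coming from the halving $\tau_i\leq t/2$ and the exponential loss $e^{\beta t^{\epsilon/n}+\gamma t^2}$: for $t$ small enough the product satisfies $2^{-\epsilon}e^{\beta t^{\epsilon/n}+\gamma t^2}\leq\theta<1$, after which choosing $a_1\geq C_0a/(1-\theta)$ makes the bracket at most $a_1$ and closes the induction.
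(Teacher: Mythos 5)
Your decomposition, base cases, and closing bookkeeping (the gain $2^{-\epsilon}<1$ from $\tau_i\leq t/2$ beating the exponential loss for small $t$, then choosing $a_1$ large) all track the paper's argument, and your diagnosis of term $(A)$ as the crux is exactly right. But your resolution of that crux rests on a false claim, so as written the proof is circular. You invoke Eq.~\eqref{eq:norm-k*pl} on the grounds that it has an inductive proof ``using only Eq.~\eqref{eq:abc}'' and hence does not depend on Eq.~\eqref{eq:norm-k*p}. In the paper the dependence runs the other way: Lemma~\ref{lm:norm-k*pl} is proved \emph{from} Eq.~\eqref{eq:norm-k*p}, by writing $K^{*P_1}*L = K(t_1)*L + \parens{K^{*P_1}-K(t_1)}*L$ and controlling the second term by the crude product bound together with $\norm{K^{*P_1}-K(t_1)}_{t_1}\leq a_1t_1^{1+\epsilon}$. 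Nor can Eq.~\eqref{eq:norm-k*pl} be proved by induction from Eq.~\eqref{eq:abc} alone: in any inductive step the product $K^{*P_1}$ splits into two sub-products, and since $s\mapsto s^{\epsilon/n}$ is concave, a split with $\sigma_1=\sigma_2=t_1/2$ would force $b_1\parens{\sigma_1^{\epsilon/n}+\sigma_2^{\epsilon/n}} = b_1 2^{1-\epsilon/n}t_1^{\epsilon/n}\leq b_1t_1^{\epsilon/n}$, which fails, so the constant grows without bound under refinement. This is precisely the accumulation phenomenon you yourself cite to rule out iterating the single-factor bound; asserting Eq.~\eqref{eq:norm-k*pl} does not make it go away.

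The gap can be closed in two ways. One is a restructuring of your own argument: inside the induction, the hypothesis gives Eq.~\eqref{eq:norm-k*p} for the sub-partition $P_2$ of $\tau_2$, and from this the bound you need, $\norm{L*K^{*P_2}}_t\leq\sbrace{e^{b\tau_2^{\epsilon/n}+ct^2}+d_1a_1\tau_2^{\epsilon}}\norm{L}_{t-\tau_2}$, follows by the two-term splitting above (i.e.\ Lemma~\ref{lm:norm-k*pl}'s argument, now legitimately fed by the inductive hypothesis rather than by an unproved standalone lemma); the constant then depends on $a_1$, but that self-reference is resolved exactly as in your last step, fixing $a_1$ first and shrinking $t$. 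The other is the paper's route, which avoids the issue entirely by choosing the decomposition $\abs{ABC-A'BC'}\leq\abs{(A-A')BC'}+\abs{A'B(C-C')}+\abs{(A-A')B(C-C')}$: every error factor $D_i$ is then multiplied only by \emph{single} kernels $K(t_k)$, $K(\tau_i)$, or by the other error, so only Eq.~\eqref{eq:abc} is needed---the quadratic term $D_1*K(t_k)*D_2$ harmlessly absorbs the $t^{-1}$ from the crude product bound, since $t^{-1}\tau_1^{1+\epsilon}\tau_2^{1+\epsilon}=\OO\parens{t^{1+2\epsilon}}$---and Eq.~\eqref{eq:norm-k*pl} never enters this lemma at all.
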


\begin{proof}
Begin with the special case $P=(t_1,t_2,t_3)$. Several applications of Eq.~\eqref{eq:abc} imply there is an
$a_2$ depending on $a$ and $b$ such that
\be
\norm{K^{*P}- K(t)}_t \leq a_2 t^{1+\epsilon}
\ee{kkk-est}
in this case.

To begin the inductive argument, note Eq.~\eqref{eq:norm-k*p} clearly holds
when $t=0$ or $P=(t)$.
For the inductive step, write
$P=P_1(t_k)P_2$, where $P_i$ is a partition of $\tau_i \leq t/2
$. Using the  decomposition
\[
\abs{ABC-A'BC'} \leq \abs{(A-A')BC'} + \abs{A'B(C-C')} +
\abs{(A-A')B(C-C')},
\]

\begin{align*}
&\norm{K^{*P} - K(t)}_t \\
&\qquad = \norm{K(\tau_1)*K(t_k)*K(\tau_2) - K(t) + K^{*P_1}*
  K(t_k)*K^{*P_2} - K(\tau_1)*K(t_k)*K(\tau_2)}_t \\
&\qquad \leq  \norm{K(\tau_1)*K(t_k)*K(\tau_2) - K(t)}_t +
a_1\tau_1^{1+\epsilon} e^{bt_k^{\epsilon/n}+
  b \tau_2^{\epsilon/n} + 2ct^2}\\
&\qquad \quad   +
a_1 e^{b\tau_1^{\epsilon/n}+
  b t_k^{\epsilon/n} + 2ct^2} \tau_2^{1+\epsilon} +
    d_1 t^{-1}a_1^2 e^{bt_k^{\epsilon/n} + ct^2}\tau_1^{1+\epsilon}\tau_2^{1+\epsilon}
      \\
&\qquad \leq a_2 t^{1+\epsilon} 
+ a_1 e^{2b t^{\epsilon /n } + 2ct^2}   2^{-\epsilon} t^\epsilon \tau_1 + a_1
e^{2b t^{\epsilon /n } +2ct^2}  2^{-\epsilon} t^\epsilon \tau_2  
+ d_1 a_1^2 e^{b t^{\epsilon / n}  +ct^2}  2^{-2 -
  2 \epsilon} t^{1+2\epsilon} \\
&\qquad  \leq a_2 t^{1+\epsilon} +  a_1 2^{-\epsilon} e^{2b t^{\epsilon
    /n } + 2ct^2}    t^{1+\epsilon
}\sbrace{1 + d_1 a_1 2^{-2 -  \epsilon} 
t^{\epsilon}}.
\end{align*}
The second inequality uses $t_k < t$ and $\tau_i < t/2,$ the third is
just factoring plus some easy estimates.  For this to be less than
$a_1t^{1+\epsilon}$ requires that $f(a_1,t)$ be greater than $a_2,$
where $f(a_1,t)= a_1 -a_1 2^{-\epsilon} e^{2b t^{\epsilon
    /n }+2ct^2} \sbrace{1 + d_1 a_1 2^{-2 - 2 \epsilon} 
t^{\epsilon}}.$  When $t=0$ this is true for sufficiently large $a_1,$
so by continuity there is a value of $a_1$ for which it is true for
all $t$ less than some bound. 
\end{proof}


\begin{lemma} \label{lm:norm-k*pl}
Eq.~\eqref{eq:norm-k*pl} holds.
\end{lemma}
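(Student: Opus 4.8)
The plan is to avoid inducting directly on the structure of $P_1$ — which would fail, since peeling off the factors of $K^{*P_1}$ one at a time produces an exponent $b\sum_j s_j^{\epsilon/n}$ that is \emph{not} controlled by $b t_1^{\epsilon/n}$ (the map $s \mapsto s^{\epsilon/n}$ is subadditive the wrong way) — and instead to collapse the entire product $K^{*P_1}$ to the single kernel $K(t_1)$ at the outset, using the estimate just established. Concretely, since $t$ is small and $t_1 \leq t$, Eq.~\eqref{eq:norm-k*p} applied to the partition $P_1$ of $t_1$ gives $\norm{K^{*P_1}-K(t_1)}_{t_1}\leq a_1 t_1^{1+\epsilon}$. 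I would then split $K^{*P_1}*L = K(t_1)*L + (K^{*P_1}-K(t_1))*L$ and bound the two summands separately with the single-kernel estimates collected in Eq.~\eqref{eq:abc}.

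For the first summand I would apply the second line of Eq.~\eqref{eq:abc} to get $\norm{K(t_1)*L}_t \leq e^{bt_1^{\epsilon/n}+ct^2}\norm{L}_{t_2}$. For the second I would use the product estimate (the last line of Eq.~\eqref{eq:abc}) together with the bound just quoted, obtaining $\norm{(K^{*P_1}-K(t_1))*L}_t \leq d t^{-1}\norm{K^{*P_1}-K(t_1)}_{t_1}\norm{L}_{t_2}\leq d a_1 t^{-1}t_1^{1+\epsilon}\norm{L}_{t_2}$. Because $t_1\leq t$ forces $t^{-1}\leq t_1^{-1}$, this is at most $d a_1 t_1^{\epsilon}\norm{L}_{t_2}$. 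Adding the two contributions yields $\norm{K^{*P_1}*L}_t \leq \sbrace{e^{bt_1^{\epsilon/n}+ct^2}+ da_1 t_1^{\epsilon}}\norm{L}_{t_2}$, and the whole matter reduces to showing the bracket is bounded by $e^{b_1 t_1^{\epsilon/n}+ct^2}$ for a suitable $b_1$, keeping the same $c$.

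The only place any care is needed — and hence the main obstacle — is this last absorption of the polynomial error $t_1^{\epsilon}$ into the exponential factor. Since $e^{ct^2}\geq 1$, it suffices to arrange $e^{bt_1^{\epsilon/n}}+da_1 t_1^{\epsilon}\leq e^{b_1 t_1^{\epsilon/n}}$. Here I would exploit that $\epsilon/n<\epsilon$: writing $u=t_1^{\epsilon/n}$, the error is $da_1 t_1^{\epsilon}= da_1 t_1^{\,\epsilon-\epsilon/n}u \leq A u$, where $A$ is finite because $t_1^{\,\epsilon-\epsilon/n}$ is bounded on the (bounded) range of $t$; combined with $e^{b_1 u}-e^{bu}\geq (b_1-b)u\,e^{bu}\geq (b_1-b)u$, it is enough to take $b_1=b+A$. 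The base cases need nothing further — the empty partition of $0$ is trivial and the trivial partition $(t_1)$ is exactly Eq.~\eqref{eq:abc} — so no induction on the partition is required, the subadditivity obstruction having been sidestepped by invoking Eq.~\eqref{eq:norm-k*p}. The second inequality in Eq.~\eqref{eq:norm-k*pl} follows identically, using the third line of Eq.~\eqref{eq:abc} for $\norm{L*K(t_2)}_t$ and the product estimate with the factors in the opposite order.
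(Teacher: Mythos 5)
Your proposal is correct and follows essentially the same route as the paper: split $K^{*P_1}*L$ into $K(t_1)*L$ plus $(K^{*P_1}-K(t_1))*L$, bound the first term by the single-kernel estimate in Eq.~\eqref{eq:abc}, bound the second via Eq.~\eqref{eq:norm-k*p} together with the product estimate and $t^{-1}\leq t_1^{-1}$, and then absorb the resulting $d a_1 t_1^{\epsilon}$ error into the exponential by enlarging $b$ to $b_1$. Your explicit justification of that last absorption (via $e^{b_1u}-e^{bu}\geq(b_1-b)u$ with $u=t_1^{\epsilon/n}$) merely fills in a step the paper leaves as ``choose $b_1$ large enough,'' so there is no substantive difference.
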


\begin{proof}
Using Eqs.~\eqref{eq:norm-k*p} and \eqref{eq:abc}
\begin{align*}
\norm{K^{*P_1}*L}_t &\leq 
\norm{K(t_1)*L}_t + d_1 t^{-1}a_1t_1^{1+\epsilon}\norm{L}_{t_2}\\
&\leq \sbrace{ e^{bt_1^{\epsilon/n} + ct^2} + d_1 a_1
  t_1^\epsilon} \norm{L}_{t_2}\leq e^{b_1 t_1^{\epsilon/n} + ct^2} \norm{L}_{t_2}.
\end{align*}
The second inequality use Eq.~\eqref{eq:abc} and bound $t_1$
by $t,$  and in the
third choose $b_1$ large enough that $e^{b_1 t_1^{\epsilon/n}}$ bounds
$\sbrace{ e^{bt_1^{\epsilon/n}} + d_1 a_1
  t_1^\epsilon} .$
\end{proof}

\begin{lemma} \label{lm:norm-refine}
Eq.~\eqref{eq:norm-refine} holds.
\end{lemma}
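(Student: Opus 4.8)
The plan is to prove Eq.~\eqref{eq:norm-refine} by the same strong induction on partitions used for Eq.~\eqref{eq:norm-k*p}, now applied to the pair consisting of a partition $P$ of $t$ together with a refinement $Q$. The two base cases are immediate: for the empty partition of $0$ both sides vanish, and for the trivial partition $P=(t)$ every refinement $Q$ is simply a partition of $t$, so $K^{*P}=K(t)$ and Eq.~\eqref{eq:norm-k*p} gives $\norm{K^{*Q}-K(t)}_t\leq a_1 t^{1+\epsilon}$, which is dominated by $a_1 e^{b_2 t^{\epsilon/n}+c_2 t^2}\,t\abs{P}^\epsilon$ since $\abs{P}=t$ and the exponential factor exceeds $1$. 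For the inductive step I would split $P=P_1(t_k)P_2$ as prescribed before the lemma, with $P_i$ a partition of $\tau_i\leq t/2$ and $t=\tau_1+t_k+\tau_2$, and correspondingly decompose the refinement as $Q=Q_1 Q_{(k)} Q_2$, where $Q_1$ refines $P_1$, $Q_2$ refines $P_2$, and $Q_{(k)}$ is a partition of $t_k$ refining the single block.

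Writing $A=K^{*Q_1}$, $A'=K^{*P_1}$, $B=K^{*Q_{(k)}}$, $B'=K(t_k)$, $C=K^{*Q_2}$, $C'=K^{*P_2}$, I would use the telescoping identity $ABC-A'B'C'=(A-A')BC+A'(B-B')C+A'B'(C-C')$ together with the triangle inequality. The three error factors are controlled by results already in hand: $\norm{A-A'}_{\tau_1}$ and $\norm{C-C'}_{\tau_2}$ by the inductive hypothesis (as $P_1,P_2$ have fewer blocks than $P$), and $\norm{B-B'}_{t_k}=\norm{K^{*Q_{(k)}}-K(t_k)}_{t_k}\leq a_1 t_k^{1+\epsilon}$ by Eq.~\eqref{eq:norm-k*p}. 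The surrounding genuine-product factors are all of the form $K^{*(\text{partition})}$ (for instance $BC=K^{*(Q_{(k)}Q_2)}$ and $A'B'=K^{*(P_1(t_k))}$), so each can be peeled off, one side at a time, using the two lines of Eq.~\eqref{eq:norm-k*pl}; each peeling contributes a factor $e^{b_1(\cdot)^{\epsilon/n}+c(\cdot)^2}$ while converting the norm to the appropriate sub-time. The outer summands require a single peeling each, while the middle summand $A'(B-B')C$ requires peeling on both sides.

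Collecting the estimates, the coefficient parts combine cleanly: bounding $\abs{P_1}^\epsilon,\abs{P_2}^\epsilon\leq\abs{P}^\epsilon$ and $t_k^{1+\epsilon}=t_k\,t_k^\epsilon\leq t_k\abs{P}^\epsilon$ (since $t_k\leq\abs{P}$), the three coefficients become $a_1\tau_1\abs{P}^\epsilon$, $a_1 t_k\abs{P}^\epsilon$, and $a_1\tau_2\abs{P}^\epsilon$, whose sum is exactly $a_1 t\abs{P}^\epsilon$ by $\tau_1+t_k+\tau_2=t$. The delicate part — and the step I expect to be the main obstacle — is the bookkeeping of the exponential prefactors, which must all be absorbed into the single factor $e^{b_2 t^{\epsilon/n}+c_2 t^2}$ so that the induction closes. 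Here the crucial input is $\tau_i\leq t/2$, giving $\tau_i^{\epsilon/n}\leq 2^{-\epsilon/n}t^{\epsilon/n}$ and $\tau_i^2\leq t^2/4$; combined with $(t-\tau_i)^{\epsilon/n}\leq t^{\epsilon/n}$, this lets me bound each accumulated exponent, e.g. $b_1(t_k+\tau_2)^{\epsilon/n}+b_2\tau_1^{\epsilon/n}\leq(b_1+2^{-\epsilon/n}b_2)t^{\epsilon/n}$ for an outer term and $2^{1-\epsilon/n}b_1\,t^{\epsilon/n}$ for the two-peeling middle term, with the corresponding $t^2$ terms bounded by $(c+c_2/4)t^2$ and $2ct^2$. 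Choosing $b_2$ large enough that $b_1+2^{-\epsilon/n}b_2\leq b_2$ and $2^{1-\epsilon/n}b_1\leq b_2$, and taking $c_2\geq 2c$, forces every prefactor below $e^{b_2 t^{\epsilon/n}+c_2 t^2}$, completing the inductive step and hence the lemma.
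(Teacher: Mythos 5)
Your proof is correct, and it differs from the paper's in a meaningful way: the choice of decomposition for the inductive step. The paper writes (with $A,B,C$ the refined factors and $A',B',C'$ the unrefined ones)
\[
\abs{ABC-A'B'C'}\leq \abs{(A-A')BC'} + \abs{A'B(C-C')} + \abs{A'(B-B')C'} + \abs{(A-A')B(C-C')},
\]
whose last, quadratic cross term forces an application of the product bound Eq.~\eqref{eq:l*l}, contributing a factor $d_1 t^{-1}$ times the product of two inductive-hypothesis errors; absorbing the resulting extra summand is what drives the paper's additional constant $b_3$ (chosen so $e^{b_3 t^{\epsilon/n}}\geq 1+\frac{d_1a_1}{4}e^{b_2 t^{\epsilon/n}}t^\epsilon$) and its closing appeal to ``sufficiently small $t$.'' Your three-term telescoping $ABC-A'B'C'=(A-A')BC+A'(B-B')C+A'B'(C-C')$ puts exactly one difference in each summand, with every other factor a genuine partition product of $K$'s, so everything is handled by the peeling estimates of Eq.~\eqref{eq:norm-k*pl} together with the inductive hypothesis and Eq.~\eqref{eq:norm-k*p}; the coefficients then sum exactly to $a_1 t\abs{P}^\epsilon$ via $\tau_1+t_k+\tau_2=t$, and the induction closes purely by choosing $b_2,c_2$ large enough (e.g. $b_2\geq b_1/(1-2^{-\epsilon/n})$, $b_2 \geq 2^{1-\epsilon/n}b_1$, $c_2\geq 2c$), with no cross term, no use of Eq.~\eqref{eq:l*l}, and no extra smallness condition on $t$ beyond what Eq.~\eqref{eq:norm-k*p} already requires. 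In both arguments the essential mechanism is the same --- the slack created by $\tau_i\leq t/2$, i.e. $\tau_i^{\epsilon/n}\leq 2^{-\epsilon/n}t^{\epsilon/n}$, absorbs the fixed losses from peeling --- but your bookkeeping is cleaner, at no cost in generality.
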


\begin{proof}
Write $Q=Q_1Q_2\cdots Q_m$ where $Q_i$ is a partition of $t_i$ in $P.$
Again proceed by induction on $P,$ the empty case being trivial and the
$P=(t)$ case following from Eq.~\eqref{eq:norm-k*p}.   For the induction step, write $P=P_1(t_k)P_2$ with $P_i$ a
partition of $\tau_i \leq t/2,$ and this time apply the decomposition
\begin{align*}
\abs{ABC-A'B'C'}&=\abs{(A-A')BC'} + \abs{ A'B(C-C')} + \abs{ A'(B-B')C'} \\
&+ \abs{
  (A-A')B(C-C')}
\end{align*}
to get
\begin{align*}
&\norm{K^{*Q} - K^{*P}}_t = \norm{K^{*Q_1\cdots Q_{k-1}} * K^{*Q_k} * K^{*Q_{k+1}
      \cdots Q_n} - K^{*P_1}*K(t_k)*K^{*P_2}}_t \\
&\qquad \leq e^{b_1 t_k^{\epsilon/n} + b_1 \tau_2^{\epsilon/n} + b_2
  \tau_1^{\epsilon/n} + c_2\tau_1^2 + 2c t^2}
  a_1 \tau_1 \abs{P_1}^\epsilon
+ e^{b_1 \tau_1^{\epsilon/n} + b_1 t_k^{\epsilon/n} + b_2
  \tau_2^{\epsilon/n}  + c_2 \tau_2^2 + 2ct^2} a_1 \tau_2 \abs{P_2}^\epsilon
\\
&\qquad \qquad     +
e^{b_1\tau_1^{\epsilon/n} + b_1 \tau_2^{\epsilon/n} +2 ct^2 }a_1t_k^{1+\epsilon}  + d_1 a_1^2t^{-1}
e^{b_1
  t_k^{\epsilon/n} + b_2 \tau_1^{\epsilon/n} + b_2 \tau_2^{\epsilon/n}
 + ct^2} \tau_1 \tau_2 \abs{P_1}^{\epsilon}
\abs{P_2}^{\epsilon}  \\
&\qquad \leq a_1e^{2b_1
  t^{\epsilon/n} + 2^{-\epsilon/n} b_2 t^{\epsilon/n} + (2c + 2^{-2} c_2)t^2} \abs{P}^\epsilon \sbrace{\tau_1 + \tau_2 + t_k +
  \frac{d_1 a_1}{4} e^{b_2 t^{\epsilon/n}}t\abs{P}^{\epsilon}} \\
&\qquad \leq a_1e^{(2b_1
  + 2^{-\epsilon/n} b_2 + b_3) t^{\epsilon/n} + (2c + 2^{-2} c_2)t^2}
t \abs{P}^\epsilon \\
&\qquad \leq a_1e^{ 2^{-\epsilon/(2n)} b_2
  t^{\epsilon/n} + c_2 t^2} t\abs{P}^\epsilon.
\end{align*}
The first inequality follows from the inductive hypothesis
Eq.~\eqref{eq:norm-refine}, Eqs.~\eqref{eq:abc}, and 
Eq.~\eqref{eq:norm-k*p}. The second follows from  $\tau_i < t/2$ and
$t_k <  t$ as before, along with $t_k < \abs{P}$, $\abs{P_i} <
\abs{P}$, and $\tau_1\tau_2\leq
t^2/4.$  The third inequality bounds $\abs{P}$ by $t,$ and chooses
$b_3$ so that $e^{b_3t^{\epsilon/n}} \geq 1 + \frac{d_1 a_1}{4} e^{b_2
    t^{\epsilon/n} }t^\epsilon.$  We can clearly choose $b_2,c_2$ big enough to get the fourth
inequality, and  
this last quantity is clearly less than
$a_1 e^{b_2 t^{\epsilon/n}}t\abs{P}^\epsilon$ for
sufficiently small $t.$
\end{proof}


\section{The Large Partition Limit}

\begin{theorem} \label{th:kinf}
For sufficiently small, positive $t$ the pointwise limit
\be K^\infty(t)= \lim_{\abs{P} \to 0} K^{*P}
\ee{kinf-limit}
exists in $E_{x,y}$ and is
smooth in $t.$ In fact, for all partitions and any $0<\epsilon<1/2$ 
\be
\norm{K^\infty(t)- K^{*P}}_t = \OO\parens{t\abs{P}^\epsilon},
\ee{kinf-est}
and, in particular,
\be
\norm{K^\infty(t)- K(t)}_t = \OO\parens{t^{1+\epsilon}}.
\ee{kinf-kest}
Moreover, the limit $K^\infty(t)$ 
is the heat kernel for the Laplace-de Rham operator, satisfying 
\be
\frac{\partial}{\partial t} K^\infty = -\frac{1}{2} \Delta_\text{LdR} K^\infty 
\ee{kinf-heat}
and for every form $f\in \Gamma(\Lambda^* TM)$
\be
\lim_{t \to 0} K^\infty(t)*f = f. 
\ee{kinf-ident}  
\end{theorem}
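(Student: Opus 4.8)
The plan is to treat the four assertions in the order in which they logically depend on one another, leaning almost entirely on the Cauchy estimate Eq.~\eqref{eq:k*k-refine} of Proposition~\ref{pr:cauchy} for existence and on Proposition~\ref{pr:op-limit} for identifying the limit. First I would establish that the limit $K^\infty(t)$ in Eq.~\eqref{eq:kinf-limit} exists. Proposition~\ref{pr:cauchy} already provides, for every partition $P$ and every refinement $Q$, the bound $\norm{K^{*Q}-K^{*P}}_t = \OO(t)\abs{P}^\epsilon$; since the directed set of partitions under refinement has the property that any two partitions possess a common refinement, this makes $\{K^{*P}\}$ a Cauchy net in the norm $\norm{\,\cdot\,}_t$. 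Because $E_{x,y}$ (equivalently the fibers of $\overline{E}_{x,y}$ with the norm $\abs{\,\cdot\,}_t$) is finite-dimensional at each $(x,y)$ and the $\norm{\,\cdot\,}_t$-bound controls the kernel uniformly via Eq.~\eqref{eq:norm-def}, the net converges to a limit $K^\infty(t) \in E_{x,y}$. Taking $Q$ arbitrarily fine in Eq.~\eqref{eq:k*k-refine} and passing to the limit yields Eq.~\eqref{eq:kinf-est} directly; specializing $P=(t)$ and using $\abs{P}=t$ gives Eq.~\eqref{eq:kinf-kest}.

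Next I would identify $K^\infty(t)$ with the heat kernel. The key observation is that Proposition~\ref{pr:op-limit} already shows, for any piecewise-continuous initial form $f_0$, that $\lim_{\abs{P}\to 0}K^{*P}*f_0$ is the unique solution of $\partial f/\partial t = -\tfrac12\Delta_\text{LdR}f$ with $f(0)=f_0$. Thus $K^\infty(t)*f_0$ solves the heat equation for every $f_0$, which forces $K^\infty(t)$ to be the heat kernel; Eq.~\eqref{eq:kinf-heat} and the initial condition Eq.~\eqref{eq:kinf-ident} follow. The care needed here is to reconcile the two convergence statements: Proposition~\ref{pr:op-limit} asserts convergence of $K^{*P}*f_0$ in the sup norm $\norm{\,\cdot\,}_\infty$ on forms, whereas Eq.~\eqref{eq:kinf-limit} asserts convergence of the kernels themselves in $\norm{\,\cdot\,}_t$. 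I would argue that $\norm{\,\cdot\,}_t$-convergence of kernels implies convergence of the induced operators applied to any fixed smooth $f_0$ (integrating the kernel bound against $f_0$, using Eq.~\eqref{eq:norm-def} and the integrability of $H(t)+Dt$), so the two limits agree and the limit operator is the heat semigroup.

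Finally, smoothness of $K^\infty(t)$ in $t$. I would obtain this from Eq.~\eqref{eq:kinf-heat} together with standard parabolic regularity: the heat kernel of $\tfrac12\Delta_\text{LdR}$ on a compact manifold is smooth in $(x,y,t)$ for $t>0$, and $K^\infty$ has been identified with it. Alternatively, one can argue directly that $\partial K^\infty/\partial t$ exists and equals $-\tfrac12\Delta_\text{LdR}K^\infty$ by differentiating the semigroup relation and bootstrapping, but invoking the identification with the genuine heat kernel is cleanest.

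The main obstacle I anticipate is the reconciliation step in the middle paragraph. Proposition~\ref{pr:op-limit} and the Cauchy property are stated in genuinely different topologies (operator sup-norm on forms versus the kernel norm $\norm{\,\cdot\,}_t$ tailored to track rank and the Gaussian $H$), and one must check that $\norm{\,\cdot\,}_t$-smallness of a kernel error genuinely controls the sup-norm of the corresponding operator error on a fixed test form. This is exactly the content of bounds like Eq.~\eqref{eq:norm-def} combined with Eq.~\eqref{eq:hint-bd}: an error kernel bounded by $\norm{E}_t[H(t)+Dt]$ produces an operator whose action on bounded $f_0$ is $\OO(\norm{E}_t)$ after integrating $H$ and the constant term over $M$. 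Once that translation is pinned down, everything else is bookkeeping with estimates already proved.
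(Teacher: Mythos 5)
Your proposal is correct and follows essentially the same route as the paper: use the Cauchy property of Proposition~\ref{pr:cauchy} together with common refinements (the paper phrases this via a sequence $P_n$ of refinements rather than a Cauchy net, but the argument is identical) and the equivalence of $\norm{\,\cdot\,}_t$ with the sup norm for fixed $t$ to get existence and Eqs.~\eqref{eq:kinf-est}--\eqref{eq:kinf-kest}, then invoke Proposition~\ref{pr:op-limit} to identify the limit with the heat kernel and conclude smoothness by elliptic/parabolic regularity. Your explicit reconciliation of the kernel-norm convergence with operator convergence on test forms is a step the paper leaves implicit, but it is the same proof.
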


\begin{proof}
For a fixed $t,$ The expression $H(t) + Dt$ used to define the
$t$-norm in Prop.~\ref{pr:norm} is bounded above and below, so  the
$t$-norm $\norm{\,\cdot\,}_t$ is bounded above and below by a multiple
of the
supremum norm. Therefore, fixing some sequence $P_n$ of refinements
with $\lim_{n \to \infty} \abs{P_n}=0$ the sequence
$K^{*P_n}$ is Cauchy by Eq.~\eqref{eq:k*k-refine} and
thus converges to some $K^\infty(t)$ continuous in $t$ and the
suppressed $x$ and $y.$   In particular
\[\norm{K^\infty(t) - K^{*P_n}}_t= \OO(t)
\abs{P_n}^\epsilon.\]
 If $P$ is any other partition of $t$ then for
each $P_n$ with $\abs{P_n}\leq \abs{P}$ there is a $P'$ which is a common refinement of $P$ and
$P_n,$ and therefore by   Eq.~\eqref{eq:k*k-refine} again all the
quantities $K^{*P'},$ $K^{*P},$
$K^{*P_n},$ and $K^\infty(t)$ differ by
$\OO(t)\abs{P}^\epsilon,$ proving Eqs.~\eqref{eq:kinf-limit},
\eqref{eq:kinf-est}, and \eqref{eq:kinf-kest}.

If $f(x, \psi_x)$ is any smooth form on $M,$ there exists for small
enough $t>0$ a smooth form $f(x,\psi_x;t)$ satisfying the heat
equation 
\[\frac{\partial}{\partial t} f = -\frac{1}{2} \Delta_\text{LdR} f\]
and 
\[ f(x,\psi_x;0)=f(x,\psi_x).\]

By Eqs.~\eqref{eq:op-limit} and~\eqref{eq:kinf-est}
\[K^\infty(t)*f=f(t)\]
proving Eq.~\eqref{eq:kinf-ident}. Thus as a distribution $K^\infty$ is a
solution to Eq.~\eqref{eq:kinf-heat}, and therefore, since the
Laplace-de Rham heat equation is elliptic, by elliptic regularity (\cite{Evans98})
 it is in
fact smooth in $x,$ $y,$ and $t$ and satisfies Eq.~\eqref{eq:kinf-heat}.
\end{proof}

\begin{corollary}   \label{cr:kinf}
In fact $K^\infty$ is defined for all positive $t$ by
\be
K^\infty(t)= \lim_{\abs{P} \to 0} \sbrace{K(t_i)}^{*P},
\ee{kinf-lim2}
and is  the heat kernel
satisfying Eqs.~\eqref{eq:kinf-heat} and~\eqref{eq:kinf-ident}.  The
above limit converges
pointwise and, for fixed $t,$ uniformly.
\end{corollary}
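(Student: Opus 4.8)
The plan is to bootstrap the small-time result of Theorem~\ref{th:kinf} to all $t>0$ using the genuine heat semigroup, and then to promote convergence in operator norm to \emph{uniform} convergence of kernels by sandwiching with smoothing operators. Throughout, let $G(t)$ denote the true heat kernel of $\tfrac12\Delta_{\text{LdR}}$ on the compact manifold $M$: standard elliptic theory gives that $G$ is smooth in $(x,y)$ for each $t>0$, obeys the semigroup law $G(t_1)*G(t_2)=G(t_1+t_2)$, depends continuously on $t>0$, and is bounded in the $C^0{\to}C^0$ operator norm $\norm{\,\cdot\,}_{\mathrm{op}}$ by $\norm{G(\tau)}_{\mathrm{op}}\le e^{C\tau}$. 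In the small-time regime of Theorem~\ref{th:kinf} we already have $K^\infty(t)=G(t)$; moreover, since $\abs{\,\overline{L}\,}_t\le\norm{L}_t\sbrace{H(t)+Dt}$ dominates the fiberwise Euclidean norm, Eq.~\eqref{eq:kinf-kest} together with Eq.~\eqref{eq:hint-bd} yields the operator bound $\norm{K(t_k)-G(t_k)}_{\mathrm{op}}=\OO(t_k^{1+\epsilon})$ whenever $t_k$ is small.

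The first step is operator-norm convergence for \emph{every} fixed $t>0$. For a partition $P=(t_1,\dots,t_m)$ of $t$ I would compare $K^{*P}$ against $G(t_1)*\cdots*G(t_m)$, which equals $G(t)$ by the semigroup law for the \emph{same} partition, via the telescoping identity $K^{*P}-G(t)=\sum_{k=1}^m A_k\,\sbrace{K(t_k)-G(t_k)}\,B_k$ with $A_k=K(t_1)*\cdots*K(t_{k-1})$ and $B_k=G(t_{k+1})*\cdots*G(t_m)$. The factors $A_k,B_k$ have operator norm at most $e^{Ct}$ (from the bound $\norm{K(\tau)}_{\mathrm{op}}\le e^{C\tau}$ used in the proof of Proposition~\ref{pr:op-limit} and the boundedness of $G$), and each bracket is $\OO(t_k^{1+\epsilon})$ once $\abs{P}$ is small enough that every $t_k$ lies in the small-time regime. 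Summing and using $\sum_k t_k=t$ gives $\norm{K^{*P}-G(t)}_{\mathrm{op}}=\OO(t\abs{P}^\epsilon)$. Crucially this is valid for all $t$, since no small-time \emph{norm} estimate on the full interval is invoked---only on the individual blocks---so the obstruction that forced ``sufficiently small $t$'' in Proposition~\ref{pr:cauchy} is sidestepped.

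The second step converts this to uniform kernel convergence. Fix $t>0$ and a small $s$ with $2s<t$. For a fine $P$ I would \emph{regroup} (not refine) it as $P=P_aP_bP_c$, where $P_a$ is the longest initial block with total $s_a\le s$, $P_c$ the longest final block with total $s_c\le s$, and $P_b$ the middle block with total $u=t-s_a-s_c$; as $\abs{P}\to0$ one has $s_a,s_c\to s$ and $u\to t-2s$, and $K^{*P}=K^{*P_a}*K^{*P_b}*K^{*P_c}$ exactly, with no breakpoint mismatch. By Theorem~\ref{th:kinf} the outer factors converge uniformly as kernels to $G(s_a)\to G(s)$ and $G(s_c)\to G(s)$, while by the first step the middle factor converges to $G(u)\to G(t-2s)$ in operator norm. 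The enabling observation is that sandwiching an operator-norm-convergent family between two smooth, uniformly kernel-convergent factors produces uniform kernel convergence of the product: peeling one factor at a time, the kernel of $K^{*P_a}*K^{*P_b}*K^{*P_c}$ at $(x,z)$ is $\int K^{*P_a}(x,y)\,\bigl(K^{*P_b}K^{*P_c}(\cdot,z)\bigr)(y)\,dy$, and operator-norm smallness of the middle factor passes to a uniform bound on kernels because $\sup_x\int\abs{G(s)(x,y)}\,dy$ and $\sup_z\norm{G(s)(\cdot,z)}_\infty$ are bounded on the compact $M$. Since $G(s_a)*G(u)*G(s_c)=G(t)$ by the semigroup law, this gives $K^{*P}\to G(t)$ uniformly in $(x,z,\psi_x,\psi_z)$, which is the asserted pointwise and (for fixed $t$) uniform convergence; the limit being the heat kernel $G(t)$, Eqs.~\eqref{eq:kinf-heat} and~\eqref{eq:kinf-ident} follow as in Theorem~\ref{th:kinf}.

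The hard part will be exactly this last conversion, and it dictates the architecture of the proof. The telescoping delivers only operator-norm convergence, which by itself says nothing about the kernels pointwise; it is the regrouping into $P_aP_bP_c$---a genuine identity for the given $P$ rather than an insertion of new breakpoints---together with the smoothing estimate for $G(s)$ that recovers uniform control of the kernels. The only analytic input beyond the results already established is the elementary boundedness, smoothness, and continuity-in-time of the genuine heat kernel $G$.
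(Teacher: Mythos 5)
Your proposal is correct in outline, and it takes a genuinely different route from the paper's. The paper never introduces the true heat kernel as an external object: it first extends $K^\infty$ to all $t$ by the semigroup property, then writes any fine partition $Q$ as $Q_1\cdots Q_m$ where each $Q_i$ refines a \emph{macroscopic} block $t_i\in[T/2,T)$, converts the small-time estimate Eq.~\eqref{eq:kinf-est} on each block into a sup-norm kernel bound (possible precisely because $t_i$ is bounded above and below, so $\norm{\,\cdot\,}_{t_i}$ is commensurate with the sup norm), and finally runs a single induction on the number of blocks using the decomposition $\abs{AB-A'B'}\le\abs{(A-A')B'}+\abs{A'(B-B')}+\abs{(A-A')(B-B')}$ together with the parabolic bound $\norm{K^\infty(t)*f}_\infty\le e^{ct}\norm{f}_\infty$, obtaining $\norm{K^{*Q}-K^\infty(t)}_\infty\le e^{at}\abs{Q}^\epsilon$ in one pass. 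You instead import the global heat kernel $G$ from standard theory and split the work in two: a microscopic telescoping over the individual slices $t_k$, each bracket controlled by Eq.~\eqref{eq:kinf-kest} converted into a $C^0\to C^0$ operator bound via Eq.~\eqref{eq:hint-bd}, gives operator-norm convergence for \emph{every} $t$ with no induction at all; the three-block regrouping $P_aP_bP_c$ plus the smoothing sandwich then upgrades this to uniform kernel convergence, with kernel-level control needed only on the two outer macroscopic blocks. Your route buys modularity and a clean all-$t$ operator statement (and, as you note, it sidesteps the smallness restriction of Proposition~\ref{pr:cauchy} by using small-time information only slicewise); the paper's route buys self-containedness, constructing the all-time kernel rather than assuming its existence, and delivers the explicit uniform rate $\abs{Q}^\epsilon$ directly, though your argument also yields an $\OO\parens{\abs{P}^\epsilon}$ rate if the constants are tracked. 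The details you leave implicit --- that the fiberwise norm is dominated by $\abs{\,\cdot\,}_t$ for bounded $t$, and bounds such as $\sup_z\int\abs{G(y,z;s)}\,dy<\infty$ or simply $\vol(M)<\infty$ to pass sup-norm smallness through the convolutions on the correct side --- are routine and at the same level of rigor the paper itself adopts.
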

\begin{proof}
Suppose Theorem~\ref{th:kinf} applies for all positive $t$ less than
some $T>0.$ For such $t,$ $K^\infty$ satisifies the
semigroup property $K^\infty(t_1)*K^\infty(t_2)=K^\infty(t)$ for
$t=t_1+t_2$, since the two kernels give the same operator on
forms. Extending the definition of $K^\infty(t)$ for all
positive $t$ to agree with
$(K^\infty)^{*P}$ for $P$ any partition of $t$ fine enough
that $\abs{P}<T$ is therefore well-defined.  This extended $K^\infty(t)$ still satisfies
Eqs.~\eqref{eq:kinf-heat} and~\eqref{eq:kinf-ident}.

To see that this extended $K^\infty(t)$ satisfies
Eq.~\eqref{eq:kinf-lim2}, consider any partition $Q$ with
$\abs{Q}<T/2,$ so that
it can be written as a refinement of a partition $P$ where each each
$t_i$ satisfies $T/2 \leq t_i <T.$  That is, $Q=Q_1\cdots Q_m$, where
$Q_i$ is a partition of $t_i$. Eq.~\eqref{eq:kinf-est} and
$\abs{Q_i} \leq \abs{Q}$, guarantee
\[\norm{K^{*Q_i} - K^\infty(t_i)}_{t_i} = \OO\parens{t_i}
\abs{Q}^\epsilon.\]
Since $t_i$ is bounded above and below  $\norm{\,\cdot
  \,}_{t_i}$ is commensurate with  the sup norm,
so there is a $b>0$ such that 
\[\norm{K^{*Q_i} - K^\infty(t_i)}_\infty \leq bt_i\abs{Q}^\epsilon.\]
Also, because $K^\infty$ is the kernel of an elliptic operator there
is a $c>0$ such that $\norm{K^\infty(t)*f}_\infty \leq e^{ct}
\norm{f}_\infty.$   Induction on $m$ will prove
$\norm{K^{*Q}-K^\infty(t)}_\infty \leq e^{at}\abs{Q}^\epsilon.$
For the induction step,  apply the decomposition
\[\abs{AB-A'B'}\leq \abs{(A-A')B'} + \abs{A'(B-B')} +
\abs{(A-A')(B-B')}\]
to get 
\begin{align*}
\norm{K^{*Q} - K^\infty(t)}_\infty & =
\norm{K^{*Q1\cdots Q_{m-1}} * K^{*Q_m} -
  K^\infty(t-t_m) * K^\infty(t_m)}_\infty\\
& \leq \sbrace{ e^{a(t-t_m)+ ct_m} + e^{c(t-t_m)} bt_m + \vol(M)e^{a(t-t_m)} b t_m\abs{Q}^\epsilon}\abs{Q}^\epsilon\\
& \leq e^{a(t-t_m)}\sbrace{e^{ ct_m} +  b_1 t_m}\abs{Q}^\epsilon\\
& \leq e^{at}\abs{Q}^\epsilon
\end{align*}
assuming in the second inequality that $a\geq c$ and choosing $b_1\geq
b+ \vol(M)bT^\epsilon$ 
and in the third assuming that
$a \geq c+b_1.$  Eq.~\eqref{eq:kinf-lim2} follows.
\end{proof}
\begin{remark}
In \cite{FS08} the authors prove a weaker result for a kernel $K^{mq}$
similar
to $K$. There the result was that the limit taken along {\em some sequence} of
refinements of $P$ exists and agrees pointwise with
the heat kernel. The stronger result depends on modifying the
approximating kernel to agree with the heat kernel (as a distribution)
to higher order in $t$. Both approximate kernels are consistent with
time-slicing the imaginary-time path integral, as they amount to different
interpolations between the fixed end-values.
\end{remark}
\begin{remark}  The choice of $K$ as the approximate kernel, and the
  basis for a discrete
 approximation to the path integral, is, as noted in the introduction,
 subject to some 
 ambiguity in how the path integral specifies a discrete
 approximation.  It is natural to ask how the results of
 Theorem~\ref{th:kinf} and Corollary~\ref{cr:kinf} depend on the
 exact choice of $K.$  Most straightforwardly, adding terms of the
 form $\OO\parens{t^2}H$ or $\OO\parens{\abs{\vec{y}_x}^3}H,$ and using  the standard norm in place of $\abs{\,\cdot\,}_t$, it is
 easy to follow along with the argument and see it all goes through unchanged. This
 would not be sufficient to prove the Gauss-Bonnet-Chern Theorem, but
 in most other respects seems as powerful a result.  More subtly,
 adding to $K$ a term of the form $H \sbrace{\parens{\vec{y}_x, A \vec{y}_x} - t \tr
 A},$ where $A$ is section of $\End\parens{T_xM},$ does not change
the limit, although the above argument in itself would not suffice and
the convergence is slower.  In particular Eq.~\eqref{eq:kinf-kest}
would no longer hold (even with a modified norm).  Thus while $K=H
e^{t\scalar/6}$, which would appear to be
the simplest choice, will  converge in the large partition limit to the
heat kernel for the Laplace-Beltrami operator, only $K=He^{t\scalar/12 +
 t\Ricci\parens{\vec{y}_x, \vec{y}_x}/12}$ offers an analogue of
Eq.~\eqref{eq:heat-est} and hence
Eqs.~\eqref{eq:kinf-est} and~\eqref{eq:kinf-kest}.
\end{remark}

\section{The Gauss-Bonnet-Chern Theorem} \label{sc:gbc}

\begin{theorem} \label{th:gbc}
\be
\sbrace{\lim_{t\to 0}  K^\infty(x,x,\psi_x,\psi_x;t)}_\mathrm{top}
= \parens{ 2\pi}^{-n/2} \Pfaff(R)
\ee{gbc}

\end{theorem}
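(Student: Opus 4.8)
The plan is to reduce the statement, which concerns the genuine heat kernel $K^\infty$, to the explicit approximate kernel $K$ of Eq.~\eqref{eq:K-def}, and then to evaluate the latter by a finite-dimensional Berezin integral. The entire purpose of the $\epsilon$-calibrated norm is that the supertrace of the error $K^\infty-K$ vanishes in the short-time limit, and this is precisely the content packaged into Lemma~\ref{lm:str-est}; so once Eq.~\eqref{eq:kinf-kest} is invoked, the reduction is nearly immediate, and the only genuinely computational task is the diagonal evaluation of $K$.

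First I would restrict to the diagonal $x=y$ and recall from Eq.~\eqref{eq:str} that the coefficient of the top monomial in $F(x,x,\psi,\psi;t)$ — that is, $\sbrace{F(x,x,\psi,\psi;t)}_{\mathrm{top}}$ read as a top form — is exactly $\str F(x,x;t)$ times the volume element. Thus it suffices to show $\lim_{t\to 0}\str K^\infty(x,x;t)=(2\pi)^{-n/2}\Pfaff(R)$. By Eq.~\eqref{eq:kinf-kest}, $\norm{K^\infty(t)-K(t)}_t=\OO(t^{1+\epsilon})$, and by the defining inequality Eq.~\eqref{eq:norm-def} together with $H(x,x;t)=(2\pi t)^{-n/2}$ from Eq.~\eqref{eq:h-def}, the fiberwise norm on the diagonal satisfies $\abs{\overline{K^\infty-K}}_t=\OO(t^{1+\epsilon-n/2})$. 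Applying Lemma~\ref{lm:str-est} with $p=-n/2$ yields $\str\parens{K^\infty-K}(x,x;t)=\OO(t^{2\epsilon/n})\to 0$, so the limit for $K^\infty$, once known to exist, equals that for $K$.

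It then remains to compute $\lim_{t\to 0}\str K(x,x;t)$ directly. On the diagonal $\vec{x}_y=0$ and $\pt^x_y$ is the identity, so the linear term $i\bracket{\rho,\pt^x_y\psi_x-\psi_y}$ in Eq.~\eqref{eq:K-def} vanishes, the $\Ricci$ term vanishes, and the two curvature terms coincide, leaving
\[
\str K(x,x;t)=(2\pi t)^{-n/2}\,e^{-t\scalar/6}\bint\bint e^{\frac{t}{4}\parens{\rho,R\sbrace{\psi,\psi}\rho}}\,d\rho\,d\psi.
\]
A nonzero Berezin integral requires degree $n$ in each of $\rho$ and $\psi$, so only the term with $n/2$ factors in the expansion of the exponential survives, contributing a factor $t^{n/2}$ that cancels exactly against the $t^{-n/2}$ from $H$. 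As $t\to 0$ the scalar prefactor tends to $1$, leaving the $t$-independent quantity $(2\pi)^{-n/2}\tfrac{1}{(n/2)!\,4^{n/2}}\bint\bint\parens{\rho,R\sbrace{\psi,\psi}\rho}^{n/2}\,d\rho\,d\psi$.

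Finally I would identify this double Berezin integral with the Pfaffian of the curvature, using the standard Mathai--Quillen Gaussian representation of the Euler form; the constants $\tfrac{1}{(n/2)!}$ and $4^{-n/2}$ are precisely those that make the integral evaluate to $\Pfaff(R)$, delivering the claimed $(2\pi)^{-n/2}\Pfaff(R)$. The conceptual heart of the argument is the first reduction — the norm was engineered so that the top-degree error is controlled by a power of $t$ exceeding $n/2$ — but that labor is already discharged by Lemma~\ref{lm:str-est}. The step most likely to hide a sign or factor-of-two slip is the last one: bookkeeping the cancellation of $t$-powers in the diagonal integral and matching the normalization constants against the chosen curvature conventions in $\Pfaff(R)$.
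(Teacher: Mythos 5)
Your proposal is correct and follows essentially the same route as the paper: reduce via Eq.~\eqref{eq:kinf-kest} and Lemma~\ref{lm:str-est} to showing the supertrace of the error $K^\infty-K$ vanishes as $t\to 0$, then evaluate the diagonal Berezin integral of $K$ to get the Pfaffian. The only cosmetic difference is that you apply Lemma~\ref{lm:str-est} once with $p=-n/2$ after absorbing $H(x,x;t)=(2\pi t)^{-n/2}$ into the fiberwise bound, whereas the paper splits the error as $F_1H+F_2t$ and applies the lemma with $p=0$ to each piece; these are equivalent bookkeeping.
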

\begin{remark}
McKean and Singer \cite{MS67} prove the integral of the matrix
supertrace of the heat kernel, i.e. $\int \bint
K^\infty(x,x,\psi_x,\psi_x;t) d\psi_x\, dx,$ 
is equal to the signed sum of the Betti numbers, independent of $t.$ Thus,
the theorem implies the Gauss-Bonnet-Chern
theorem, thereby completing a rigorous version of the path-integral
proof of the latter. In fact, it implies a local version of the
Gauss-Bonnet-Chern theorem: 
\[
\lim_{t\to 0} \str K^\infty(x,x;t) = \parens{ 2\pi}^{-n/2} 
\Pfaff(R)(x).
\]
\end{remark}
\begin{proof}
Theorem~\ref{th:kinf}, in particular
Eq.~\eqref{eq:kinf-kest}, estimates the heat kernel $K^\infty$ 
as the approximation $K$ plus an error term whose $t$-norm is
$\OO(t^{1+\epsilon})$. That is, $K^\infty(t)- K(t) = F_1H + F_2t$, where $\abs{F_i}_t=\OO\parens{t^{1+\epsilon}},$ so
the supertrace of the error term is bounded by
\[\str{F_1} \parens{2\pi t}^{-n/2} + \str{F_2}t.\]
Lemma~\ref{lm:str-est} implies
$\str{F_i}=\OO\parens{t^{n/2+2\epsilon/n}} $ so the supertrace of the
error term is bounded by $\OO\parens{t^{2\epsilon/n}}$ and therefore
goes to zero with $t$.  On the other hand by Eq.~\eqref{eq:K-def}
\begin{align*}
\lim_{t\to 0} \bint K(x,x,\psi,\psi;t) &= \lim_{t\to 0} \parens{2 \pi t}^{-n/2}
\bint \exp\sbrace{
-\frac{t\scalar }{6} 
+ 
\frac{t}{4} \parens{\rho, R\sbrace{\psi, \psi} \rho} 
}d \rho\\
&= \parens{2 \pi}^{-n/2} \bint e^{\frac{1}{4}\parens{\rho, R\sbrace{\psi,
      \psi} \rho} } d\rho =\parens{2 \pi}^{-n/2} \Pfaff\parens{\frac{1}{2}R\sbrace{\psi,
    \psi}} \\
& = \parens{2 \pi}^{-n/2} \Pfaff\parens{R}.
\end{align*}
\end{proof}
\section{Some conclusions} \label{sc:concl}
The heuristic path integral argument leading to the Gauss-Bonnet-Chern
and other index theorems has two key premises: first, the path
integral for  supersymmetric quantum mechanics, taken over free loops,
gives the supertrace of the Laplace-de Rahm heat kernel; second,
steepest descent applies to the path integral to give the Pfaffian of
the curvature as its small-$t$ limit. Proposition~\ref{pr:heat-est}
and Eq.~\ref{eq:kinf-kest} of Theorem~\ref{th:kinf}
give the precise sense in which the kernel $K$ approximates the heat
kernel for small $t$. 
The product $K^{*P}$ thus
represents the approximate imaginary-time path integral for paths with
fixed endpoints, in accord 
with Feynman's time-slicing approach to combining a choice of
partition $P$ with the small-$t$
approximation. Corollary~\ref{cr:kinf} to Theorem~\ref{th:kinf}
then proves, that, with this precise definition, the path integral with fixed endpoints agrees with the
heat kernel. This immediately leads to the first premise of the heuristic
argument. For the second, Eq.~\ref{eq:kinf-kest} proves the
steepest-descent approximation applies to this definition of the path
integral, and Theorem~\ref{th:gbc} shows this approximation agrees
with the Pfaffian in the small-$t$ limit. 

In short, $\lim_{\abs{P} \to
  0}K^{*P}$ provides a rigorous definition for the path integral in
which the path integral argument for GBC carries over directly. As
such, it may serve as a template for rigorous definition path
integrals for such immediate generalizations as $N=1/2$ supersymmetric
quantum mechanics, which the authors believe requires only minor
changes to the definition of $K$ and should give a path integral proof
of the Atiyah-Singer index theorem. More generally, it is not
unreasonable to hope this approach will generalize to a wide range of
cohomological field theories, because each of these depend, at least
heuristically, on the same localization to the steepest-descent (or
stationary-phase) approximation.

Looked at another way, the heuristic arguments rely on an unstated
assumption about the time-slicing prescription for the path integral;
namely, the order of taking the small-$t$ and 
small-$\abs{P}$ limits does not matter. The application of
Lemma~\ref{lm:str-est} in the proof of Theorem~\ref{th:gbc}
shows, in conjunction with the Cauchy property of
Proposition~\ref{pr:cauchy},  that, even for the path integral
corresponding to based loops, the small-$t$ limit of $K^\infty$  
agrees with the small-$\abs{P}$ limit of $K^{*P}(0)$. 

In their seminal work on the heat kernel McKean and Singer \cite{MS67}
referred to ``fantastic cancellations''
which must occur amongst lower-order terms in the asymptotic
expansion for the heat kernel so that its supertrace
agrees with that of the Pfaffian of the curvature, in keeping with the
Gauss-Bonnet-Chern theorem. In the present construction of the path integral, according to
Eq.~\ref{eq:kinf-kest}, the asymptotic expansion for the 
heat kernel begins with terms which on the diagonal are
$\OO(t^{-n/2})$, yet the supertrace is $\OO(1)$. This happens
because, first, the approximating kernel $K$ already has
this property, and second, the $t$-norm was devised so that the error
terms, which are $\OO_t(t^{1+\epsilon})$, have supertrace in
$\OO(t^{\epsilon/n})$. 
\begin{acknowledgements} 
 The first author would like to thank his home
  institution for granting him a sabbatical leave, and the Department of
  Mathematics at M.I.T. for kindly hosting him as a visitor.  The
  second author would like to thank Lisa Sawin for project management advice.
\end{acknowledgements}

\def\cprime{$'$}

\end{document}